\newcommand{\e}{\mbox{\rm evenlevel}}
\renewcommand{\o}{\mbox{\rm oddlevel}}
\newcommand{\mn}{\mbox{\rm minlevel}}
\newcommand{\mx}{\mbox{\rm maxlevel}}
\renewcommand{\t}{\mbox{\rm tenacity}}
\newcommand{\bt}{\mbox{$\CB_{b,t}$}}
\newcommand{\base}{\mbox{\rm base}}
\newcommand{\CB}{\mbox{${\mathcal B}$}}
\newcommand{\lv}{\mbox{\rm level}}
\newcommand{\bd}{\mbox{\rm bud}}
\newcommand{\bds}{\mbox{${\rm bud^*}$}}
\newcommand{\suppress}[1]{}
\newcommand{\la}{\leftarrow}
\newcommand{\Z}{\mbox{\rm\bf Z}}
\newcommand{\Zplus}{\Z^+}
\def\fnum@figure{{\bf Figure \thefigure}}
\def\fnum@table{{\bf Table \thetable}}
\long\def\@mycaption#1[#2]#3{\addcontentsline{\csname
 ext@#1\endcsname}{#1}{\protect\numberline{\csname
  the#1\endcsname}{\ignorespaces #2}}\par
     \begingroup
       \@parboxrestore
          \small
       \@makecaption{\csname fnum@#1\endcsname}{\ignorespaces
#3\endgroup}
      }
\def\Remark{\par\noindent{\bf Remark}:\enspace}
\newcommand{\PP}{\mbox{\rm \bf P}}
\newcommand{\SP}{\mbox{\rm \bf \# P}}
\begin{document}

\title{A Simplification of the MV Matching Algorithm and its Proof}

\author{
{\Large\em Vijay V. Vazirani}\thanks{Supported by NSF Grants CCF-0914732 and CCF-1216019, and a Guggenheim Fellowship.
College of Computing, Georgia Institute of Technology, Atlanta, GA 30332--0280,
E-mail: {\sf vazirani@cc.gatech.edu}. 
}
}

\maketitle

\begin{abstract}
For all practical purposes, the Micali-Vazirani \cite{MV} general graph maximum matching algorithm is still the most efficient known algorithm
for the problem. The purpose of this paper is to provide a complete proof of correctness of the algorithm
in the simplest possible terms; graph-theoretic machinery developed for this purpose also helps simplify the algorithm.
\end{abstract}

\section{Introduction}
\label{sec.intro}

For all practical purposes, the Micali-Vazirani \cite{MV} general graph maximum matching algorithm is still the most efficient known algorithm
for the problem; see Section \ref{sec.running} for precise details. The process of proving correctness of this algorithm was started in \cite{va.matching};
however, as detailed in Section \ref{sec.overview}, the paper had deficiencies. The purpose of the current paper is to provide a complete proof 
of this algorithm in the simplest possible terms; graph-theoretic machinery developed for this purpose also helps simplify the algorithm.

Matching occupies a central place in the theory of algorithms as detailed below. So it is quite unfortunate that non-bipartite matching is
viewed today as being ``complicated'' when in reality is it elegant, and of course, immensely useful in numerous situations.
The fact that many of the algorithmic papers dealing with this topic leave a lot to be
desired in terms of correctness, completeness and clarity, is no doubt a major reason for this perception -- on the flip side, it is important to
mention that non-bipartite matching is a difficult topic and getting it right first time is not easy. In this paper, we have
pointed out these shortcomings in a dispassionate manner in order to lead to clarity and remove misconceptions, see Section
\ref{sec.running} and \ref{sec.overview}. This paper is fully self-contained so as to be suitable for pedagogical and archival purposes.

From the viewpoint of efficient algorithms, bipartite and non-bipartite matching problems are qualitatively different. First consider the process of
finding a maximum matching by repeatedly finding augmenting paths. Whereas the in former case, all alternating paths from an unmatched vertex 
$f$ to a matched vertex $v$ must have the same parity, even or odd, in the latter they can be of both parities. Edmonds defined the key 
notion of {\em blossoms} and finessed this difficulty in non-bipartite graphs by ``shrinking'' blossoms. 

The most efficient known maximum matching algorithms in both bipartite and non-bipartite graphs resort to finding minimum length augmenting 
paths  w.r.t. the current matching. However, from this perspective, the difference between the two classes of graphs becomes even more pronounced.
Unlike the bipartite case, in non-bipartite graphs minimum length alternating paths do not possess an elementary property, called 
breadth first search honesty\footnote{Intuitively, it states that in order to find shortest alternating paths from an unmatched vertex $f$ to all other
vertices, there is never a need to find a longer path to any vertex $v$.} in Section \ref{sec.tenacity}.
Indeed, in the face of this debilitating shortcoming, the problem of finding minimum length alternating paths appears to be intractable. It is 
a testament to the remarkable structural properties of matching that despite this, a near-linear time algorithm is possible.

Edmonds' blossoms are not adequate for the task of  finding minimum length augmenting paths since by ``shrinking'' them, length information 
is completely lost. What is needed is a definition of blossoms from the perspective of minimum length alternating paths, as given in \cite{va.matching}
and simplified in the current paper. This requires a substantial graph-theoretic development.  Indeed, once the reader is sufficiently comfortable 
with this structure, they will be able to appreciate how the algorithm harmoniously blends into it, thereby leading to a high level, conceptual picture. 
It is precisely this vantage point that enabled us to simplify the algorithm.

Matching has had a long and distinguished history spanning more than a century. The following quote from Lovasz and Plummer's
classic book \cite{LP.book} is most revealing:
\begin{quote}
Matching theory serves as an archetypal example of how a ``well-solvable'' problem can be studied. ... [It] is a central part of 
graph theory, not only because of its applications, but also because it is the source of important ideas developed during the rapid
growth of combinatorics during the last several decades. 
\end{quote}
Interestingly enough, matching has played an equally central role in the development of the theory of algorithms -- time and again, its
study has not only yielded powerful tools that have benefited other problems but also quintessential paradigms for the entire field. Examples
of the latter include the primal-dual paradigm \cite{Kuh55}, the definitions of the classes $\PP$ \cite{Edm.paths} and $\SP$ \cite{Val.permanent}, 
and the equivalence of random generation and approximate counting for self-reducible problems \cite{count.JVV}. Examples of the former include 
the notion of an augmenting path \cite{Konig,Egervary}, a method for determining the defining inequalities of the convex hull of
solutions to a combinatorial problem \cite{Ed.poly}, the canonical paths argument for showing expansion of the underlying graph of 
a Markov chain \cite{JS89}, and the Isolating Lemma \cite{MVV}. And at the interface of algorithms and game theory lies another highly 
influential matching algorithm: the Nobel Prize winning stable matching algorithm of Gale and Shapley \cite{GaleShapley}.

\subsection{Running time and related papers}
\label{sec.running}

The paper \cite{MV} had claimed a running time of $O(m \sqrt{n})$, on the pointer model, for finding a maximum matching 
in general graphs. However, this was based on an unproven claim that a certain datastructure task could be accomplished in linear time; see
Section \ref{sec.proof} for details. A very recent result \cite{PV} shows that the avenue suggested in \cite{MV} for proving this claim 
will not work. 

The current status is that the MV algorithm achieves a running time of $O(m \sqrt{n} \cdot \alpha(m, n))$ on the pointer model (using Tarjan's set 
union algorithm \cite{Tarjan}), where $\alpha$ is the inverse Ackerman function, and $O(m \sqrt{n})$ on the RAM model (using Gabow and 
Tarjan's linear time algorithm for a special case of set union \cite{GTarjan}); observe that \cite{GTarjan} appeared after \cite{MV}.
Since \cite{GTarjan} does not give a detailed explanation of why their idea is applicable to the MV algorithm, a new paper clarifying this
has been recently written by Gabow \cite{Gabow}.

We note that small theoretical improvements to the running time, for the case of
very dense graphs, have been given in recent years: $O(m \sqrt{n} {{\log (n^2/m)} / {\log n}})$ \cite{GKarzanov}  
and $O(n^w)$ \cite{Mucha}, where $w$ is the best exponent of $n$ for multiplication of
two $n \times n$ matrices. The former improves on MV for $m = n^{2 -o(1)}$ and the latter for $m = \omega(n^{1.85})$; additionally,
the latter algorithm involves a large multiplicative constant in its running time due to the use of fast matrix multiplication.

Over the years,
\cite{Blum} and \cite{GTarjan2} have claimed algorithms having the same running time as MV, and we need to clairfy the status of these
works. The first paper is simply wrong. It claims, without giving any details, that a certain procudure, called MBFS, runs in 
linear time; however, the requirements on this procedure are such that even a polynomial time implementation is not clear.
The second paper gives an efficient scaling algorithm for finding a minimum weight matching in a general graph with integral edge
weights. It claims that the unit weight version of their algorithm achieves the same running time as MV, but no details are provided
and no proof is offered. The question that arises in the reader's mind is why does it make sense to solve cardinality matching by reducing it to 
the harder problem of weighted matching -- no high level reason is given for this either.
The rest of the history of matching algorithms is very well documented and will not be repeated here, e.g., see \cite{LP.book,va.matching}.

\subsection{Overview and contributions of this paper}
\label{sec.overview}

To point out the contributions of this paper, we compare it to \cite{MV} and \cite{va.matching}.
\cite{MV} stated the matching algorithm in pseudocode -- this description is complete and correct. 
However, the paper did not provide a proof of correctness and running time.
The MV algorithm is quite elaborate and its pseudocode is extensive. For this reason, this was not an adequate way of expounding the algorithm.

The key to a conceptual description, as well as a proof of correctness, lay in formalizing the graph-theoretic structure underlying the
algorithm. This process was started in \cite{va.matching}; however, even this attempt turned out to be inadequate.
In retrospect, the proof and description of algorithm given in that paper had several shortcomings.
The central definition, that of blossoms, from the perspective of minimum length alternating paths, was too cumbersome.
As a consequence, some key facts, such as Theorems 5 and 6 in \cite{va.matching} were proven using complicated case analyses,
which don't even seem correct any more. A recent caerful reading revealed that several of the other proofs also have errors, even though the 
high level facts stated in the paper are by and large correct. The description of the algorithm, using these structural facts,
was also too cumbersome.

The process that was started in \cite{va.matching} has been brought to its logical conclusion in the current paper. A much simpler, recursive
definition of blossoms is presented in Section \ref{sec.blossom}. Blossoms, and their nesting structure, impose a strict regimen on how minimum 
length alternating paths traverse through them. These facts, an outcome of the new definition of blossoms, are established via conceptual proofs 
in Section \ref{sec.blossom}. These facts go to proving Theorem \ref{thm.bridge}, which is crucial for proving correctness of the algorithm; 
its full significance is discussed in Section \ref{sec.discussion}.

The MV algorithm involves two main ideas: a new search procedure called double depth first search (DDFS) and the precise synchronization of 
events. The former is described in Section \ref{sec.DDFS} and the latter in Section \ref{sec.alg}. The potential of finding other applications for
DDFS, as well as exploring variants and generalizations, remains inexplored so far. To facilitate wide dissemination, we 
have made Section \ref{sec.DDFS} fully self-contained. In this section, DDFS has been described in the simplified setting of a layered, 
directed graph and, unlike \cite{va.matching}, without resorting to any pseudocode. 
The description of the main algorithm, in Section \ref{sec.alg}, is given at a much higher level, e.g., without using low level notions 
such as ``anomaly edges'', as was done in \cite{va.matching}. 

Sections \ref{sec.tenacity} and  \ref{sec.base} give some basic structural properties of minimum length alternating paths which are needed for defining blossoms. 
Especially worth mentioning is Theorem \ref{thm.base}, which leads to the central notion of base of a vertex. This notion captures, in a simple manner, the essential 
aspect of what would have otherwise been an enormously complicated picture;
see the diverse-looking examples given in Theorem \ref{thm.base} to illustrate the various cases that can arise. 

Section \ref{sec.proof} ties up all the facts and completes the proof. Finally, equivalence of the two definitions of blossoms is established 
in Section \ref{sec.equivalence}. 
It is difficult to overemphasize the importance of well-chosen examples for understanding this result; indeed, 
most of the intuition lies in them and we have included several. Furthermore, they have been drawn in such a way that they easily reveal 
their structural properties (this involves drawing vertices in layers, according to their minlevel).

To the readers who are interested in quickly understanding the algorithm, we recommend that they dovetail appropriately between Sections \ref{sec.DDFS} 
and \ref{sec.alg}, and the rest of the paper; in particular, they will need definitions stated in Sections \ref{sec.tenacity}, \ref{sec.base}, \ref{sec.blossom} 
and \ref{sec.bridge}. Knowing the statements of theorems will also help; however, their proofs are not essential for this purpose.

\section{The tenacity of vertices and edges}
\label{sec.tenacity}

The MV algorithm finds augmenting paths in phases; in each {\em phase}, it finds a maximal set of disjoint minimum length
augmenting paths w.r.t. the current matching and it augments along all paths. It is easy to show that only $O(\sqrt{n})$ such phases
suffice for finding a maximum matching \cite{Karp,Karzanov}. The remaining task is designing an efficient algorithm for a phase. 
We embark on this task below.

All definitions are with respect to a matching $M$ in graph $G = (V, E)$. We will assume that $G$ has at least one unmatched vertex.
Throughout, $l_m$ will denote the length of a minimum length augmenting path in $G$; if $G$ has no augmenting paths, we will assume
that $l_m = \infty$.

\definition{(Evenlevel and oddlevel of vertices)}
The evenlevel (oddlevel) of a vertex $v$, denoted $\e(v)$ ($\o(v)$), is defined to be the length of a
minimum even (odd) length alternating path from an unmatched vertex to $v$; moreover, each such path will be called
an $\e(v)$ ($\o(v)$) path. If there is no such path, $\e(v)$ ($\o(v)$) is defined to be $\infty$. 

In all the figures, matched edges are drawn dotted, unmatched edges solid, and unmatched vertices are drawn with a small circle.
In Figure \ref{fig.vten}, evenlevels and oddlevels of vertices are indicated; missing levels are $\infty$.

\definition{(Maxlevel and minlevel of vertices)}
For a vertex $v$ such that at least one of $\e(v)$ and $\o(v)$ is finite, $\mx(v)$ ($\mn(v)$) is defined to be the bigger (smaller) of the two.

\begin{figure}[ht]
\begin{minipage}[b]{0.5\linewidth}
\centering
\includegraphics[width=\textwidth]{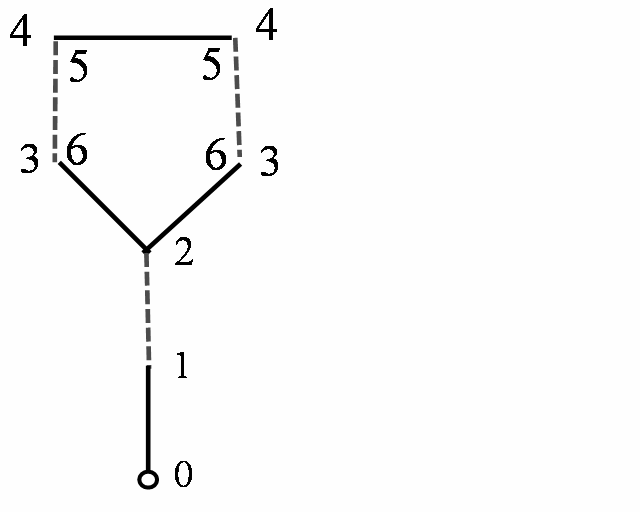}
\caption{
Evenlevels and oddlevels of vertices are indicated; missing levels are $\infty$.}
\label{fig.vten}
\end{minipage}
\hspace{0.5cm}
\begin{minipage}[b]{0.5\linewidth}
\centering
\includegraphics[width=\textwidth]{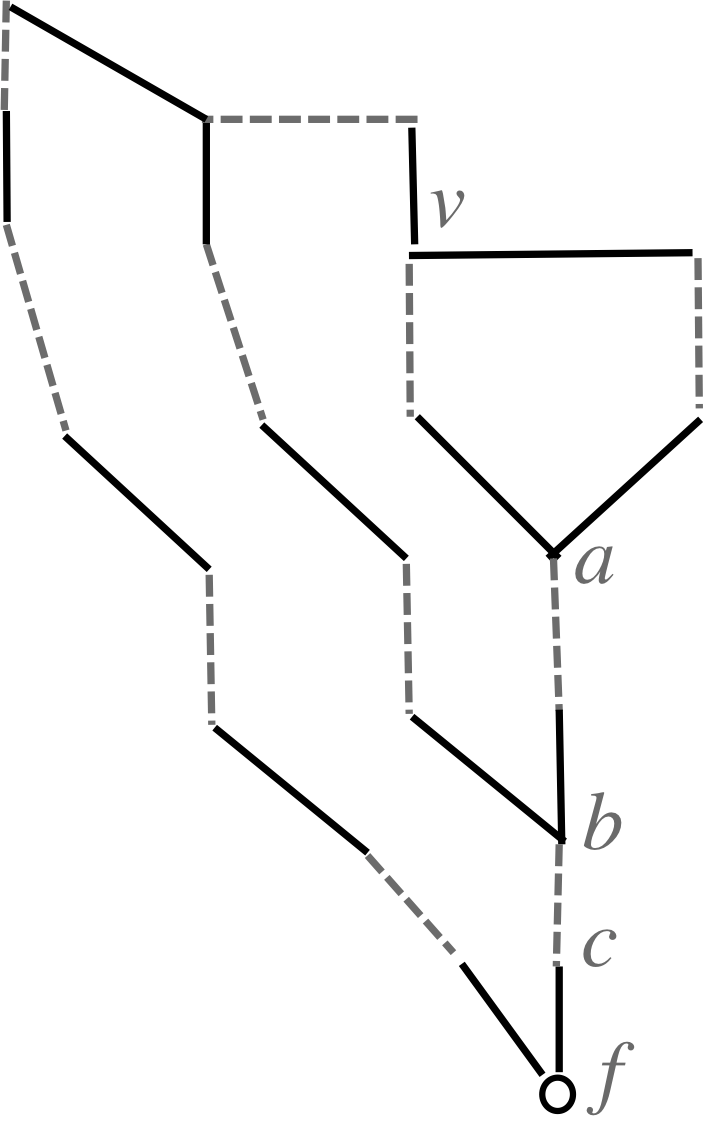}
\caption{Vertex $v$ is not BFS-honest on $\o(a)$ and $\e(c)$ paths.}
\label{fig.BFSH}
\end{minipage}
\end{figure}

In bipartite graphs, a vertex $v$ can have either an even or an odd length alternating path from an unmatched vertex $f$,
not both. Furthermore, minimum length alternating paths are {\em breadth first search honest} in the sense that
if $p$ is a minimum alternating path from $f$ to $v$ and $u$ lies on this path then 
$p[f \ \mbox{to} \ u]$\footnote{This denotes the part of $p$ from $f$ to $u$. Similarly 
$p[f \ \mbox{to} \ u)$ denotes the part of $p$ from $f$ to the vertex just before $u$, etc.} is a minimum alternating path from $f$ to 
$u$\footnote{Consequently, an alternating BFS suffices for executing a phase in bipartite graphs, e.g., see Section 2.1 in \cite{va.matching}.}.
This elementary property does not hold in non-bipartite graphs, e.g., if $v$ lies on the minimum alternating path from $f$ to $u$,
but with the opposite parity. This is precisely the reason that the task of finding minimum length augmenting paths is
considerably more difficult in non-bipartite graphs than in bipartite graphs.

In Figure \ref{fig.BFSH}, $v$ is not BFS-honest on $\o(a)$, $\o(b)$ and $\e(c)$ paths; it occurs at length 9 on the first and at
length 11 on the other two, even though $\o(v) = 7$. Thus shortest paths to $a$, $b$ and $c$, of appropriate parity, involve longer and 
longer odd length paths to $v$. Furthermore, this is not just an academic exercise: Suppose this graph had another edge $(c, d)$, where
$d$ is a new unmatched vertex. Then the only augmenting path uses the $\e(c)$ path. Whereas short paths are easy to find in a graph, finding
long paths is intractable, e.g. Hamiltonian path. Hence, at first sight, the problem of finding minimum length augmenting paths, which may
involve paths to certain vertices that are much longer than the shortest path, appears to be intractable in general graphs. As stated
in the Introduction, despite this, the remarkable combinatorial structure of matching allows for a very efficient algorithm.

\definition{(Tenacity of vertices and edges)}
\label{ref.tenacity}
Define the tenacity of vertex $v$, $\t(v) = \e(v) + \o(v)$.
If $(u, v)$ is an unmatched edge, its tenacity, $\t(u, v) = \e(u) + \e(v) + 1$, and if it is matched,
$\t(u, v) = \o(u) + \o(v) + 1$.

In Figure \ref{fig.vten}, the tenacity of each edge in the 5-cycle is 9 and the tenacity of the rest of the edges is $\infty$.
Figures \ref{fig.verten} and \ref{fig.edgeten} give the tenacity of vertices and edges, respectively, in a more interesting graph.


\begin{figure}[ht]
\begin{minipage}[b]{0.5\linewidth}
\centering
\includegraphics[width=\textwidth]{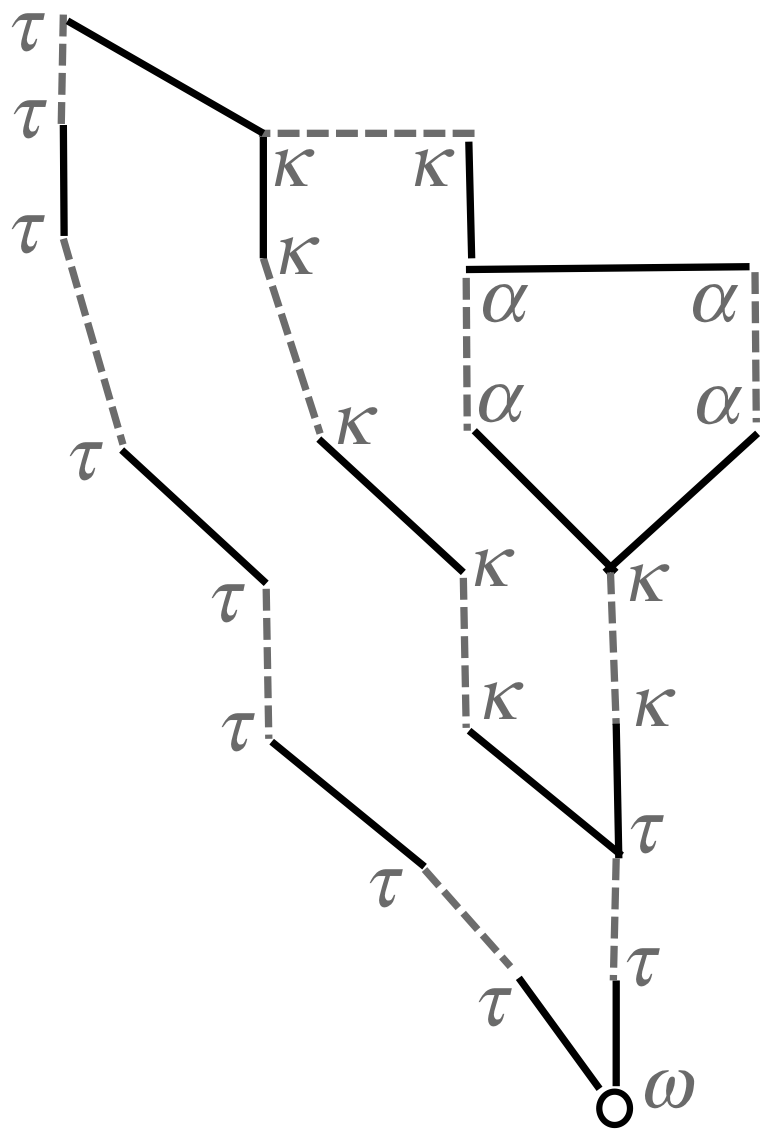}
\caption{The tenacity of vertices is indicated; here $\alpha = 13, \ \kappa = 15$, $\tau = 17$ and $\omega = \infty$.}
\label{fig.verten}
\end{minipage}
\hspace{0.5cm}
\begin{minipage}[b]{0.5\linewidth}
\centering
\includegraphics[width=\textwidth]{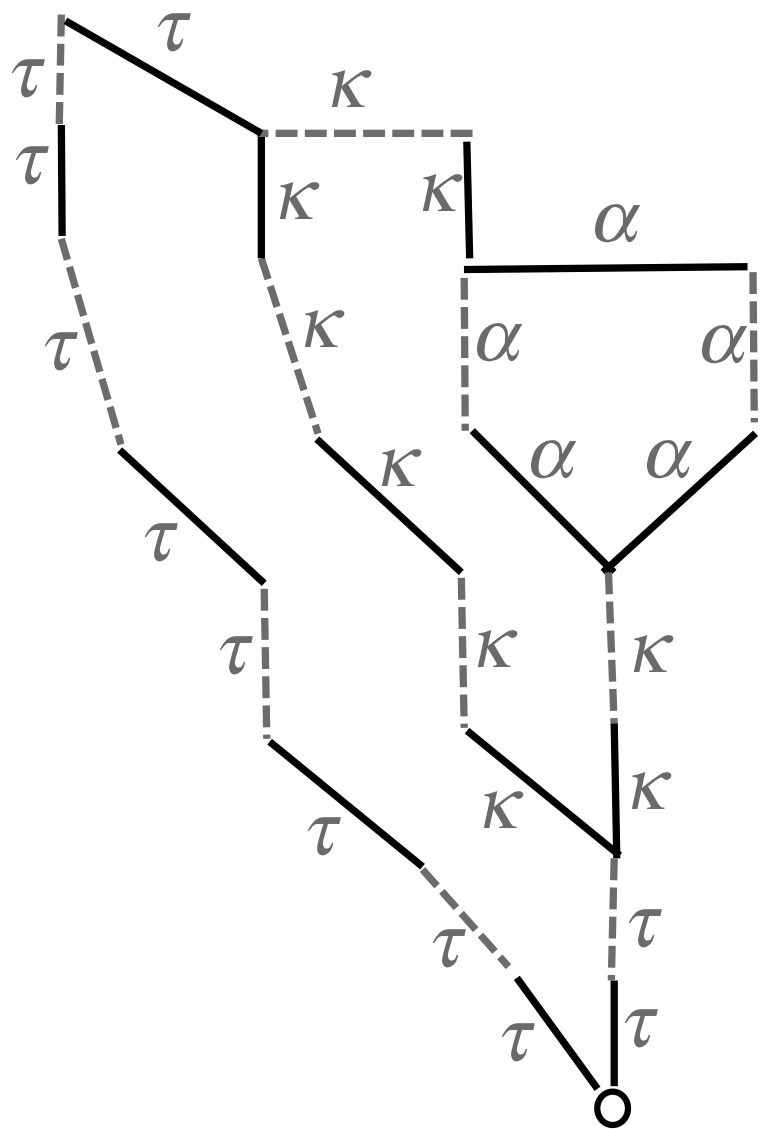}
\caption{The tenacity of each edge is indicated; here $\alpha = 13, \ \kappa = 15$ and $\tau = 17$.}
\label{fig.edgeten}
\end{minipage}
\end{figure}


\begin{lemma}
\label{lem.t-matched}
If $(u, v)$ is a matched edge, then $\t(u) = \t(v) = \t(u, v)$.
\end{lemma}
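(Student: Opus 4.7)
The plan is to prove the two identities $\e(u)=\o(v)+1$ and $\e(v)=\o(u)+1$; once we have these, both equalities $\t(u)=\t(v)=\t(u,v)$ follow immediately by summing, using the definitions $\t(u)=\e(u)+\o(u)$, $\t(v)=\e(v)+\o(v)$ and $\t(u,v)=\o(u)+\o(v)+1$.

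First I would prove $\e(u)\le\o(v)+1$. Take a minimum alternating path $p$ from an unmatched vertex $f$ to $v$ of length $\o(v)$. Since $p$ starts with an unmatched edge (as $f$ is unmatched) and has odd length, it arrives at $v$ along an unmatched edge, so the matched edge $(u,v)$ is not used by $p$. I then claim that $u$ does not lie on $p$ at all: the vertex $u$ cannot equal $f$ (it is matched), nor can it equal $v$; and if $u$ were an internal vertex of $p$ then, since $p$ is alternating, one of the two edges of $p$ incident to $u$ would have to be the matched edge incident to $u$, which is $(u,v)$ — contradicting the fact that $(u,v)\notin p$. Hence we may extend $p$ by the matched edge $(v,u)$ to obtain an alternating path of even length $\o(v)+1$ from $f$ to $u$, giving $\e(u)\le\o(v)+1$.

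Next I would prove the reverse inequality $\e(u)\ge\o(v)+1$. Let $q$ be a minimum alternating path from an unmatched vertex to $u$ of length $\e(u)$. Because $q$ begins with an unmatched edge and has even length, its last edge is matched; the unique matched edge at $u$ is $(u,v)$, so $q$ must terminate with $(v,u)$. Deleting this last edge yields an odd alternating path to $v$ of length $\e(u)-1$, and therefore $\o(v)\le\e(u)-1$. Combined with the previous step, this gives $\e(u)=\o(v)+1$. The identity $\e(v)=\o(u)+1$ then follows by the identical argument with the roles of $u$ and $v$ exchanged.

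Finally I would dispose of the infinite cases for completeness. If $\o(v)=\infty$, the extension argument in step~2 shows $\e(u)$ cannot be finite (else that extension would produce a finite $\o(v)$-path), so $\e(u)=\infty$ as well; conversely the deletion argument in step~3 shows that if $\e(u)=\infty$ then $\o(v)=\infty$, so the convention $\infty+1=\infty$ keeps $\e(u)=\o(v)+1$ valid. The analogous remark handles $\e(v)$ and $\o(u)$. I do not expect any real obstacle here; the only subtle point is the parity-plus-matching argument showing $u\notin p$ in step~2, which is where one must use both that $(u,v)$ is matched and that $p$ arrives at $v$ along an unmatched edge.
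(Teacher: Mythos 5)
Your proposal is correct and follows essentially the same route as the paper: the paper's (one-line) proof simply asserts the two identities $\e(u)=\o(v)+1$ and $\e(v)=\o(u)+1$ and sums them, and you supply the same identities with the detailed (and valid) path-extension/truncation arguments, including the parity argument that $u\notin p$.
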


\begin{proof}
If $(u, v)$ is a matched edge, $\e(v) = \o(u) + 1$ and $\e(u) = \o(v) + 1$. The lemma follows.
\end{proof}

As a result of Lemma \ref{lem.t-matched}, in several proofs below it will suffice to restrict attention to only 
one of the end points of a matched edge.

\definition{(Limited BFS-honesty)}
Let $p$ be an $\e(v)$ or $\o(v)$ path starting at unmatched vertex $f$ and let $u$ lie on $p$.
Then $|p[f \ \mbox{to} \ u]|$ will denote the length of this path from $f$ to $u$, and if it is even (odd) we will say
that $u$ is {\em even (odd) w.r.t. p}. 
We will say that $u$ is BFS-honest w.r.t. $p$ if 
$|p[f \ \mbox{to} \ u]| = \e(u) \ (\o(u))$ if $u$ is even (odd) w.r.t. $p$.

The usefulness of tenacity is established in Theorem \ref{thm.honest}, which shows limited BFS-honesty even in the
non-bipartite case. 


Observe that in the graph of Figures \ref{fig.BFSH} and \ref{fig.verten}, the vertices $a, \ b$ and $c$ are BFS-honest on
all evenlevel and oddlevel paths to the vertices of tenacity $\alpha$. (However, the vertices of tenacity $\alpha$ are not
BFS-honest on $\o(a)$ and $\o(b)$ paths.)

\begin{theorem}
\label{thm.honest}
Let $p$ be an $\e(v)$ or $\o(v)$ path and let vertex $u \in p$ with $\t(u) \geq \t(v)$. Then $u$ is BFS-honest w.r.t. $p$.
Furthermore, if $\t(u) > \t(v)$ then $|p[f \ \mbox{to} \ u]| = \mn(u)$.
\end{theorem}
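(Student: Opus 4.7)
The plan is to pin down $|p[f \to u]|$ by constructing a short alternating walk of the opposite parity into $u$, using an $\o(v)$-path and the reverse of $p[u \to v]$. By symmetry, I treat the case where $p$ is an $\e(v)$-path and $u$ sits at even depth on $p$; the three other parity combinations (even $u$ on an $\o(v)$-path, odd $u$ on an $\e(v)$-path, odd $u$ on an $\o(v)$-path) follow from analogous constructions with the roles of $\e$ and $\o$ interchanged.

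I would first read off the edge parities at the splice. Since $|p|=\e(v)$ is even, the terminal edge of $p$ is matched; since $|p[f \to u]|$ is even, the edge of $p$ immediately after $u$ is unmatched. Hence $p^R[v \to u]$ starts at $v$ with a matched edge and ends at $u$ with an unmatched edge. Picking any $\o(v)$-path $r$ from an unmatched vertex $f''$, the terminal edge of $r$ at $v$ is unmatched (as $\o(v)$ is odd), so $r$ and $p^R[v \to u]$ splice correctly at $v$. The concatenation $W := r \cdot p^R[v \to u]$ is then an alternating walk from $f''$ to $u$ of odd parity and length
\[
|W| \;=\; \o(v) + \bigl(\e(v) - |p[f \to u]|\bigr) \;=\; \t(v) - |p[f \to u]|.
\]
Granting that $W$ witnesses an alternating \emph{path} of odd parity to $u$ of length $\leq |W|$, one has $\o(u) \leq \t(v) - |p[f \to u]|$; combining with $\t(u) \geq \t(v)$ and the trivial $|p[f \to u]| \geq \e(u)$ gives the sandwich
\[
\e(u) \;\leq\; |p[f \to u]| \;\leq\; \t(v) - \o(u) \;\leq\; \t(u) - \o(u) \;=\; \e(u),
\]
forcing $|p[f \to u]| = \e(u)$, i.e., BFS-honesty. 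For the \emph{furthermore} clause, when $\t(u) > \t(v)$ the parallel even-depth sandwich forces $\e(u) \leq \o(u)$ and the symmetric odd-depth sandwich forces $\o(u) \leq \e(u)$ in the respective cases, so the depth at which $u$ can actually sit on $p$ is exactly the $\mn$-depth and $|p[f \to u]| = \mn(u)$.

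The main obstacle is the walk-to-path extraction: converting $W$ into a simple alternating path of the same parity without increasing length. I would handle this by strong induction on $\t(v)$. At any vertex $w$ shared between $r$ and $p^R[v \to u]$, a parity count (using that $|r|$ is odd and $|p[u \to v]|$ is even) shows that the two occurrences of $w$ in $W$ carry the same parity exactly when $w$ appears at opposite parities on $r$ and $p$; in this favorable situation, a direct splice at $w$ preserves alternation and strictly shortens $W$. The unfavorable configuration---same parity on both paths, producing an odd-length closed alternating sub-walk at $w$---is blossom-like and would be ruled out by applying the inductive hypothesis at the strictly smaller tenacity $\t(w) < \t(v)$ that would have to hold at such a $w$. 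Iterating the admissible splices terminates at a simple alternating path of odd parity into $u$, delivering the bound $\o(u) \leq |W|$ that drives both conclusions.
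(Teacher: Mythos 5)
Your argument hinges on the claim that the walk $W = r \circ p^{R}[v \ \mbox{to} \ u]$ can be converted into a \emph{simple} alternating path of the same (odd) parity and length at most $|W| = \t(v) - |p[f \ \mbox{to} \ u]|$, giving $\o(u) \leq \t(v) - |p[f \ \mbox{to} \ u]|$. This step is not merely a technical obstacle left to an induction; it is false precisely in the regime the theorem is about. If it held, then combining it with $\e(u) \leq |p[f \ \mbox{to} \ u]|$ would give $\t(u) = \e(u)+\o(u) \leq \t(v)$, so your sandwich would force $\t(u) = \t(v)$ for \emph{every} vertex $u$ on $p$ with $\t(u) \geq \t(v)$ -- contradicted already by $u = f$ (the unmatched start vertex) or by $u = \base(v)$, both of which have tenacity strictly greater than $\t(v)$ and certainly lie on $p$. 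So the extraction lemma you need cannot be true; this is exactly the failure of ``walks shortcut to paths'' that makes non-bipartite BFS-honesty break down in the first place. Your proposed inductive repair does not save it: the unfavorable configuration (both occurrences of a shared vertex $w$ at the same parity, producing an odd closed subwalk) does not force $\t(w) < \t(v)$. On the contrary, the typical culprits are vertices of tenacity \emph{greater} than $\t(v)$ -- e.g.\ the iterated bases of $v$, which lie on both the $\e(v)$ and $\o(v)$ paths at even positions -- so the inductive hypothesis at smaller tenacity is never applicable to them, and the splice genuinely cannot be performed there.

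The paper avoids this trap by never attempting a general walk-to-path conversion. It argues by contradiction: assuming $u$ is not BFS-honest, it takes a strictly shorter $\e(u)$ path $q$ and splices it with $p$ at the \emph{first} vertex $w$ of $q$ lying on $p(u \ \mbox{to} \ v]$ (and, in the case $\e(v)=\mx(v)$, similarly uses the first intersection of a $\mn(v)$ path with $p$). The first-intersection choice guarantees simplicity of the spliced path outright, and a short parity case analysis then yields either an even path to $v$ shorter than $\e(v)$, contradicting minimality of $p$, or an odd path to $u$ short enough to force $\t(u) < \t(v)$, contradicting the hypothesis. If you want to salvage your approach, you would have to restructure it along these lines -- splice only at carefully chosen first intersections of genuinely minimum paths, and derive a contradiction from minimality -- rather than postulate a length-preserving, parity-preserving path extraction from an arbitrary alternating walk, which does not exist in non-bipartite graphs.
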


\begin{proof}
Assume w.l.o.g that $p$ is an $\e(v)$ path and that $u$ is even w.r.t. $p$ (by Lemma \ref{lem.t-matched}).
Suppose $u$ is not BFS-honest w.r.t. $p$, and let $q$ be an $\e(u)$ path, i.e., $|q| < |p[f \ \mbox{to} \ u]|$.
First consider the case that $\e(v) = \mx(v)$, and let $r$ be a $\mn(v)$ path. Let $u'$ be the matched neighbor of $u$.
Consider the first vertex of $r$ that lies on $p[u' \ \mbox{to} \ v]$. 
If this vertex is even w.r.t. $p$ then $\o(u) \leq |r| + |p[u \ \mbox{to} \ v]|$. Additionally, 
$\e(u) < |p[f \ \mbox{to} \ u]|$, hence $\t(u) < \t(v)$, leading to a contradiction. On the other hand, if this vertex is odd 
w.r.t. $p$ then $\mn(v) = |r| > \e(u)$, because otherwise there is a shorter even path from $f$ to $v$ than $\e(v)$. 
We combine the remaining argument along with the case that $\e(v) = \mn(v)$ below.

Consider the first vertex, say $w$, of $q$ that lies on $p(u \ \mbox{to} \ v]$ -- there must be such a vertex because otherwise
there is a shorter even path from $f$ to $v$ than $\e(v)$. If $w$ is odd w.r.t. $p$ then we get an even path to $v$
that is shorter than $\e(v)$. Hence $w$ must be even w.r.t. $p$. Then, $q[f \ \mbox{to} \ w] \circ p[w \ \mbox{to} \ u]$
is an odd path to $u$ with length less than $\e(v)$. Again we get $\t(u) < \t(v)$, leading to a contradiction. 
\end{proof}

\section{The base of a vertex}
\label{sec.base}

Let $v$ be vertex of tenacity $t < l_m$ and $p$ be an $\e(v)$ or $\o(v)$ path starting at unmatched vertex $f$. 
Define {\em the base of v w.r.t. p}, denoted $F(p, v)$, to be the vertex of tenacity $> t$ that is furtherest away from $f$ on $p$.
The main fact we will prove in this section is:

\begin{theorem}
\label{thm.base}
Let $v$ be vertex of tenacity $t < l_m$.
Then the set 
\[ B = \{F(p, v) ~|~ p \ \mbox{is an} \ \e(v) \ \mbox{or} \ \o(v) \ \mbox{path} \} \] 
is a singleton.
\end{theorem}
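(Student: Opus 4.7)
The first step is to verify existence of $F(p,v)$: every $\e(v)$ or $\o(v)$ path $p$ starts at some unmatched vertex $f$, and $\e(f)=0$, while any finite value of $\o(f)$ would correspond to an augmenting path (from another unmatched vertex to $f$) of length $\o(f) \geq l_m > t$. Hence $\t(f) > t$, so $f$ itself is always a candidate for the base, and the argmax defining $F(p,v)$ is taken over a nonempty set.

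For uniqueness, I would argue by contradiction. Suppose $\e(v)$ or $\o(v)$ paths $p_1, p_2$ yield bases $u_1 = F(p_1, v)$ and $u_2 = F(p_2, v)$ with $u_1 \neq u_2$. Since $\t(u_i) > \t(v) = t$, Theorem \ref{thm.honest} places each $u_i$ on $p_i$ at position exactly $\mn(u_i)$. Assume without loss of generality that $\mn(u_1) \leq \mn(u_2)$. If $u_2$ also lay on $p_1$, Theorem \ref{thm.honest} would put it on $p_1$ at position $\mn(u_2) \geq \mn(u_1)$; but $u_1$ is the \emph{furthest} tenacity-greater-than-$t$ vertex on $p_1$, so we would need $\mn(u_2) \leq \mn(u_1)$, forcing equality and hence $u_1 = u_2$, a contradiction. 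Therefore $u_2 \notin p_1$.

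With $u_2 \notin p_1$ in hand, I would derive the contradiction by splicing. Let $w$ be the first vertex on $p_2[u_2 \text{ to } v]$, traversed from $u_2$ toward $v$, that also lies on $p_1$; such a $w$ exists because $v \in p_1$, and $w \neq u_2$ since $u_2 \notin p_1$. I would then consider the two alternating walks obtained by splicing at $w$, namely $p_1[f \text{ to } w] \circ p_2[w \text{ to } v]$ and $p_2[f \text{ to } w] \circ p_1[w \text{ to } v]$. Each is a walk from an unmatched vertex to $v$; the choice of $w$ (together with Theorem \ref{thm.honest} controlling where $w$ sits on each $p_i$) prevents the subwalks from sharing any vertex other than $w$, so one obtains a genuine simple path. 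By comparing the length and parity of the appropriate splice with $\e(v)$ or $\o(v)$, one derives either an alternating path to $v$ that is strictly shorter than $\mn(v)$ or an alternating path of the opposite parity that is strictly shorter than the corresponding level, contradicting the definitions.

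The main obstacle is the parity bookkeeping at the splice point. For the spliced walk to be an alternating path one must check that the edges of $p_1$ and $p_2$ incident to $w$ on opposite sides of the splice have opposite type (matched versus unmatched). This amounts to tracking the parity of the position of $w$ on each $p_i$, which is controlled by whether $p_i$ is an $\e(v)$ or $\o(v)$ path and by the parity of $\mn(u_i)$. I anticipate a case split based on (i) whether $p_1$ and $p_2$ are $\e(v)$- or $\o(v)$-paths, and (ii) whether $w$ lies before or after $u_1$ on $p_1$; in each case, exactly one of the two splicings is alternating, and its length inequality supplies the required contradiction, yielding the singleton claim.
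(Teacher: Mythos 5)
Your setup (existence via $\t(f) \geq l_m > t$, and the observation that Theorem \ref{thm.honest} pins each base at position exactly its minlevel, so the two bases cannot both lie on $p_1$ unless they coincide) is fine and matches the spirit of the paper. The gap is in the splicing step, and it is not a matter of bookkeeping: the construction as described does not work. Choosing $w$ as the \emph{first} vertex of $p_2[u_2 \ \mbox{to} \ v]$ lying on $p_1$ only guarantees that $p_2[u_2 \ \mbox{to} \ w)$ avoids $p_1$; it says nothing about $p_2(w \ \mbox{to} \ v]$, nor about $p_2[f_2 \ \mbox{to} \ u_2]$, both of which can re-enter and leave $p_1$ many times. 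Hence neither spliced walk is guaranteed to be simple, and your claim that ``the choice of $w$ prevents the subwalks from sharing any vertex other than $w$'' is false in general. Moreover, even if one splice were a simple alternating path, no contradiction follows from its length: you would need $|p_2[w \ \mbox{to} \ v]| < |p_1[w \ \mbox{to} \ v]|$ or $|p_1[f \ \mbox{to} \ w]| < |p_2[f \ \mbox{to} \ w]|$ (or the symmetric inequalities), and none of these is available; the two splices only satisfy the useless identity that their lengths sum to $|p_1|+|p_2|$.

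The paper's proof has to work much harder precisely because of this. It does not derive a shorter path to $v$ from a single splice; instead, taking $b$ to be the highest tenacity-$>t$ vertex common to $p$ and $q$, it proves that \emph{every} vertex on $p(b \ \mbox{to} \ v]$ and $q(b \ \mbox{to} \ v]$ has tenacity at most $t$, which forces $F(p,v)=F(q,v)=b$. The hard part is exhibiting, for each intermediate even vertex $u$, an odd alternating path of length at most $|q| + |p[u \ \mbox{to} \ v]|$; this is where the segment/flower machinery (Lemmas \ref{lem.flower} and \ref{lem.covered}) and the classification into backward, forward and separator frontier edges enter, with a separate treatment of the separator case (Lemmas \ref{lem.pq} and \ref{lem.separator}), because $q$ can weave in and out of $p$ in arbitrarily complicated patterns. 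Your plan contains no substitute for this construction of the complementary-parity paths. Finally, note that the hypothesis $t < l_m$ is needed a second time, beyond existence: when $p_1$ and $p_2$ start at \emph{different} unmatched vertices, the paper's argument derives augmenting paths of length $< l_m$ to rule out bad intersections, and the Remark after the theorem (Figure \ref{fig.nobase}) shows the statement genuinely fails without this hypothesis, so any proof that invokes it only to certify $\t(f)>t$ should be viewed with suspicion.
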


We will need some definitions to prove a preliminary fact first. Let $p$ be an alternating path starting at unmatched vertex $f$.
Given two vertices $u$ and $w$ on $p$, if $u$ is further away from $f$ on $p$ than $w$ then we will say that $u$ is {\em higher} than $w$.
An alternating path that starts and ends at matched vertices, say $u$ and $v$, of $p$ and does not intersect any vertices of $p$ 
other than $u$ and $v$ is said to be a {\em segment on p}; 
clearly, a segment will be of odd length. Let $s$ be a segment on $p$ whose endpoints are vertices $u$ and $v$ on $p$. Then we will say that all 
vertices on $p[u \ \mbox{to} \ v]$ are {\em covered by $s$}. Let $S$ be a set of segments on $p$ that are vertex disjoint.
A vertex $v$ on $p$ is {\em covered by $S$} if it is covered by one of the segments in $S$. The set $S$ is said to form a {\em flower on $p$} if 
it satisfies the following recursive conditions:

\begin{enumerate}
\item
$S$ consists of a single segment, $s$, which starts and ends at even vertices of $p$.
\item
$S = S' \cup \{ s \}$, where $S'$ forms a flower on $p$ and one endpoint of segment $s$ is covered by $S'$ and the other is even w.r.t. $p$.
\item
$S = S' \cup S'' \cup \{ s \}$, where $S'$ and $S''$ form flowers on $p$ and one endpoint of segment $s$ is covered by $S'$ and the other 
by $S''$.
\end{enumerate}

In this section, we will denote the matched neighbor of matched vertex $u$ by $u'$; furthermore, if the edge $(u, u')$ lies on $p$,
we will assume that $u$ is even w.r.t. $p$ and $u'$ is odd w.r.t. $p$.
Let $S$ be a flower on $p$ and let $u$ and $w$ be the lowest and highest vertices of $p$ covered by $S$. Then $u$ and $w$ will be called the 
{\em base} and {\em tip} of the flower, respectively. Observe that $u$ and $w$ will both be even w.r.t. $p$.
The {\em length} of this flower will be the sum of lengths of all segments in 
$S$ plus $|p[u \ \mbox{to} \ w]|$. The proof of the next lemma follows via an easy induction based on the above-stated recursive
definition of flower:

\begin{lemma}
\label{lem.flower}
Let $S$ form a flower on $p$ with base $u$, and let $w$ be any vertex covered by $S$. 
Then there are odd and even alternating paths from $u$ to $w$, each of length at most the length of the flower.
\end{lemma}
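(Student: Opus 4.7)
The plan is to proceed by induction on the recursive structure used to build the flower $S$. This is natural because each recursive step adds exactly one segment, and the new paths from the base to a covered vertex can be obtained by extending, or detouring through, the paths guaranteed by the induction hypothesis applied to the sub-flowers.

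For the base case $S=\{s\}$ (Case 1), where $s$ joins the two vertices $u$ and $w$ that are both even w.r.t. $p$, I would exhibit two paths from $u$ to any vertex $v$ covered by $s$: the direct traversal $p[u \ \mbox{to} \ v]$, and the detour consisting of $s$ followed by $p[w \ \mbox{to} \ v]$ traversed back toward $u$. Since $|s|$ is odd while $|p[u \ \mbox{to} \ w]|$ is even, these two paths differ in length by an odd quantity, so one is even and the other odd. Both are alternating: every segment begins and ends with an unmatched edge (its endpoints are matched vertices of $p$ whose matched edges already lie on $p$), so at the junction at $w$ the last unmatched edge of $s$ is correctly followed by the matched $p$-edge heading back toward $u$. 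Both path lengths are at most $|s|+|p[u \ \mbox{to} \ w]|$, the flower length.

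For the inductive step, I would first identify the base $u$ of $S$ in terms of the sub-flower bases and the free endpoint $z$ of the new segment $s$: in Case 2 it is whichever of $z$ or the base of $S'$ lies lower on $p$; in Case 3 it is the lower of the two sub-flower bases. Then, for any covered $v$, I would split into sub-cases based on whether $v$ is covered by a sub-flower or only by $s$. In each sub-case I build the two paths from $u$ to $v$ by concatenating up to three ingredients: (i) a traversal from $u$ to the base of the relevant sub-flower along $p$ and/or through $s$; (ii) the even/odd pair supplied by the induction hypothesis inside that sub-flower; and (iii) an optional extra traversal of $s$, which flips parity because $|s|$ is odd. Since no ingredient is used more than once, the total length is bounded by the flower length $|s|+|S'|+|S''|+|p[u \ \mbox{to} \ \mbox{tip}]|$ (with $|S''|=0$ in Case 2).

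The main obstacle will be the case analysis in the inductive step. In Case 2 one must handle the two possible positions of $z$ (below the base of $S'$ or above its tip), and in Case 3 the different ways the two sub-flowers can be connected by $s$. For each configuration both paths must be explicitly exhibited and alternation must be checked at every junction. None of this is conceptually deep --- everything reduces to the fact that each segment is alternating of odd length, starting and ending with an unmatched edge --- but the bookkeeping is delicate. My strategy is to separate parity (via the parity of $|s|$), length (purely additive), and alternation (via segment structure) into three independent subsidiary arguments, so that the case analysis does not become tangled.
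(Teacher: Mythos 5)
Your proposal is correct and follows essentially the same route as the paper, which proves this lemma only by remarking that it ``follows via an easy induction based on the above-stated recursive definition of flower'' --- exactly the induction you carry out, with a correct base case. One refinement to make explicit in the inductive step: strengthen the induction hypothesis to record that both paths leave the base along an unmatched edge, that the even (odd) path ends with a matched (unmatched) edge, and that the paths stay inside the segments of $S$ and the stretch of $p$ between base and tip --- this is what actually lets you verify alternation and vertex-disjointness at the junctions where you splice an inherited sub-flower path with $s$ or with a piece of $p$, since the bare statement of the lemma gives no information about the end edges of those paths.
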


Let $q$ be an alternating path that intersects $p$, possibly at several places. Each odd length alternating subpath of $q$ that
starts and ends at $p$ will be called a {\em segment of $q$ on $p$}. 
The proof of the following lemma is straightforward.

\begin{lemma}
\label{lem.covered}
Let $q$ be an alternating path that intersects $p$ such that its first segment starts at even vertex $u$ and last segment ends
at even vertex $w$. Then at least one of $u$ and $w$ is covered by a flower formed by the segments of $q$ on $p$.
\end{lemma}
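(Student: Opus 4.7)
The plan is to proceed by induction on $t$, the number of segments of $q$ on $p$. For the base case $t=1$, the single segment $s_1$ runs from $u$ (even w.r.t.\ $p$) to $w$ (even); by clause~(1) of the flower definition, $\{s_1\}$ is then a flower covering both $u$ and $w$.

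For the inductive step ($t \geq 2$), label the segments along $q$ as $s_1,\dots,s_t$, with $s_i$ running from $a_i$ to $b_i$, so that $a_1=u$ and $b_t=w$. If $b_1$ is even w.r.t.\ $p$, then $\{s_1\}$ alone is a type-(1) flower covering $u$; by symmetry, if $a_t$ is even, $\{s_t\}$ covers $w$. Hence I may assume both $b_1$ and $a_t$ are odd. A parity observation then shows that $a_2$ is even: since $b_1$ is interior to $p$ its matched edge lies on $p$, so the last edge of $s_1$ at $b_1$ must be unmatched (as it lies off $p$); alternation then forces the next edge of $q$ after $s_1$ to be the matched edge of $b_1$ on $p$, and iterating this shows that $q$ walks monotonically upward along $p$ until it leaves at an even-position vertex, which becomes $a_2$. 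Apply the induction hypothesis to the subpath $q[a_2,w]$, which has $t-1$ segments with first starting at even $a_2$ and last ending at even $w$: this produces a flower $F_1 \subseteq \{s_2,\dots,s_t\}$ that covers either $a_2$ or $w$. If $F_1$ covers $w$, we are done.

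Otherwise, $F_1$ covers $a_2$ but not $w$. The plan is then to augment $F_1$ by appending $s_1$ via clause~(2) of the flower definition: since $u$ is even, this is valid provided $b_1$ is covered by $F_1$. The main obstacle is establishing this covering. The key structural input is that $q$ is a simple path: after the walk of $q$ from $b_1$ up to $a_2$ occupies every $p$-vertex between the positions of $b_1$ and $a_2$ on $p$, no subsequent walk of $q$ may cross the position of $b_1$. This forces each segment $s_j$ with $j \geq 2$ either to have both endpoints strictly ``above'' $a_2$ on $p$, or to include one endpoint strictly ``below'' $b_1$ on $p$---and a segment of the latter ``straddling'' type necessarily covers $b_1$. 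Unwinding the recursive construction of $F_1$ via clauses~(1)--(3), together with the hypothesis that $F_1$ does not cover $w$, forces at least one straddling segment to appear in $F_1$; then $F_1$ covers $b_1$, and so $F_1 \cup \{s_1\}$ is a flower covering $u$, completing the induction. A symmetric application of the induction hypothesis to $q[u,b_{t-1}]$, followed by attaching $s_t$, handles any residual corner case and guarantees coverage of $w$ in the alternative.
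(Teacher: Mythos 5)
Your reduction to the case where $b_1$ and $a_t$ are both odd, and the parity argument showing that $q$ then climbs along $p$ from $b_1$ and exits at an even vertex $a_2$ strictly above $b_1$, are fine (modulo the tacit assumption that $b_1$ is not the top endpoint of $p$). The proof breaks at the ``unwinding'' step. Your dichotomy is already wrong: a segment $s_j$, $j\ge 2$, may have \emph{both} endpoints strictly below $b_1$, in which case it has ``an endpoint strictly below $b_1$'' yet covers nothing in $p[b_1 \ \mbox{to} \ a_2]$; and $s_2$ has an endpoint exactly at $a_2$. More importantly, the claim ``$F_1$ covers $a_2$ but not $w$ $\Rightarrow$ $F_1$ contains a straddling segment, hence covers $b_1$'' is false. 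The covered set of a flower is an interval of $p$ whose lower end is an endpoint of one of its segments, and that lower end can be $a_2$ itself (namely when $s_2\in F_1$ and $b_2$ lies above $a_2$); then $F_1$ covers nothing below $a_2$, in particular not $b_1$, and $F_1\cup\{s_1\}$ is not a flower since clause (2) needs the odd endpoint $b_1$ to be covered. Concretely, number the vertices of $p$ by distance from $f$ and take $s_1$ from $u=2$ to $b_1=5$, the on-$p$ walk $5,6,7,8$, $s_2$ from $a_2=8$ to $b_2=14$, the walk $14,13$, and $s_3$ from $13$ to $w=18$; this is a legitimate alternating $q$ with $t=3$, and the induction hypothesis may legitimately hand you $F_1=\{s_2\}$, which covers $a_2$ but neither $w$, nor $b_1$, nor any straddling segment. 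Worse, in this instance \emph{no} flower at all covers $u$ (no flower can contain $s_1$, since $5$ is covered by nothing, and no other segment's interval contains $2$), so the goal of your ``otherwise'' branch --- extend $F_1$ by $s_1$ to cover $u$ --- is unattainable; the lemma holds there only because $\{s_2,s_3\}$ is a flower covering $w$, which your argument never produces.

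So the inductive statement, as you invoke it, is too weak: when the flower returned for $q[a_2 \ \mbox{to end}]$ covers $a_2$ but its covered interval does not dip below $b_1$, you must be able to fall back on constructing a flower covering $w$, and nothing in your write-up does that. The closing sentence about a ``symmetric application'' to $q[u \ \mbox{to} \ b_{t-1}]$ is not a repair: the lemma's hypothesis holds for that subpath only if $b_{t-1}$ is even, and when it does hold the mirror image of the same gap appears (the flower covering $b_{t-1}$ need not cover $a_t$, precisely when $s_{t-1}$ lies in it with $a_{t-1}$ above $b_{t-1}$). A correct proof needs either a strengthened induction hypothesis that controls where the covered interval of the produced flower ends, or a different construction of the flower altogether; note the paper itself gives no proof (it calls the lemma straightforward), so there is no argument there to lean on.
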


Let $v$ be a vertex of tenacity $t < l_m$ and let $p$ and $q$ be $\e(v)$ and $\o(v)$ paths, respectively, starting at unmatched vertex $f$. 
Consider vertices of tenacity $> t$ that appear on both $p$ and $q$; by Lemma \ref{thm.honest}, each such vertex must be BFS-honest w.r.t. 
both $p$ and $q$. Since $t < l_m$, $f$ is one such vertex\footnote{This is precisely the reason for assuming $t < l_m$ in this lemma and 
beyond; see also the Remark after the proof of Theorem \ref{thm.base}.}. 
Among these vertices, let $b$ be the highest. We remark that it is straightforward to show that neither 
$p(b \ \mbox{to} \ v]$ nor $q(b \ \mbox{to} \ v]$ can intersect $p[f \ \mbox{to} \ b]$ -- otherwise either $b$ will have tenacity 
at most $t$ or there will be a shorter path to $v$.

A matched edge $(w, w')$ that lies on both $p(b \ \mbox{to} \ v]$ and $q(b \ \mbox{to} \ v]$ is said to be a {\em common edge}.
If both these paths traverse this edge in the same direction then we will say that $(w, w')$ is a {\em forward edge} and otherwise
it is a {\em backward edge}. A common edge $(w, w')$ is said to be a {\em separator} if the graph consisting of the vertices and edges of 
$p[b \ \mbox{to} \ v] \cup q[b \ \mbox{to} \ v]$ gets disconnected by the removal of this edge; clearly, such an edge must be a forward edge.

We next define three types of frontier edges. Let $(w, w')$ be a separator edge and let $(w, u')$ and $(u, u')$ be the adjacent unmatched
and matched edges, respectively, higher up on $p$. If $(u, u')$ is not a separator edge then $(w, w')$ is called a {\em separator frontier edge}.
Next let $(w, w')$ be a matched edge that is not a separator edge and such that $q(f \ \mbox{to} \ w)$ has no intersection with $p(w \ \mbox{to} \ v)$. 
Then if $(w, w')$ is a forward (backward) edge, it will be called a {\em forward frontier (backward frontier)}. 
Figures \ref{fig.back1}, \ref{fig.forward1} and \ref{fig.separator1}, show backward, forward and separator frontiers, respectively,
and Figures \ref{fig.back2}, \ref{fig.forward2} and \ref{fig.separator2} show these three graphs redrawn in a manner that
reveals their structure more easily. In Figure \ref{fig.separator1}, the $\e(v)$ and $\o(v)$ paths both use edges $(b, w')$
and $(w', w)$ and the latter edge is a separator frontiers. The reason for drawing edge $(b, w)$ is that it provides an $\o(w)$ path.

The difference between a separator edge and a forward frontier edge is the following: in the former case, $p(b \ \mbox{to} \ v]$
does not intersect $p[f \ \mbox{to} \ w)$ and in the latter case it does. In both cases, $|p[f \ \mbox{to} \ w]| = |q[f \ \mbox{to} \ w]|$, 
since otherwise one of the paths to $v$ can be shortened.

\begin{figure}[ht]
\begin{minipage}[b]{0.5\linewidth}
\centering
\includegraphics[width=\textwidth]{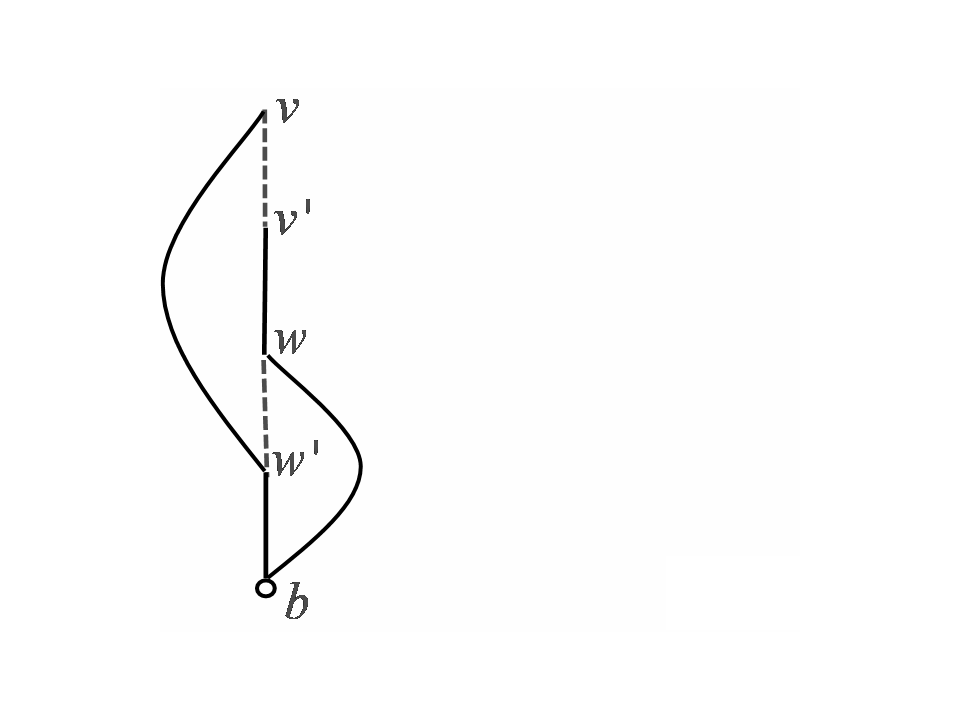}
\caption{
Edge $(w, w')$ is a backward frontier.}
\label{fig.back1}
\end{minipage}
\hspace{0.5cm}
\begin{minipage}[b]{0.5\linewidth}
\centering
\includegraphics[width=\textwidth]{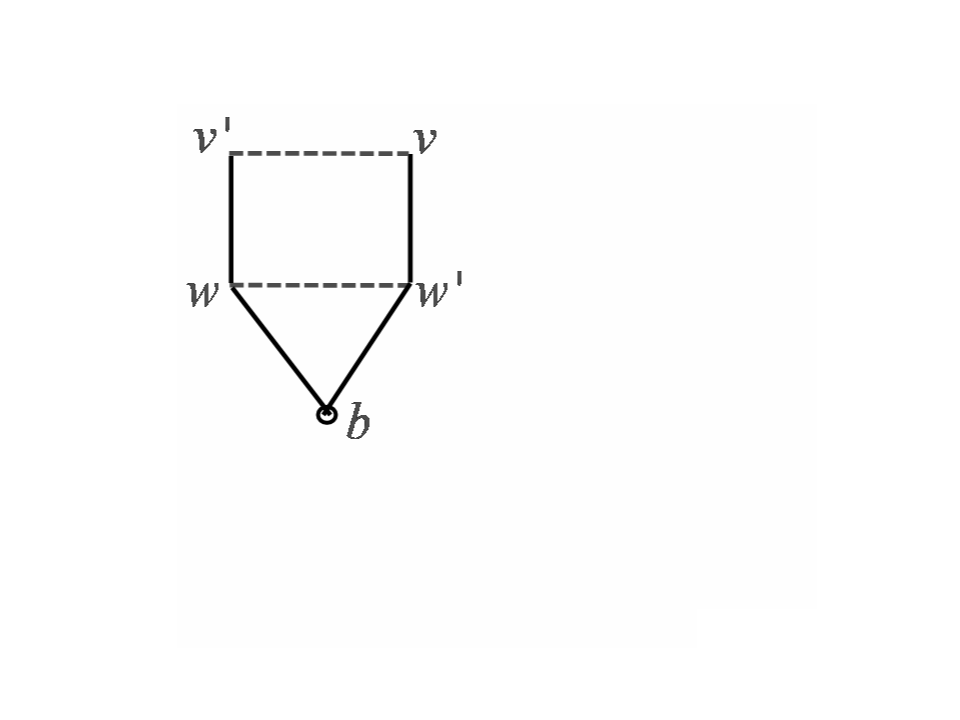}
\caption{}
\label{fig.back2}
\end{minipage}
\end{figure}

\begin{lemma}
\label{lem.pq}
Let $v$ be a vertex of tenacity $t < l_m$ and let $p$ and $q$ be $\e(v)$ and $\o(v)$ paths, respectively, starting at unmatched vertex $f$,
such that there are no separators on $p(b \ \mbox{to} \ v]$ and $q(b \ \mbox{to} \ v]$.
Then $F(p, v) = F(q, v)$.
\end{lemma}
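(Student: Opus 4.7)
The plan is to prove the lemma by contradiction. I assume $F(p, v) \neq F(q, v)$ and will derive a contradiction with the no-separator hypothesis, either by producing an alternating walk to $v$ shorter than $\e(v)$ or $\o(v)$, or by exhibiting a separator.

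Since $b$ itself lies on both $p$ and $q$ with $\t(b) > t$, each of $F(p, v)$ and $F(q, v)$ sits at or above $b$ on its respective path, so their inequality forces at least one of them to lie strictly above $b$. Without loss of generality, take $z := F(p, v)$ to be strictly above $b$ on $p$, and let $z'$ be its matched neighbor in $p$; by Lemma \ref{lem.t-matched}, $\t(z) = \t(z') > t$, and by Theorem \ref{thm.honest}, $|p[f \ \mbox{to} \ z]| = \mn(z)$ and $|p[f \ \mbox{to} \ z']| = \mn(z')$. The first preliminary fact I would establish is that neither $z$ nor $z'$ lies on $q$. For $z$: if $z \in q$, Theorem \ref{thm.honest} applied to $q$ would force $|q[f \ \mbox{to} \ z]| = \mn(z)$ as well, making $z$ a common vertex of tenacity exceeding $t$ strictly above $b$, contradicting the choice of $b$. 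For $z'$: since $\t(z') > t = \t(v)$ and $z'$ is matched, $z' \neq f, v$, so any occurrence of $z'$ on $q$ must be as an interior vertex, which in an alternating path carries its matched edge $(z, z')$ along and thereby forces $z \in q$, a contradiction.

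Next I would carve out the cycle around $z$. Let $u$ be the highest vertex of $p \cap q$ on $p[b \ \mbox{to} \ z')$ and $w$ the lowest vertex of $p \cap q$ on $p(z \ \mbox{to} \ v]$ (with defaults $u = b$ and $w = v$, so both exist). By choice of $u$ and $w$, the arcs $p[u \ \mbox{to} \ w]$ (which passes through $z', z$) and $q[u \ \mbox{to} \ w]$ share only the endpoints $u$ and $w$ and together form a cycle $C$ in the union $p[b \ \mbox{to} \ v] \cup q[b \ \mbox{to} \ v]$. The heart of the argument is to splice $q[u \ \mbox{to} \ w]$ into $p$ in place of $p[u \ \mbox{to} \ w]$, producing an alternating walk $P = p[f \ \mbox{to} \ u] \circ q[u \ \mbox{to} \ w] \circ p[w \ \mbox{to} \ v]$ from $f$ to $v$ of length $\e(v) - |p[u \ \mbox{to} \ w]| + |q[u \ \mbox{to} \ w]|$. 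Since $z$ of tenacity exceeding $t$ lies strictly inside $p[u \ \mbox{to} \ w]$, a length count based on $\mn(z) + \mx(z) = \t(z) > t = \e(v) + \o(v)$ should show $|q[u \ \mbox{to} \ w]| < |p[u \ \mbox{to} \ w]|$, or else that the two arc lengths differ in parity; in either case $P$ shortens to an $\e(v)$- or $\o(v)$-walk strictly shorter than $\e(v)$ or $\o(v)$, contradicting the minimality of $p$ or $q$. The no-separator hypothesis enters to rule out the residual case of length equality with matching parity: such an equality would force some common matched edge on $C$ to be a cut of the union, i.e., a separator, violating the hypothesis.

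I expect the main obstacle to be the parity and length bookkeeping at the splice points $u$ and $w$. The observation noted just before Figure~\ref{fig.back1}, namely $|p[f \ \mbox{to} \ w]| = |q[f \ \mbox{to} \ w]|$ at common forward/separator edge endpoints, is what aligns parities across the splice. The strict length gap is driven by the high tenacity of $z$, which forces the $p$-arc through $z$ to be longer than the $q$-bypass by at least the surplus $\t(z) - t > 0$, modulo the placement of $u$ and $w$; quantifying this gap uniformly and verifying alternation at the splice points is the delicate step. If $u$ or $w$ coincides with an endpoint of a common matched edge and the cycle degenerates, I would argue separately that an adjacent common matched edge becomes a separator on $p(b \ \mbox{to} \ v]$ or $q(b \ \mbox{to} \ v]$, once again contradicting the hypothesis.
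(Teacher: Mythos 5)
Your proposal has a genuine gap at its central step, and it is precisely the step the paper's machinery (frontier edges and flowers) was built to handle. After locating $z = F(p,v)$ strictly above $b$ and choosing $u$ and $w$, you assert that $p[u \ \mbox{to} \ w]$ and $q[u \ \mbox{to} \ w]$ meet only at their endpoints and form a cycle, so that splicing $q[u \ \mbox{to} \ w]$ into $p$ yields an alternating path to $v$. Your choice of $u$ and $w$ only guarantees that $p(u \ \mbox{to} \ w)$ avoids $q$; it gives no control over $q[u \ \mbox{to} \ w]$, which in general re-enters $p$ below $u$ or above $w$, may traverse common matched edges of $p$ in the opposite direction (the backward-edge situation of Figure \ref{fig.back1}), and may even visit $w$ before $u$ in $q$-order. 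Consequently the spliced object is a non-simple walk, alternation at the splice points is not automatic, and one cannot in general extract from it a shorter alternating path. Moreover, the key inequality you invoke, that $\t(z) > t$ ``should show'' $|q[u \ \mbox{to} \ w]| < |p[u \ \mbox{to} \ w]|$, is not derived and does not follow: $\t(z) > t$ only lower-bounds $\mx(z)$, and bounding $\t(z)$ from above (which is what a contradiction needs) requires actually exhibiting an alternating path to $z$ of the parity opposite to its occurrence on $p$ — this is the content of the lemma, not a consequence of the hypothesis. Your fallback claim that equality of the two arc lengths with matching parity ``would force a separator'' is likewise unsupported; nothing ties arc-length equality to the edge-cut condition defining separators.

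For contrast, the paper does not argue by exchanging subpaths at all. It proves the stronger statement that every vertex of $p(b \ \mbox{to} \ v]$ and $q(b \ \mbox{to} \ v]$ has tenacity at most $t$, by constructing, for each vertex $u$ even w.r.t.\ $p$, an explicit odd alternating path to $u$ of length at most $|q| + |p[u \ \mbox{to} \ v]|$. The construction splits on whether the nearest frontier edge above $u$ is a backward or forward frontier; in the forward case the path to $u$ is routed through a flower formed by the segments of $q$ on $p$ (Lemmas \ref{lem.flower} and \ref{lem.covered}), which is exactly the device needed to cope with the interleavings of $p$ and $q$ that your splice argument assumes away. Without an ingredient of this kind, the contradiction you aim for cannot be reached, so the proposal as written does not establish the lemma.
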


\begin{proof}
We will prove that all vertices on $p(b \ \mbox{to} \ v]$
and $q(b \ \mbox{to} \ v]$ are of tenacity at most $t$, thereby proving the lemma. By Lemma \ref{lem.t-matched}, it suffices to 
prove this for vertices on $p(b \ \mbox{to} \ v]$ that are even w.r.t. this path. For each such vertex, say $u$, we will prove that
$\o(u) \leq |q| + |p[u \ \mbox{to} \ v]|$ thereby proving the claim, since $e(u) \leq |p[f \ \mbox{to} \ u]|$. 

Let $(w, w')$ be the frontier edge that is closest to $(u, u')$, such that $w$ is same as or above $u$. 
First consider the case that $(w, w')$ is a backward frontier. If so, $q[f \ \mbox{to} \ w] \circ p[w \ \mbox{to} \ u]$ is an
odd alternating path which obviously satisfies the length constraint.

Second consider the case that $(w, w')$ is a forward frontier. Now, $|q[f \ \mbox{to} \ w]| \geq |p[f \ \mbox{to} \ w]|$, since 
otherwise we can get a shorter even path to $v$ by splicing the first half of $q$ with the second half of $p$ at $w$. 
Since $(w, w')$ is not a separator, $q[w \ \mbox{to} \ v]$ must traverse an edge of $p$ below $(w, w')$ before reaching $v$. 
Let $(x, x')$ be the last edge below $(w, w')$ that $q[w \ \mbox{to} \ v]$ traverses. Then $(x, x')$ must be a backward edge, 
since otherwise $p[f \ \mbox{to} \ x] \circ q[x \ \mbox{to} \ v]$ will be a shorter odd path to $v$. 
Now, by Lemma \ref{lem.covered}, $x$ or $w$ will be covered by a flower formed 
by the segments of $q[w \ \mbox{to} \ v]$. In the former case, again one can get a shorter odd path to $v$, leading to a contradiction.
Therefore, $w$ is covered by a flower formed by the segments of $q[w \ \mbox{to} \ v]$. Let $b'$ be the base of this flower.

Now, consider vertex $u$ for which we need to place an upper bound on $\o(u)$. If $u$ lies below $x$ or if $u$ is not
covered by a flower formed by the segments of $q[w \ \mbox{to} \ v]$, then $q[f \ \mbox{to} \ w]$
concatenated with an odd path from $w$ to $b'$, through the flower covering $w$, concatenated with $p[b' \ \mbox{to} \ u]$
is an odd path to $u$ of the required length. Next assume that $u$ is covered by a flower formed by the segments 
of $q[w \ \mbox{to} \ v]$, with the base of this flower being $b''$. Then $p[f \ \mbox{to} \ b'']$ concatenated with an odd path from 
$b''$ to $u$ through this flower is the required odd path; its length is bounded by 
\[ |p| + |q[w \ \mbox{to} \ v]| \leq |q| + |p[u \ \mbox{to} \ v]|, \]
since $|q[f \ \mbox{to} \ w]| \geq |p[f \ \mbox{to} \ w]|$ and we did not use any part of $q[f \ \mbox{to} \ w]$. 
\end{proof}

\begin{figure}[ht]
\begin{minipage}[b]{0.5\linewidth}
\centering
\includegraphics[width=\textwidth]{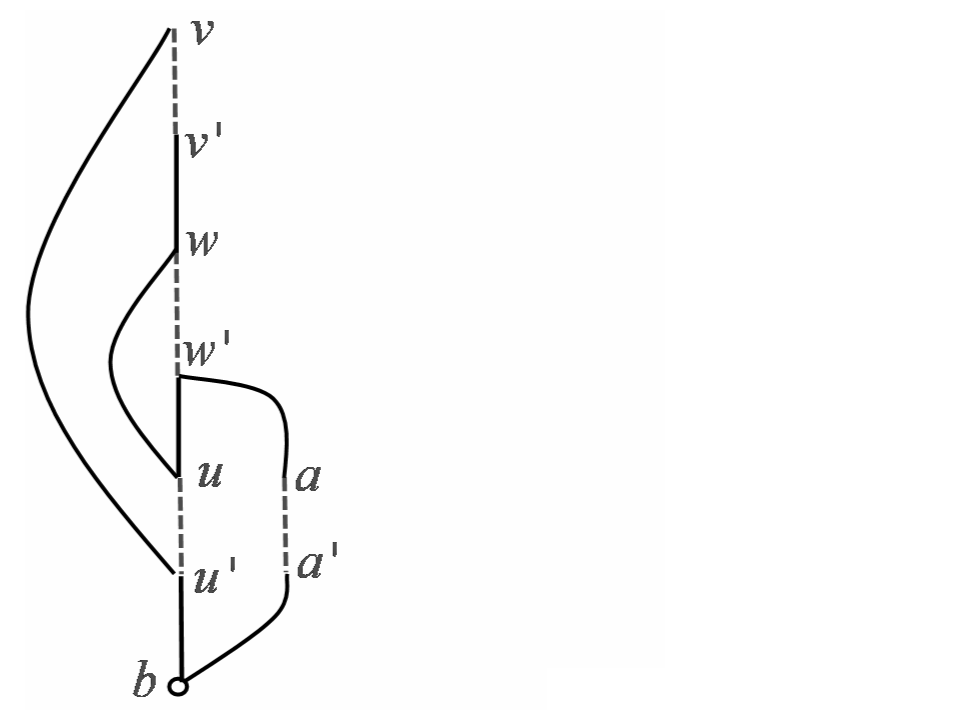}
\caption{
Edge $(w, w')$ is a forward frontier.}
\label{fig.forward1}
\end{minipage}
\hspace{0.5cm}
\begin{minipage}[b]{0.5\linewidth}
\centering
\includegraphics[width=\textwidth]{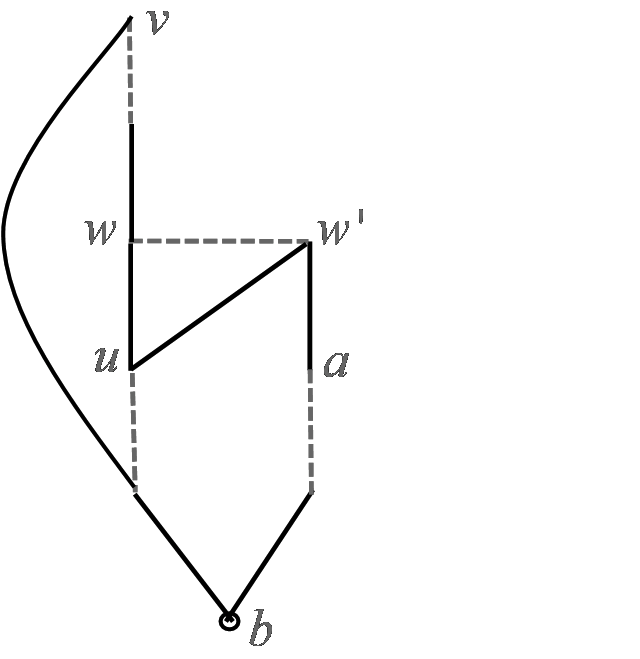}
\caption{}
\label{fig.forward2}
\end{minipage}
\end{figure}

\begin{lemma}
\label{lem.separator}
Let $v$ be a vertex of tenacity $t < l_m$ and let $p$ and $q$ be $\e(v)$ and $\o(v)$ paths, respectively, starting at unmatched vertex $f$.
Then $F(p, v) = F(q, v)$.
\end{lemma}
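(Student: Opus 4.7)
The plan is to extend Lemma~\ref{lem.pq} by handling separator edges via a blocking argument. If $p(b \ \mbox{to} \ v]$ and $q(b \ \mbox{to} \ v]$ contain no separator, Lemma~\ref{lem.pq} yields $F(p, v) = F(q, v) = b$ directly, so that case is already done.

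In the presence of separators I would use the following structural observation: every separator edge is a forward common matched edge, hence lies on both $p$ and $q$, is traversed the same way on the two paths, and its endpoints occur at identical distances from $f$ (namely $|p[f \ \mbox{to} \ w]| = |q[f \ \mbox{to} \ w]|$, as noted just before Lemma~\ref{lem.pq}). Thus the separator endpoints form a common sequence of vertices along $p$ and $q$, appearing in the same order and at identical levels. Together with $b$ they partition each of $p(b \ \mbox{to} \ v]$ and $q(b \ \mbox{to} \ v]$ into a sequence of \emph{blocks}, each of which, by construction, contains no separator in its strict interior.

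Inside a single block I would replay the argument of Lemma~\ref{lem.pq}: for any strictly interior vertex $u$ of the block that is even with respect to $p$, the same flower-based construction (invoking Lemmas~\ref{lem.flower} and~\ref{lem.covered} and the backward/forward frontier case analysis) yields an odd alternating path from $f$ to $u$ of length at most $|q| + |p[u \ \mbox{to} \ v]|$, whence $\t(u) \leq |p[f \ \mbox{to} \ u]| + |q| + |p[u \ \mbox{to} \ v]| = |p| + |q| = t$. Because the block contains no separator, the separator-frontier case never intrudes into the analysis. By Lemma~\ref{lem.t-matched} it then follows that every strictly interior vertex of every block has tenacity at most $t$. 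Consequently the only vertices of tenacity $> t$ along $p$ above $b$ are separator endpoints, and by the preceding paragraph these lie on both paths at identical positions; the highest one is therefore the same for $p$ and for $q$, giving $F(p, v) = F(q, v)$.

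The main obstacle I expect is the block-by-block replay of Lemma~\ref{lem.pq}: the spliced path $q[f \ \mbox{to} \ w_f] \circ p[w_f \ \mbox{to} \ u]$ built for a backward frontier, and the flower above a forward frontier, are constructed assuming $p$ and $q$ behave uniformly from $f$ to the frontier edge. When the block is bounded below by a lower separator, one must check that forcing both paths to enter the block at the same shared endpoint preserves the alternating structure and the key inequality $|p[f \ \mbox{to} \ w_f]| \leq |q[f \ \mbox{to} \ w_f]|$; both hold because $p$ and $q$ have identical levels at every separator endpoint, so the block acts like a self-contained instance of the no-separator setting with Lemma~\ref{lem.pq}'s argument applied verbatim.
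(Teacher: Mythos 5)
Your reduction to ``blocks'' between consecutive separator endpoints, with Lemma~\ref{lem.pq} replayed verbatim inside each block, has a genuine gap. The argument of Lemma~\ref{lem.pq} bounds $\o(u)$ by locating a \emph{frontier} edge at or above $u$ and casing on whether it is backward or forward; the forward case leans crucially on the hypothesis that the edge is \emph{not} a separator, since that is what forces $q[w \ \mbox{to} \ v]$ to return to $p$ below $(w,w')$ and hence produces the flower covering $w$. Inside a block bounded above by a separator, $p$ and $q$ may share no matched edge at all except the separator endpoints themselves (the two paths can be internally disjoint between consecutive separators), so there is no backward or forward frontier edge available inside the block, and the only common edge in sight --- the separator at the top of the block --- is exactly the case that Lemma~\ref{lem.pq}'s construction cannot handle. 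So the block is not ``a self-contained instance of the no-separator setting,'' and the claim that ``the separator-frontier case never intrudes'' is where the proof breaks.

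The paper supplies a different device precisely for this situation. Let $(w,w')$ be the lowest separator frontier edge. Since $w$ lies on both $p$ and $q$ and is above $b$, the maximality in the choice of $b$ gives $\t(w) \leq t$ (this also disposes of your weaker final step: separator endpoints cannot have tenacity $> t$, so the common base is $b$ itself). One then argues that $w$ is BFS-honest w.r.t.\ $p$ and $q$, takes an $\o(w)$ path $r$, so that $|r| = \t(w) - |p[f \ \mbox{to} \ w]| \leq t - |p[f \ \mbox{to} \ w]|$, and bounds $\o(u)$ for even vertices $u$ on $p(b \ \mbox{to} \ w]$ by splicing $r$, $p$, and $q[f \ \mbox{to} \ w]$, with a case analysis on the last vertex $h$ at which $r$ meets $p[f \ \mbox{to} \ w)$ (whether $h = b$, $h$ even, or $h$ odd w.r.t.\ $p$). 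That use of an oddlevel path of the separator endpoint is the missing idea in your proposal; the flower machinery of Lemma~\ref{lem.pq} is reserved for the portion above the highest separator frontier, where your observation that the no-separator argument applies is correct, as is your reduction of the whole lemma to showing that every vertex strictly between $b$ and $v$ on either path has tenacity at most $t$.
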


\begin{proof}
Lemma \ref{lem.pq} dealt with the case that $p$ and $q$ have no separators. Next assume that they do and let $(w, w')$ be
the lowest separator frontier edge. We claim that $w$ and $w'$ must be BFS-honest on $p$ and $q$ -- otherwise any $\e(w)$ path
must intersect $p(w \ \mbox{to} \ v]$ and $q(w \ \mbox{to} \ v]$ and one can then show that there must be a shorter even or odd
path to $v$ than $p$ or $q$, respectively. 

Next let us argue that all vertices on $p(b \ \mbox{to} \ w]$
and $q(b \ \mbox{to} \ w]$ are of tenacity at most $t$. As in Lemma \ref{lem.pq}, it suffices to restrict attention to
vertices on $p(b \ \mbox{to} \ w]$ that are even w.r.t. this path. For each such vertex, say $u$, we will prove that
$\o(u) \leq |q| + |p[u \ \mbox{to} \ v]|$. 

Since $w$ lies on both $p$ and $q$, and by the definition of vertex $b$, $\t(w) \leq t$. Let $r$ be an $\o(w)$ path; w.l.o.g. $r$
starts at unmatched vertex $f$. Since $w$ is BFS-honest w.r.t. $p$, $|r| = \t(w) - |p[f \ \mbox{to} \ w]|$. Let $h$ be the last vertex 
of $r[f  \ \mbox{to} \ w)$ that also lies on $p[f \ \mbox{to} \ w)$. First consider the easy case that $h = b$. If so, 
$r \circ p[w \ \mbox{to} \ u]$ is an odd path to $u$ of the required length.  

Next assume that $h \neq b$ and $h$ is even w.r.t. $p$. Now, for $u$ on $p(h \ \mbox{to} \ w)$, 
$r \circ p[w \ \mbox{to} \ u]$ is an odd path to $u$ of the required length. And for $u$ on $p(b \ \mbox{to} \ h]$, 
$q[f \ \mbox{to} \ w] \circ  r[w \ \mbox{to} \ h] \circ p[h \ \mbox{to} \ u]$ is the required path.
Finally, assume that $h$ is odd w.r.t. $p$. Then, for $u$ on $p(b \ \mbox{to} \ h]$, $r \circ p[w \ \mbox{to} \ u]$ is the required path.
And for $u$ on $p(h \ \mbox{to} \ w)$, $r \circ p[w \ \mbox{to} \ u]$ is the required path.

The arguments given above carry over to even vertices on $p$ that lie between any two separator frontier edges. 
The part of $p$ between the highest separator frontier edge and $v$ is similar to the no separators case and can be
handled using the arguments given in Lemma \ref{lem.pq}.
\end{proof}

\suppress{

REMOVE LATER

}

\begin{figure}[ht]
\begin{minipage}[b]{0.5\linewidth}
\centering
\includegraphics[width=\textwidth]{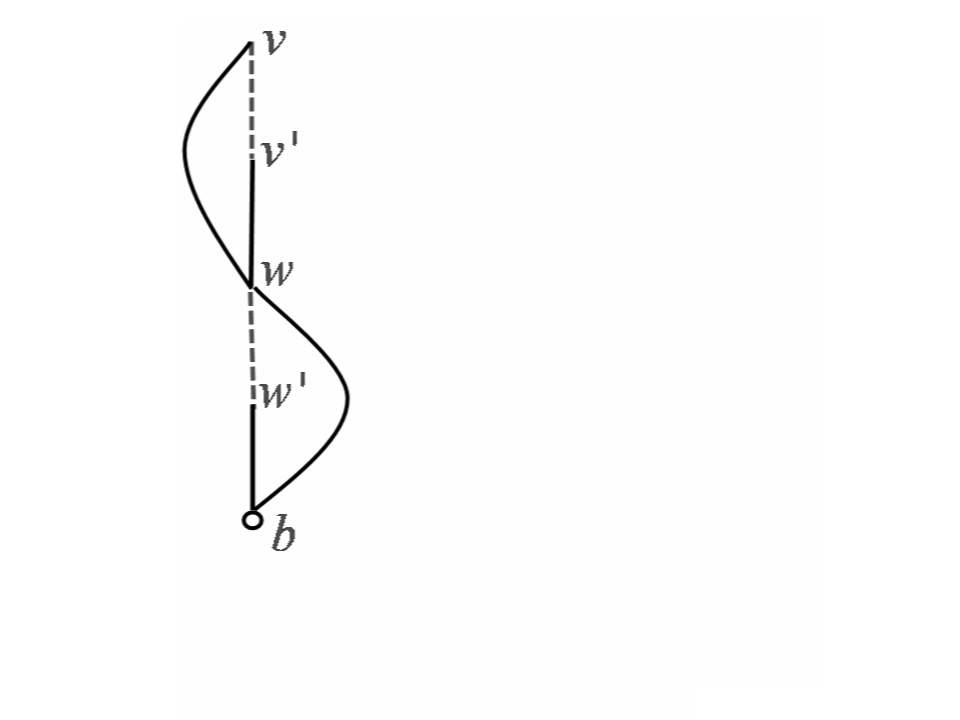}
\caption{
Edge $(w, w')$ is a separator frontier.}
\label{fig.separator1}
\end{minipage}
\hspace{0.5cm}
\begin{minipage}[b]{0.5\linewidth}
\centering
\includegraphics[width=\textwidth]{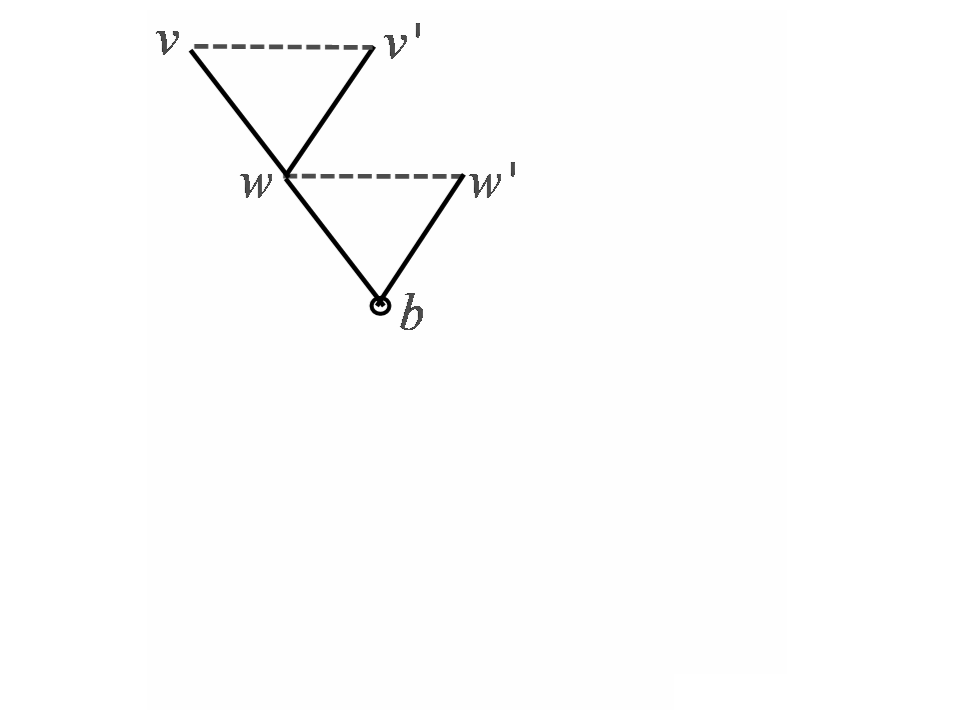}
\caption{}
\label{fig.separator2}
\end{minipage}
\end{figure}

\begin{proof}
{\bf (of Theorem \ref{thm.base})}
In Lemma \ref{lem.separator}, first fix $p$ and vary $q$ over all $\o(v)$ paths and then fix $q$ and vary $p$ over all $\e(v)$ paths
to prove that all $\e(v)$ and $\o(v)$ paths starting at the same unmatched vertex have the same base.
Next suppose that there are $\e(v)$ and $\o(v)$ paths starting from more than one unmatched vertex. 

Let $p$ and $q$ be $\e(v)$ paths starting from unmatched vertices $f$ and $f'$, respectively. Since $t < l_m$, 
these paths must meet at a vertex that is odd w.r.t. both paths, say $w'$. Now $p[w' \ \mbox{to} \ v]$ cannot intersect
$q[f' \ \mbox{to} \ w')$, since otherwise we can get a shorter even path to $v$ or an augmenting path of length $< l_m$.
Therefore, $q[f' \ \mbox{to} \ w') \circ p[w' \ \mbox{to} \ v]$ is also an $\e(v)$ path. Let us name it $q'$. By the fact
established in the previous paragraph, $F(q', v) = F(q, v)$.

Suppose $F(p, v) \neq F(q', v)$; let $F(p, v) = b_1$ and $F(q', v) = b_2$. Clearly, $b_1$ lies on $p[f \ \mbox{to} \ w')$
and $b_2$ lies on $q'[f' \ \mbox{to} \ w')$. Therefore, $\t(w') \leq t$. It is easy to see that an $\e(w')$ path together
with $p[w' \ \mbox{to} \ f]$ or $q'[w' \ \mbox{to} \ f']$ will yield an augmenting path of length $< l_m$, leading to a contradiction. 
Therefore, $F(p, v) = F(q', v)$, hence proving the theorem.
\end{proof}

\begin{figure}[h]
\begin{center}
\includegraphics[scale = 0.4]{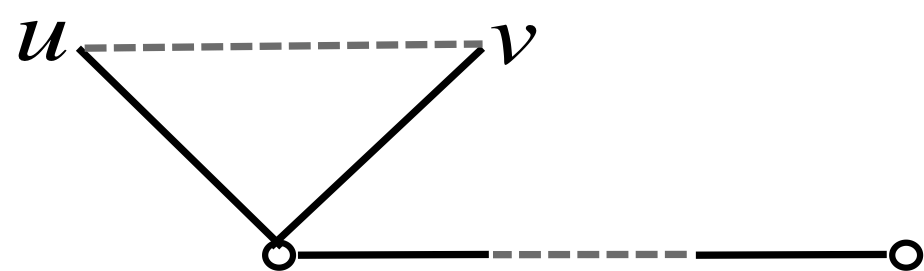}
\caption{Vertices $u$ and $v$ have no base.}
\label{fig.nobase}
\end{center}
\end{figure}

\Remark The condition ``$t < l_m$'' is essential in Theorem \ref{thm.base}; Figure \ref{fig.nobase} gives a counter-example.
In this graph $l_m = 3$ and the vertices $u$ and $v$, both of tenacity 3, have no base.

\definition{(Base of a vertex, and iterated bases)}
For a vertex $v$ of tenacity $t < l_m$ define its base to be the singleton vertex in the set defined in Theorem \ref{thm.base}.
We will denote it as $\base(v)$. The iterated bases of $v$ are defined as follows:
Define $\base^1(v) = \base(v)$, and for $k \geq 1$, if $\t(\base^k(v)) < l_m$, define $\base^{k+1}(v) = \base(\base^k(v))$.
For convenience, we will define $\base^0 (v) = v$ even though it is not an iterated base of $v$.

\definition{(Outer and inner vertices)}
A vertex $v$ is said to be {\em outer} if $\e(v) < \o(v)$ and {\em inner} otherwise.

Clearly, the base of any vertex is outer and hence all iterated bases of a vertex are outer vertices. Clearly, $\base^0 (v) = v$ will
be inner if $v$ is inner; however, as stated above, $\base^0 (v)$ is not an iterated base of $v$.
In the graph of Figures \ref{fig.BFSH} and \ref{fig.verten}, the base of each vertex of tenacity $\alpha$ is $a$,
tenacity $\kappa$ is $b$, and tenacity $\tau$ is $f$, respectively. Additionally, $\base^1(v) = a, \base^2(v) = b$ and $\base^3(v) = f$. 
Theorems \ref{thm.honest} and \ref{thm.base} give:

\begin{corollary}
\label{cor.bases}
For any vertex $v$, its iterated bases occur on every $\e(v)$ and $\o(v)$ path in a BFS-honest manner.
\end{corollary}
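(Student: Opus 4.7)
The plan is to obtain the corollary as a straightforward induction on $k$, leaning on Theorem \ref{thm.base} to guarantee that each iterated base sits on the relevant path and on the ``furthermore'' clause of Theorem \ref{thm.honest} to upgrade this to BFS-honesty. There is no hidden difficulty here; the work has all been done in the two preceding theorems.

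For the base case, fix any $\e(v)$ or $\o(v)$ path $p$ starting at an unmatched vertex $f$. Theorem \ref{thm.base} identifies $\base(v)$ as the (unique) vertex of tenacity $> \t(v)$ that is furthest from $f$ along $p$, so in particular $\base(v)$ lies on $p$. Because $\t(\base(v)) > \t(v)$, Theorem \ref{thm.honest} applies and gives not only BFS-honesty of $\base(v)$ w.r.t.\ $p$ but also the stronger statement that $|p[f \ \mbox{to} \ \base(v)]| = \mn(\base(v))$.

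For the inductive step, assume $\base^k(v)$ appears on $p$ in a BFS-honest manner with $|p[f \ \mbox{to} \ \base^k(v)]| = \mn(\base^k(v))$, and suppose $\t(\base^k(v)) < l_m$ so that $\base^{k+1}(v)$ is defined. Then $p[f \ \mbox{to} \ \base^k(v)]$ is, depending on the parity of its length, an $\e(\base^k(v))$ or $\o(\base^k(v))$ path. Applying the base case to $\base^k(v)$ along this subpath produces $\base^{k+1}(v)$ on $p[f \ \mbox{to} \ \base^k(v)]$, BFS-honest w.r.t.\ it, and with $|p[f \ \mbox{to} \ \base^{k+1}(v)]| = \mn(\base^{k+1}(v))$. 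Since BFS-honesty of a vertex depends only on the portion of the path up to that vertex, $\base^{k+1}(v)$ is also BFS-honest w.r.t.\ $p$ itself, closing the induction.

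The only place one might expect trouble is in ensuring that the parity of $|p[f \ \mbox{to} \ \base^k(v)]|$ matches an actual $\e(\base^k(v))$ or $\o(\base^k(v))$ path, but the ``furthermore'' clause of Theorem \ref{thm.honest} guarantees this initial segment has the minimal length of its parity, which is precisely what is needed to recurse. Hence the corollary follows immediately.
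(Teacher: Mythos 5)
Your proposal is correct and is essentially the argument the paper intends: the paper states the corollary as an immediate consequence of Theorems \ref{thm.honest} and \ref{thm.base} without writing out a proof, and your induction on $k$ (Theorem \ref{thm.base} puts $\base^k(v)$ on the path, the ``furthermore'' clause of Theorem \ref{thm.honest} makes the prefix up to it a minlevel path of the right parity, and one recurses on that prefix) is exactly how that implication is meant to be unwound. The only remark worth making is that plain BFS-honesty of $\base^k(v)$ already suffices to recurse, so the stronger minlevel clause, while convenient, is not strictly needed.
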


\definition{(Shortest path from an iterated base to a vertex)}
Let $v$ be a vertex of tenacity $t < l_m$ and let $b$ be an iterated base of $v$, i.e., $b = \base^k(v)$ for $k \in \Zplus$. 
By an $\e(b; v)$ ($\o(b; v)$) path we mean a minimum even (odd) length alternating path from $b$ to $v$ that starts with an unmatched edge. 

\begin{lemma}
\label{lem.concat}
Let $v$ be a vertex of tenacity $t < l_m$ and let $b = \base(v)$. Then every $\e(v)$ ($\o(v)$) path consists of an $\e(b)$ path 
concatenated with an $\e(b; v)$ ($\o(b; v)$) path.
\end{lemma}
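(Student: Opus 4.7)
The plan is to decompose a given $\e(v)$ path (or $\o(v)$ path) $p$, starting at an unmatched vertex $f$, at the vertex $b=\base(v)$, and show that each half is a minimum alternating path of the required type. Essentially everything needed is packaged into Theorem \ref{thm.honest}; the rest is parity bookkeeping.

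First I would handle the prefix $p[f \ \mbox{to} \ b]$. Since $\t(b) > t = \t(v)$, Theorem \ref{thm.honest} applies and yields that $b$ is BFS-honest on $p$ with $|p[f \ \mbox{to} \ b]|=\mn(b)$. Invoking the observation (made just after the definition of outer/inner vertices) that the base of any vertex is outer, we get $\mn(b)=\e(b)$. Thus $p[f \ \mbox{to} \ b]$ is automatically an $\e(b)$ path, which is what we want, and this part of the argument does not depend on whether $p$ was originally an $\e(v)$ or an $\o(v)$ path.

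Next I would argue about the suffix $p[b \ \mbox{to} \ v]$. Because $b$ sits at the even position $\e(b)$ on $p$, the edge of $p$ entering $b$ (if any) is matched, and therefore the edge of $p$ leaving $b$ is unmatched; this is exactly the condition in the definition of $\e(b;v)$/$\o(b;v)$ paths. The length $|p[b \ \mbox{to} \ v]|$ is $\e(v)-\e(b)$ or $\o(v)-\e(b)$ depending on the case, so its parity is correct. For minimality, if there were a strictly shorter alternating path $p'$ from $b$ to $v$ of the same parity starting with an unmatched edge, then $p[f \ \mbox{to} \ b]\circ p'$ would be an alternating walk from $f$ to $v$ of the same parity as $p$ but strictly shorter, contradicting the definition of $\e(v)$ (resp.\ $\o(v)$).

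The only delicate point is the splicing in the last step: one must verify that the concatenation really is alternating, which hinges on $p[f \ \mbox{to} \ b]$ ending in a matched edge (guaranteed by $b$ being outer and therefore sitting at even position on $p$) meeting $p'$ which starts with an unmatched edge (guaranteed by the definition of an $\e(b;v)/\o(b;v)$ path). This is what makes the outerness of $\base(v)$ — otherwise an innocuous remark — do real work in the proof; everything else reduces to an immediate invocation of Theorem \ref{thm.honest}.
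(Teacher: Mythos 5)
Your handling of the prefix is fine: $b=\base(v)$ lies on the given path $p$ (it is $F(p,v)$ by Theorem \ref{thm.base}), Theorem \ref{thm.honest} places it at position $\mn(b)=\e(b)$, and outerness of the base gives the parity of that position and the fact that $p$ leaves $b$ on an unmatched edge. The gap is in the minimality argument for the suffix. You suppose a strictly shorter alternating path $p'$ from $b$ to $v$ of the right parity starting with an unmatched edge, and conclude that $p[f \ \mbox{to} \ b]\circ p'$ is ``an alternating walk from $f$ to $v$ \ldots contradicting the definition of $\e(v)$.'' But $\e(v)$ is defined as the length of a shortest alternating \emph{path}, and a shorter alternating \emph{walk} is no contradiction: $p'$ may re-enter $p[f \ \mbox{to} \ b)$ at vertices below $b$, and in a non-bipartite graph an alternating walk with a repeated vertex cannot in general be short-cut to an alternating path of no greater length, because the two visits to the repeated vertex may have incompatible parities (incoming and outgoing edges both unmatched, say). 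This is exactly the blossom phenomenon, and the failure of precisely this kind of splicing argument is why the machinery of Sections \ref{sec.tenacity} and \ref{sec.base} is needed at all. So the delicate point is not alternation at the junction (which, as you say, is fine since the prefix ends in a matched edge) but simplicity of the concatenation, and your proof says nothing about it. A telltale sign is that your argument never uses the hypothesis $\t(b)>t$ except through BFS-honesty.

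The paper's proof is devoted almost entirely to this intersection case. It takes an $\e(b)$ path $p$ and an $\e(b;v)$ path $q$ and observes that if they cannot be concatenated into a simple path, then $q$ must use some matched edge of $p$ below $b$; letting $(w,w')$ be the lowest such edge and arguing as in the proof of Theorem \ref{thm.honest} on the parity with which $q$ traverses it, one branch yields a genuinely simple even path to $v$ shorter than $\e(v)$ (a contradiction), while the other yields an odd path from $f$ to $b$ short enough to force $\t(b)\le t$, contradicting $\t(b)>t$. That second branch is where the tenacity hypothesis on $b$ does real work. To repair your proof you would have to supply essentially this case analysis for the situation in which $p'$ meets $p[f \ \mbox{to} \ b)$; without it, the final step is unjustified.
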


\begin{proof}
Let $p$ be an $\e(b)$ path starting at unmatched vertex $f$ and $q$ be an $\e(b; v)$ path. If their concatenation is longer than 
$\e(v)$ then $q$ must intersect $p$ below $b$. Let $(w, w')$ be the lowest matched edge of $p$ used by $q$, where $w'$ is even w.r.t. $p$. 
Now, using the same arguments as those in the proof of Theorem \ref{thm.honest},
one can show that if $w$ is odd w.r.t. $q$ then there is an even path from $f$ to $v$ that is shorter than $\e(v)$ and
if $w'$ is odd w.r.t. $q$ then there is a short enough odd path from $f$ to $b$ which gives $\t(b) \leq t$. 
\end{proof}

\section{Blossoms and their properties vis-\'{a}-vis \\
minimum length alternating paths}
\label{sec.blossom}

\definition{\label{def.blossom} (Blossoms and their nesting depth)}
Blossoms will be defined recursively.
Let $b$ be an outer vertex and $t$ be an odd number such that $\t(b) > t$ and $t < l_m $. We will denote the blossom of
tenacity $t$ and base $b$ by $\bt$. Define $\CB_{b, 1} = \emptyset$ and define its nesting depth, $N(\CB_{b, 1}) = 0$.
If $t \geq 3$, let $S_{b, t} = \{v ~|~ \t(v) = t \ \mbox{and} \ \base(v) = b \}$ and define
\[ \bt = S_{b, t} \cup \left(\bigcup_{v \in (S_{b, t} \cup \{b\}), \ v \ \mbox{outer}} \CB_{v, t-2} \right) . \]
Define the nesting depth of this blossom, 
\[ N(\bt) = 1 +  \left( \max_{v \in (S_t \cup \{b\}), \ v \ \mbox{outer}} N(\CB_{v, t-2})  \right)  \]
if $S_{b, t} \neq \emptyset$ and $N(\CB_{b, t-2})$ otherwise.

It is obvious from Definition \ref{def.blossom} that if $v \in \bt$ then $\t(v) \leq t$.
In the graph of Figures \ref{fig.BFSH} and \ref{fig.verten}, the blossom $\CB_{a, \alpha}$ consists of vertices of tenacity $\alpha$,
the blossom $\CB_{b, \kappa}$ consists of vertices of tenacity $\alpha$ and $\kappa$, and
the blossom $\CB_{f, \tau}$ consists of vertices of tenacity $\alpha , \kappa$ and $\tau$. The nesting depths of these blossoms 
are 1, 2 and 3, respectively. 

\begin{lemma}
\label{lem.bb}
Let $v \in \bt$. Then $\exists k$ such that $1 \leq k \leq N(\bt)$ and $b = \base^k(v)$. Furthermore, all the vertices
$\base(v),$ $\ldots , \base^{k-1} (v)$ belong to $\bt$.
\end{lemma}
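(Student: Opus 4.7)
The plan is to prove this by induction on the nesting depth $N(\bt)$, exploiting the recursive structure of blossoms laid out in Definition \ref{def.blossom}. The base case $N(\bt) = 0$ corresponds to $\bt = \CB_{b,1} = \emptyset$, where the claim is vacuously true. For the inductive step, I would unpack the recursive definition for $t \geq 3$ and show that any $v \in \bt$ falls into exactly one of two structural cases, each of which yields the conclusion.

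For the inductive step, suppose $v \in \bt$. The defining identity
\[ \bt = S_{b,t} \cup \bigcup_{v' \in (S_{b,t} \cup \{b\}),\ v' \ \text{outer}} \CB_{v', t-2} \]
splits into two cases. First, if $v \in S_{b,t}$, then by the very definition of $S_{b,t}$ we have $\base(v) = b$, so $k = 1$ works and the statement about intermediate iterated bases is vacuous. Second, if $v \in \CB_{v', t-2}$ for some outer $v' \in S_{b,t} \cup \{b\}$, I would invoke the inductive hypothesis on the strictly less nested blossom $\CB_{v', t-2}$ to obtain $k'$ with $1 \leq k' \leq N(\CB_{v', t-2})$ such that $v' = \base^{k'}(v)$ and $\base(v), \ldots, \base^{k'-1}(v) \in \CB_{v', t-2} \subseteq \bt$.

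From here I would split the inductive case further. If $v' = b$, then $k = k'$ suffices directly, and the depth bound follows because $N(\CB_{b, t-2}) \leq N(\bt)$ in both branches of the definition of $N(\bt)$. If instead $v' \in S_{b,t}$, then by definition of $S_{b,t}$ we have $\base(v') = b$, so $b = \base(v') = \base^{k'+1}(v)$; taking $k = k'+1$ gives $b = \base^k(v)$. The depth bound $k \leq N(\bt)$ holds because $S_{b,t}$ is now nonempty, so $N(\bt) = 1 + \max N(\CB_{v'', t-2}) \geq 1 + k'$. All intermediate bases $\base(v), \ldots, \base^{k'-1}(v)$ belong to $\CB_{v', t-2} \subseteq \bt$ by the inductive hypothesis, and the one remaining intermediate vertex $\base^{k'}(v) = v'$ lies in $S_{b,t} \subseteq \bt$.

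The main obstacle I anticipate is purely bookkeeping rather than conceptual: I have to verify carefully that $\CB_{v', t-2} \subseteq \bt$ in both branches of the nesting-depth formula (the $S_{b,t} = \emptyset$ branch requires noticing that $b$ is outer by assumption, so the union there collapses to $\CB_{b, t-2}$), and I have to track the off-by-one that arises in the $v' \in S_{b,t}$ case where one extra iteration of $\base$ is needed to reach $b$. Once these are correctly accounted for, the induction goes through cleanly and simultaneously establishes both the existence of $k$ with the stated depth bound and the membership of all intermediate iterated bases in $\bt$.
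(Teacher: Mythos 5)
Your proposal is correct in substance and takes essentially the same route as the paper: an induction that unpacks the recursive definition of $\bt$, with the case $v \in S_{b,t}$ giving $k=1$, and the case $v \in \CB_{v',t-2}$ handled by the induction hypothesis plus one extra application of $\base$ when $v' \in S_{b,t}$; in fact your explicit treatment of the branch $v' = b$ is slightly more complete than the paper's proof, which only considers $v \in \CB_{u,t-2}$ for $u \in S_{b,t}$. One small repair is needed: when $S_{b,t} = \emptyset$ and $v' = b$, the blossom $\CB_{b,t-2}$ is \emph{not} strictly less nested, since by definition $N(\bt) = N(\CB_{b,t-2})$ in that branch, so induction on $N(\bt)$ alone is not well-founded there; induct instead on $t$ (or on the pair $(N(\bt), t)$ lexicographically), after which your argument, including the bound $k \leq N(\bt)$, goes through as written.
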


\begin{proof}
By induction on the nesting depth of blossom $\bt$. If $N(\bt) = 1$, by definition, $b = \base(v)$. 
To prove the induction step, suppose $N(\bt) = l + 1$. Now, if $v \in S_{b, t}$, i.e., $\t(v) = t$, then $\base(v) = b$.
Otherwise, $\exists u \in S_{b, t}$ such that $v \in \CB_{u, t-2}$. Clearly $N(\CB_{u, t-2}) \leq l$ and $\base(u) = b$.
By the induction hypothesis, $\exists ! k$ such that $l \geq k \geq 1$ and $u = \base^k(v)$. Hence, $b = \base^{k+1}(v)$ and $k+1 \leq l+1$.
Finally, by the induction hypothesis, $\base(v),$ $\ldots , \base^{k-1} (v)$ belong to $\CB_{u, t-2}$.
Hence $\base(v),$ $\ldots , \base^{k} (v)$ belong to $\bt$.
\end{proof}

\begin{lemma}
\label{lem.bb1}
Let $t \leq t' < \t(b)$ and $t' < l_m$, and let
$\bt$ and $\CB_{b, t'}$ be two blossoms with the same base $b$. Then $\bt \subseteq \CB_{b, t'}$.
\end{lemma}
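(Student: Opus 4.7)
The plan is to prove the inclusion by induction on $t' - t$. Note that tenacities of edges are odd, so both $t$ and $t'$ are odd and $t'-t$ is a non-negative even integer; thus the induction is really on $(t'-t)/2$. The base case $t'=t$ is the trivial identity $\CB_{b,t} \subseteq \CB_{b,t}$, and the degenerate case $t=1$ is also immediate since $\CB_{b,1} = \emptyset$ by definition.

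For the inductive step, assume $t' > t$, so $t' \geq t+2$. I would first observe that $b$ is an outer vertex: by hypothesis it is the base of a blossom, and the paragraph following the definition of $\base$ notes that bases are always outer. Consequently, $b$ lies in the set $S_{b, t'} \cup \{b\}$ and is outer, so the term $\CB_{b, t'-2}$ appears explicitly in the union in Definition \ref{def.blossom} of $\CB_{b, t'}$, yielding
\[ \CB_{b, t'-2} \subseteq \CB_{b, t'}. \]
Before invoking the induction hypothesis on the pair $(t, t'-2)$, I would verify that this pair satisfies the standing conditions: $t \leq t'-2$ since $t' \geq t+2$; $t'-2 < \t(b)$ since $t'-2 < t' < \t(b)$; and $t'-2 < l_m$ since $t'-2 < t' < l_m$. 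The induction hypothesis then gives $\CB_{b, t} \subseteq \CB_{b, t'-2}$, and chaining with the previous inclusion finishes the proof.

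I do not expect any real obstacle here — the argument is essentially a direct unwinding of the recursive definition, with the one conceptual point being that the base vertex $b$ is automatically outer, which is exactly what makes the term $\CB_{b, t'-2}$ appear in the union defining $\CB_{b, t'}$. No structural lemma about minimum length alternating paths is required.
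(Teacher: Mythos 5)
Your proof is correct and follows essentially the same route as the paper: induction on $t'-t$, with the key step being the containment $\CB_{b,t'-2} \subseteq \CB_{b,t'}$, which the paper simply calls straightforward from Definition \ref{def.blossom} and which you justify by noting that $b$ is outer and hence contributes the term $\CB_{b,t'-2}$ to the defining union. Your additional verification that the pair $(t,t'-2)$ satisfies the standing hypotheses is a welcome but minor elaboration of the same argument.
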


\begin{proof}
{\bf 1.} The proof is by induction on $t' - t$. The base case, i.e., $t = t'$, is obvious. Assume the induction hypothesis
that $\bt \subseteq \CB_{b, t' - 2}$. Now, by Definition \ref{def.blossom} it is straightforward that $\CB_{b, t' - 2} \subseteq \CB_{b, t'}$.
Hence $\bt \subseteq \CB_{b, t'}$.
\end{proof}

\begin{lemma}
\label{lem.vinb}
Let $\bt$ be a blossom with base $b$ and tenacity $t < l_m$, and $v$ be a vertex satisfying \; $\base^k (v) = b$ for some $k \geq 1$.
If $t \geq \t(base^{k-1} (v))$ then $v \in \bt$.
\end{lemma}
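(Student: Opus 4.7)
The plan is to induct on the odd number $t$, using the recursive structure of Definition \ref{def.blossom} to peel off one tenacity layer at a time. The key observation is that there are really only two things $\CB_{b,t}$ can ``add'' over $\CB_{b,t-2}$: vertices of tenacity exactly $t$ whose base is $b$ (the set $S_{b,t}$), and all sub-blossoms $\CB_{u,t-2}$ for outer $u \in S_{b,t}$. So the induction should split on whether $\t(\base^{k-1}(v))$ equals $t$ or is strictly smaller.

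For the base case $t=1$ the blossom is empty, but the hypothesis is vacuous: since $t<l_m$ forces $l_m\ge 3$, no vertex can have tenacity $1$, so no $\base^{k-1}(v)$ of tenacity $\le 1$ exists. For the inductive step ($t\ge 3$), I would argue as follows. In the easy case, $\t(\base^{k-1}(v))\le t-2$, so by the inductive hypothesis $v\in\CB_{b,t-2}$; since $b$ is outer, the recursive definition gives $\CB_{b,t-2}\subseteq\CB_{b,t}$ (this is exactly what is used in Lemma \ref{lem.bb1}), and we are done.

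In the harder case, $\t(\base^{k-1}(v))=t$. Setting $u=\base^{k-1}(v)$, we have $\base(u)=\base^{k}(v)=b$, so $u\in S_{b,t}$, and $u$ is outer because all iterated bases are outer. If $k=1$ then $v=u\in S_{b,t}\subseteq\CB_{b,t}$ immediately. If $k\ge 2$, I would apply the inductive hypothesis to the smaller tenacity $t-2$ with base $u$ and exponent $k-1$: the needed condition $t-2\ge \t(\base^{k-2}(v))$ holds because iterated bases have strictly increasing tenacity, so $\t(\base^{k-2}(v))<\t(\base^{k-1}(v))=t$ and hence is at most $t-2$. This yields $v\in\CB_{u,t-2}$, and since $u$ is outer and lies in $S_{b,t}$, the definition of $\CB_{b,t}$ gives $\CB_{u,t-2}\subseteq\CB_{b,t}$, finishing the case.

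The main obstacle is mostly bookkeeping: ensuring that each invocation of the inductive hypothesis happens at a strictly smaller tenacity (so induction on $t$ alone suffices) and that the various ``$t<\t(\cdot)$'' side conditions needed for the sub-blossoms to be well-defined are automatically satisfied. The structural content — that the only way a vertex can enter $\CB_{b,t}$ is through the chain of iterated bases up to $b$ — is essentially forced once Definition \ref{def.blossom} is in hand, and this is exactly the converse direction of Lemma \ref{lem.bb}.
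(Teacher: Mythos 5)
Your proof is correct, and it takes a slightly different route from the paper's. The paper argues by induction on $k$, the number of iterated bases between $v$ and $b$: for $k=1$ it places $v$ in $\CB_{b,\t(v)}$ directly from Definition \ref{def.blossom}, in the step it applies the hypothesis to get $v\in\CB_{u,r-2}$ with $u=\base^{k-1}(v)$ and $r=\t(u)$, and in both cases it closes the remaining tenacity gap up to $t$ by invoking Lemma \ref{lem.bb1}. You instead induct on the tenacity $t$, splitting on whether $\t(\base^{k-1}(v))$ equals $t$ or is at most $t-2$; the descent in $k$ happens only in the former case, while the latter case replaces the appeal to Lemma \ref{lem.bb1} by the one-step containment $\CB_{b,t-2}\subseteq\CB_{b,t}$ read directly off the definition (which is precisely the induction step inside the proof of Lemma \ref{lem.bb1}). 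Both arguments rest on the same two facts — membership in $\bt$ arises only through $S_{b,t}$ or through a sub-blossom $\CB_{u,t-2}$ with $u\in S_{b,t}\cup\{b\}$ outer, and tenacities of iterated bases are odd and strictly increasing, hence jump by at least $2$ — so the proofs are interchangeable. The paper's version is marginally shorter because Lemma \ref{lem.bb1} is already available; yours is self-contained and has the merit of making explicit the side condition $t-2\ge\t(\base^{k-2}(v))$ that the paper's induction step uses silently, as well as the vacuousness of the $t=1$ base case.
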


\begin{proof}
The proof is by induction on $k$. In the base case, i.e., $k = 1$, $\base(v) = b$. Let $\t(v) = r$, clearly $r \leq t$.
By Definition \ref{def.blossom}, $v \in \CB_{b, r}$ and by Lemma \ref{lem.bb1}, $\CB_{b, r} \subseteq \bt$.
Hence $v \in \bt$.

For the induction step, let $\base^{k-1} (v) = u$, $\t(u) = r \leq t$. By the induction hypothesis, 
$v \in \CB_{u, r-2}$. Since $\base(u) = b$, by Definition \ref{def.blossom}, $\CB_{u, r-2} \subseteq \CB_{b, r}$ and by Lemma \ref{lem.bb1},
$\CB_{b, r} \subseteq \bt$. Hence $v \in \bt$.
\end{proof}

\begin{lemma}
\label{lem.bb2}
Let $\bt$ and $\CB_{b', t'}$ be two blossoms such that $b \in \CB_{b', t'}$. Then $\bt \subset \CB_{b', t'}$.
\end{lemma}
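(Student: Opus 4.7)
The plan is to combine Lemmas \ref{lem.bb} (every $v\in\bt$ has $b$ as an iterated base, with all intermediate iterated bases also in $\bt$) and \ref{lem.vinb} (a sufficient condition for membership via iterated bases) to ``concatenate'' the two nested blossom structures. The hypothesis $b\in\CB_{b',t'}$ will give us that $b'$ is an iterated base of $b$, and then the iterated-base chain from any $v\in\bt$ up to $b$ extends past $b$ all the way to $b'$.

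First I would observe the elementary fact that every vertex in a blossom $\CB_{c,s}$ has tenacity at most $s$; this is immediate by induction on the nesting depth from Definition \ref{def.blossom}, since $S_{c,s}$ contributes vertices of tenacity exactly $s$ and each sub-blossom $\CB_{v,s-2}$ contributes only vertices of tenacity $\leq s-2$. Applied to $b\in\CB_{b',t'}$, this yields $\t(b)\leq t'$. Combined with $\t(b)>t$ (which is part of the definition of $\bt$), we get $t<\t(b)\leq t'$; in particular $b\neq b'$, since $b'\notin\CB_{b',t'}$ (its tenacity is strictly larger than $t'$).

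Next, by Lemma \ref{lem.bb} applied to $b\in\CB_{b',t'}$, there is some $k'\geq 1$ with $\base^{k'}(b)=b'$, and $b=\base^0(b),\base(b),\dots,\base^{k'-1}(b)$ all lie in $\CB_{b',t'}$, so each has tenacity at most $t'$ by the observation above. Now fix any $v\in\bt$. Applying Lemma \ref{lem.bb} again, there is $k\geq 1$ with $\base^k(v)=b$; setting $K=k+k'\geq 1$ gives $\base^K(v)=b'$, while $\base^{K-1}(v)=\base^{k'-1}(b)$ has tenacity at most $t'$. Therefore Lemma \ref{lem.vinb}, applied to the blossom $\CB_{b',t'}$ and the vertex $v$ with the chain of length $K$, concludes $v\in\CB_{b',t'}$. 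Hence $\bt\subseteq\CB_{b',t'}$.

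Finally, strictness comes for free: $b\in\CB_{b',t'}$ but $b\notin\bt$, since every element of $\bt$ has tenacity at most $t<\t(b)$. I do not expect any real obstacle here; the main point to get right is the bookkeeping in the two cases $k'=1$ (where $\base^{K-1}(v)=b$ itself, and we use $\t(b)\leq t'$ directly) and $k'\geq 2$ (where $\base^{K-1}(v)$ is a proper iterated base of $b$ and hence already known to live in $\CB_{b',t'}$). Both cases feed identically into Lemma \ref{lem.vinb}, so the argument stays uniform.
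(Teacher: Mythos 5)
Your proof is correct, and it takes a cleaner route than the paper's. The paper proves the lemma by induction on the length $k$ of the iterated-base chain from $b$ to $b'$: in the base case it uses the recursive Definition \ref{def.blossom} directly (via $\CB_{b,r-2}\subset\CB_{b',r}$) together with Lemma \ref{lem.bb1}, and in the induction step it uses Lemma \ref{lem.vinb} to place $b$ inside $\CB_{v,r-2}$ for the intermediate base $v=\base^k(b)$ before nesting upward again. You instead concatenate the two chains supplied by Lemma \ref{lem.bb} (from any $v\in\bt$ up to $b$, and from $b$ up to $b'$) and make a single application of Lemma \ref{lem.vinb} with the combined chain, after checking its tenacity hypothesis via the observation (stated in the paper right after Definition \ref{def.blossom}) that membership in $\CB_{b',t'}$ forces tenacity at most $t'$, which also gives $\t(b)\le t' < l_m$ so the concatenated chain of iterated bases is well defined. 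In effect you reuse the induction already buried in the proof of Lemma \ref{lem.vinb} rather than redoing an induction at the level of this lemma, which makes the argument shorter and uniform across the cases $k'=1$ and $k'\ge 2$; the paper's version, by walking up the chain one base at a time, additionally exhibits the intermediate nestings $\bt\subset\CB_{v,r-2}$, but that extra information is not needed for the statement. Your treatment of strictness ($b\in\CB_{b',t'}$ while $b\notin\bt$ because $\t(b)>t$) matches the paper's.
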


\begin{proof}
By Lemma \ref{lem.bb}, there is a $k \geq 1$ such that $b' = \base^k(b)$ 
and $\base^1 (b), \ldots, \base^{k-1} (b) \in \CB_{b', t'}$. Clearly, $t' \geq \t(\base^{k-1} (b))$. To prove the statement,
we will apply induction on $k$. 

For the base case, i.e., $k = 1$, let $\t(b) = r$. Clearly, $t < r \leq t'$ and $\bt \subset \CB_{b, r-2}$.
By Definition \ref{def.blossom}, $\CB_{b, r-2} \subset \CB_{b', r}$, where the containment is proper since $b$ is not in the 
first blossom but it is in the second one. By Lemma \ref{lem.bb1}, $\CB_{b', r} \subseteq \CB_{b', t'}$.
Hence $\bt \subset \CB_{b', t'}$.

For the induction step assume $\base^{k+1} (b) = b'$.
Let $\base^{k} (b) = v$, and let $\t(v) = r$. Since $v \in \CB_{b', t'}$, $r \leq t'$. Clearly, $\t(\base^{k-1} (b)) \leq r-2$. 
Therefore, by Lemma \ref{lem.vinb}, $b \in \CB_{v, r-2}$. Furthermore, since $\base^{k} (b) = v$, by the induction hypothesis, 
$\bt \subset \CB_{v, r-2}$.

Since $\base(v) = b'$, by Definition \ref{def.blossom}, $\CB_{v, r-2} \subseteq \CB_{b', r}$. Since $r \leq t'$, 
$\CB_{b', r} \subseteq \CB_{b', t'}$. Hence, $\bt \subset \CB_{b', t'}$.
\end{proof}

\begin{theorem}
\label{thm.laminar}
The set of blossoms in $G$ forms a laminar family, i.e., two blossoms are either disjoint or one is contained in the other.
\end{theorem}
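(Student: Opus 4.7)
The plan is to take two blossoms $\bt$ and $\CB_{b',t'}$ that share a vertex $v$, and use the uniqueness of iterated bases to show that $b$ and $b'$ must lie on a single chain $v, \base(v), \base^2(v), \ldots$, which forces one blossom to sit inside the other.

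First I would apply Lemma \ref{lem.bb} to the common vertex $v$ inside each of the two blossoms. This yields integers $k, k' \geq 1$ with $b = \base^k(v)$ and $b' = \base^{k'}(v)$, and moreover $\base^1(v), \ldots, \base^{k-1}(v) \in \bt$ and $\base^1(v), \ldots, \base^{k'-1}(v) \in \CB_{b',t'}$. Because $\base(\cdot)$ is a function (by Theorem \ref{thm.base}, its value is a singleton), the iterated-base sequence starting at $v$ is uniquely determined, so $b$ and $b'$ both occur as specific entries of the same chain.

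Next I would split into three cases on how $k$ and $k'$ compare. If $k = k'$, then $b = b'$, and the two blossoms share both base and a common vertex; since the smaller tenacity blossom is contained in the larger one by Lemma \ref{lem.bb1}, one contains the other. If $k < k'$, then $b = \base^k(v)$ occurs among $\base^1(v), \ldots, \base^{k'-1}(v)$, hence $b \in \CB_{b',t'}$; Lemma \ref{lem.bb2} now gives $\bt \subset \CB_{b',t'}$. The case $k > k'$ is symmetric, giving $\CB_{b',t'} \subset \bt$. In each case the two blossoms are comparable, so not being disjoint forces nesting, which is exactly laminarity.

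I do not expect a serious obstacle; all three cases reduce to a direct appeal to Lemmas \ref{lem.bb}, \ref{lem.bb1}, and \ref{lem.bb2}. The only mildly delicate point is to record explicitly, at the start, that the iterated bases of $v$ form a single well-defined sequence (using Theorem \ref{thm.base}), so that the positions $k, k'$ of $b$ and $b'$ along this sequence can be meaningfully compared; once this is said, the case analysis is immediate.
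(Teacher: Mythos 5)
Your proposal is correct and follows essentially the same route as the paper: extract the positions $k, k'$ of $b$ and $b'$ on the (unique) iterated-base chain of the common vertex $v$ via Lemma \ref{lem.bb}, handle the equal-base case with Lemma \ref{lem.bb1}, and otherwise conclude containment from $b \in \CB_{b',t'}$ via Lemma \ref{lem.bb2}. Your explicit remark that $\base(\cdot)$ is well-defined (Theorem \ref{thm.base}), so the two positions lie on one chain and can be compared, is a fair tidying of a point the paper leaves implicit.
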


\begin{proof}
Suppose $v$ lies in blossoms $\bt$ and $\CB_{b', t'}$. If $b = b'$, we are done by the first claim in Lemma \ref{lem.bb1}.
Next assume that $b \neq b'$. Then by Lemma \ref{lem.bb2}, $b = \base^k(v)$ and $b' = \base^l(v)$; assume $k < l$.
By the second claim in Lemma \ref{lem.bb2}, $b = \base^k(v) \in \CB_{b', t'}$. Finally, by the second claim in Lemma \ref{lem.bb1}, $\bt \subset \CB_{b', t'}$.
\end{proof}

For the next theorem, we will need the following definition. Let $k = \min_{l} \{l \in \Zplus ~|~  \t(\base^l(v)) \geq t \}$ and let
$b = \base^k (v)$. Then we will say that {\em $b$ is the first iterated base of $v$ having tenacity at least $t$}.

\begin{theorem}
\label{thm.path}
Let $\t(v) = t < l_m$, $\base(v) = b$, and let $p$ be an $\e(b; v)$ or $\o(b; v)$ path. Then the following hold:
\begin{enumerate}
\item
Let vertex $u$ lie on $p$ with $u \neq b$ and let $b'$ be the first iterated base of $u$ having tenacity at least $t$.
If $b'$ lies on $p$ then $p[b' \ \mbox{to} \ u]$ is an $\e(b'; u)$ or $\o(b'; u)$ path, depending on the parity of $|p[b' \ \mbox{to} \ u]|$.
\item
Let $u$ and $b'$ satisfy the above-stated conditions. Then $b'$ lies on $p$.
\item
All vertices of $p$ are in $\bt \cup \{b\}$.
\item
Let vertex $u$ lie on $p$ with $u \neq b$. Then all iterated bases of $u$, until $b$, lie on $p$.   
\end{enumerate}
\end{theorem}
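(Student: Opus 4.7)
I would prove all four parts simultaneously by induction on the tenacity $t$, with outer induction hypothesis (IH) that the theorem holds for every vertex of tenacity strictly less than $t$. Fix $v$ with $\t(v) = t$ and $\base(v) = b$, and extend $p$ to the alternating path $P = P_0 \circ p$ by prepending an $\e(b)$ path $P_0$ starting at an unmatched vertex $f$; by Lemma \ref{lem.concat}, $P$ is an $\e(v)$ or $\o(v)$ path. The workhorse sub-claim, used throughout, is a universal tenacity bound: every $u \in p$ with $u \neq b$ satisfies $\t(u) \leq t$. This follows immediately from Theorem \ref{thm.base} and the definition of $\base(v) = b$ as the vertex of tenacity $> t$ on $P$ farthest from $f$; no vertex of $p$ strictly past $b$ may have tenacity exceeding $t$.

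With the tenacity bound in hand, I would split on the tenacity of $u$. When $\t(u) = t$, Theorem \ref{thm.honest} gives BFS-honesty of $u$ on $P$, so $P[f \text{ to } u]$ is an $\e(u)$ or $\o(u)$ path; applying Theorem \ref{thm.base} together with the tenacity bound identifies $\base(u)$ as $b$ itself, so the only iterated base of $u$ up to $b$ is $b$, trivially on $p$. When $\t(u) < t$, let $b_u = \base(u)$; I would show $b_u \in p$ by contradiction. Using the outer IH applied to $u$, one obtains structural control over $\e(b_u; u)$ and $\o(b_u; u)$ paths, which allows rerouting $p[b \text{ to } u]$ through $b_u$ to produce a strictly shorter $b$-to-$u$ alternating path of the same parity starting with an unmatched edge; concatenating with $p[u \text{ to } v]$ then contradicts the minimality of $p$. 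Having placed $b_u$ on $p$, I iterate the argument on $b_u$; the tenacity bound forces the iterated-base chain of $u$ to eventually reach a vertex of tenacity $> t$, which by the bound must be $b$. This yields Part 4; Part 2 follows since $b'$ is one of these iterated bases (either $b$ itself, or, if there is one of tenacity exactly $t$, that one); Part 1 follows from the same splicing argument that produced the reroute; and Part 3 follows from the tenacity bound, Part 4, and Lemma \ref{lem.vinb}.

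\textbf{Main obstacle.} The crux is the splicing step in the case $\t(u) < t$: as Figure \ref{fig.BFSH} already illustrates, BFS-honesty of $u$ on $P$ may genuinely fail in this regime, so naively replacing $P[f \text{ to } u]$ with an $\e(u)$ path can produce an alternating walk that is not a simple path, from which minimality of $P$ cannot be contradicted. My plan is to confine the splice to the sub-path $p[b \text{ to } u]$, leaving $p[u \text{ to } v]$ intact, and to use the outer IH's structural control over $\e(b_u; u)$ and $\o(b_u; u)$ paths to guarantee that the rerouted $b$-to-$u$ alternating path is simple and begins with an unmatched edge of the correct parity. A secondary subtlety is ensuring that the iterated-base chain terminates cleanly at $b$; this follows from the tenacity bound since the chain has strictly increasing tenacities, and the first one exceeding $t$ can only equal $b$.
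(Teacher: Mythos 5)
Your overall skeleton — strong induction on the tenacity $t$, a case split on $\t(u)=t$ versus $\t(u)<t$, and reliance on Theorems \ref{thm.honest}, \ref{thm.base}, Corollary \ref{cor.bases} and Lemma \ref{lem.concat} — matches the paper's proof, but your keystone step is a genuine gap. The ``universal tenacity bound'' ($\t(u)\leq t$ for every $u\in p$, $u\neq b$) does not follow ``immediately.'' Theorem \ref{thm.base} only speaks about $\e(v)$ and $\o(v)$ paths starting at unmatched vertices; to apply it to $p$ you prepend an $\e(b)$ path $P_0$ and declare $P=P_0\circ p$ an $\e(v)$ or $\o(v)$ path ``by Lemma \ref{lem.concat}.'' But that lemma gives only the decomposition direction (every $\e(v)$/$\o(v)$ path splits at $b$, hence length additivity); it does not say that an arbitrary $\e(b;v)$ path concatenated with an arbitrary $\e(b)$ path is a simple path. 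A priori $p$ could revisit vertices of $P_0$ below $b$, in which case $P$ is a self-intersecting walk and Theorem \ref{thm.base} says nothing about it. Ruling this out amounts to confining $p$ to $\bt\cup\{b\}$ — i.e., to claim (3), the very thing the induction is supposed to deliver — so your foundation is circular (or at best unproven). The paper never prepends: all its shorter-path contradictions are built from paths rooted at unmatched vertices, and subpaths are confined using the induction hypothesis for claim (3) at tenacity $t-2$.

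The second gap is in your reroute for the case $\t(u)<t$. You need (i) that the rerouted $b$-to-$u$ path through $b_u$ is strictly shorter than $p[b \ \mbox{to} \ u]$, and (ii) that it avoids $p[u \ \mbox{to} \ v]$. For (i), Corollary \ref{cor.bases} and Lemma \ref{lem.concat} are minimality statements about paths from unmatched vertices, not about $b$-to-$u$ subpaths, so ``misses $b_u$, hence longer'' is not available locally; the paper instead compares against an $\e(b')$ path from an unmatched vertex and uses the decomposition of $\o(u)$ paths (Case 2 of claim (2)), splitting further on whether that $\e(b')$ path meets $p(u \ \mbox{to} \ v]$. For (ii), the paper's two devices are essential and absent from your sketch: choosing $u$ as the \emph{last} vertex on $p$ violating the claim, so that every vertex of $p(u \ \mbox{to} \ v)$ is either a tenacity-$t$ vertex with base $b$ or confined to some $\CB_{a,t-2}$, and then laminarity (Theorem \ref{thm.laminar}) to conclude these sub-blossoms are disjoint from $\CB_{b_u,t-2}$, which by the inductive claim (3) contains every $\e(b_u;u)$ path. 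Your appeal to ``structural control from the outer IH'' gestures at this but never supplies the mechanism, and without it neither the simplicity nor the length comparison of the spliced path goes through.
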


\begin{proof}
By Lemma \ref{lem.t-matched} we may assume w.l.o.g. that $p$ be an $\e(b; v)$ path.
We will prove the four claims by strong induction on $t$. For the base case, let $t$ be the smallest tenacity of a vertex in $G$.
In this case all vertices on $p$, other than $b$, will be of tenacity $t$ and will have base $b$. Hence all the claims are satisfied.
We prove the induction step below.

{\bf (1).} 
If $\t(u) = t$ then $b' = b$ and by Theorem \ref{thm.honest}, $u$ is BFS-honest w.r.t. $p$. Now the claim follows by Corollary \ref{cor.bases}
and Lemma \ref{lem.concat}.

Therefore if $u$ does not satisfy the claim, $t(u) < t$. Among the vertices that do not satisfy the claim let $u$ be the
last one on $p$. By Lemma \ref{lem.t-matched}, $u$ will be even w.r.t. $p$ and by the choice of $b'$, $u \in \CB_{b', t-2}$.
By the induction hypothesis applied to claim (3), every $\e(b'; u)$ path is contained in $\CB_{b', t-2} \cup \{b'\}$.

Let $w$ be any vertex on $p(u \ \mbox{to} \ v)$. Either $\t(w) = t$ and $\base(w) = b$ or by the choice of $u$, $w$ is
of tenacity less than $t$ and satisfies the condition of this claim. In the latter case, let $a$ be the first iterated base of $w$ 
having tenacity at least $t$. Then, $p[a \ \mbox{to} \ w]$ is contained in $\CB_{a, t-2} \cup \{a\}$.
By Theorem \ref{thm.laminar}, blossom $\CB_{a, t-2}$ is disjoint from $\CB_{b', t-2}$.  Therefore, any $\e(b'; u)$ path does not 
intersect $p(u \ \mbox{to} \ v)$.
Hence an $\e(b') \circ \e(b'; u) \circ p[u \ \mbox{to} \ v]$ path is a shorter path than $p$, leading to a contradiction.

{\bf (2).}
If $t(u) = t$ then $u$ is BFS-honest w.r.t. $p$ and by Corollary \ref{cor.bases}, $b'$ lies on $p$.
Therefore if $u$ does not satisfy the claim, $t(u) < t$. Among the vertices that do not satisfy the claim let $u$ be the last one on $p$.
By Lemma \ref{lem.t-matched}, $u$ will be even w.r.t. $p$ and by the induction hypothesis applied to claim (3), the path $\e(b'; u)$ is 
contained in $\CB_{b', t-2} \cup \{b'\}$. Now there are two cases:

{\bf Case 1:} 
The $\e(b')$ path does not intersect $p(u \ \mbox{to} \ v]$. \\
In this case, the path $\e(u) \circ p[u \ \mbox{to} \ v]$ is a shorter path than $p$,
leading to a contradiction.

{\bf Case 2:} 
The $\e(b')$ path intersects $p(u \ \mbox{to} \ v]$.  \\
Let $w$ be the first intersection of the $\e(b')$ path with $p(u \ \mbox{to} \ v]$. 
Now $w$ must be even w.r.t. $p$, since otherwise we can obtain a shorter even path to $v$.
Furthermore, $\t(w) = t$ since the $\e(b')$ path must enter any blossom of tenacity $t-2$ through its base.
Let $q$ be the path obtained by concatenating the part of the $\e(b')$ path from $w$ to $b'$ together with an $\o(b'; u)$ path.
Since every $\o(u)$ path consists of an $\e(b')$ path concatenated with an $\o(b'; u)$ path and $p[w \ \mbox{to} \ u]$ does not
go through $b'$, $p[w \ \mbox{to} \ u]$ must be longer than $q$. Hence by replacing $p[w \ \mbox{to} \ u]$ by $q$ in $p$
yields a shorter path than $p$, leading to a contradiction.


{\bf (3).}
By claim (2), if $u$ is on $p$ and $u \neq b$, then the first iterated base of $u$ having tenacity at least $t$, say $b'$, lies on $p$, and
by claim (1), $p[b' \ \mbox{to} \ u]$ is an $\e(b'; u)$ or $\o(b'; u)$ path, depending on the parity of $|p[b' \ \mbox{to} \ u]|$.
Therefore, one of the following three must hold:
$u = b$ or $\t(u) = t$ and $\base(u) = b$ or $u \in \CB_{b', t-2}$ where $b' = b$ or $\t(b') = t$ and $\base(b') = b$.
In all cases, $u \in ( \bt \cup \{b\} )$. The claim follows.

{\bf (4).}
If $\t(u) = t$, the claim is obvious.
Otherwise, $\t(u) < t$ and by claim (2), $b'$ lies on $p$. 
Now by claim (1),  
$p[b' \ \mbox{to} \ u]$ is an $\e(b'; u)$ or $\o(b'; u)$ path, depending on the parity of $|p[b' \ \mbox{to} \ u]|$.
Hence, we are done by Corollary \ref{cor.bases} and the induction hypothesis.
\end{proof}


\Remark  It is interesting to note that perhaps the most elementary claim in Theorem \ref{thm.path} appears to be the following subclaim of (4): 
If vertex $u$ lies on $p$, with $u \neq b$, then $\base(u)$ lies on $p$. However, all our attempts at proving this fact first failed.

Consider the $\o(a)$ and $\e(c)$ paths in Figure \ref{fig.BFSH}. Vertex $v$ lies on both and is not BSF-honest w.r.t. either path;
the iterated bases of $v$ are $a$, $b$ and $f$.
Now, $\base(a) = b$ and $\base(c) = f$. Let $p$ be an $\o(b; a)$ path and $q$ be an $\e(f; c)$ path. Clearly, $p \in \CB_{b, 15} \cup \{ b \}$
and $q \in \CB_{f, 17} \cup \{ f \}$. The relevant iterated bases of $v$, namely $a$ and $b$ lie on $p$ and $a, b$ and $f$ lie on $q$.
Finally, $p[a \ \mbox{to} \ v]$ is an $\e(a; v)$ path and $q[b \ \mbox{to} \ v]$ is an $\e(b; v)$ path despite the fact that
$v$ is not BSF-honest w.r.t. both $p$ and $q$.

Theorem \ref{thm.path} leads to the startling fact proved in Theorem \ref{thm.bases}; this fact also underlines the central importance
of the notion of base of a vertex. We first need some definitions. Let us say that vertex $v$ {\em has all iterated bases defined} if
for some $k \in \Zplus$, $base^k (v)$ is an unmatched vertex, and vertex $v$ is {\em useful} if it lies on a minimum length augmenting path.
It is easy to see that any useful vertex has all iterated bases defined; however, the reverse may not hold. Clearly, in both cases,
$\t(v) < l_m$.

\begin{theorem}
\label{thm.bases}
Let $v$ be a vertex that has all iterated bases defined and let $p$ be an $\e(v)$ or $\o(v)$ path. Then for each vertex $u$ that lies on $p$,
all iterated bases of $u$ lie on $p$.
\end{theorem}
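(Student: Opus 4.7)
The plan is to proceed by strong induction on the length $K$ of the iterated-base chain of $v$ (the least $K \geq 0$ with $\base^K(v)$ unmatched), using Lemma \ref{lem.concat} to split $p$ at $b := \base(v)$ and then handling the two halves separately with Theorem \ref{thm.path}(4) and the induction hypothesis.

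The base case is $K = 0$, when $v$ is itself unmatched. Then $\e(v) = 0$, the only $\e(v)$ path is $\{v\}$, $v$ has no iterated bases, and the claim is vacuous; an $\o(v)$ path would be an augmenting path, and I would dispose of this sub-case by splitting the path at an interior vertex of strictly smaller chain length and invoking the inductive step there.

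For $K \geq 1$, I would set $b = \base(v)$, which exists because $\t(v) < l_m$. Then $b$ also has all iterated bases defined, with chain length $K - 1$, so the induction hypothesis applies to $b$ and to any $\e(b)$ path. By Lemma \ref{lem.concat}, any $\e(v)$ or $\o(v)$ path $p$ decomposes as $p = p_1 \circ p_2$, where $p_1$ is an $\e(b)$ path and $p_2$ is an $\e(b;v)$ or $\o(b;v)$ path. Now consider any $u$ on $p$. If $u$ lies on $p_1$, the induction hypothesis applied to $b$ and $p_1$ places all iterated bases of $u$ on $p_1 \subseteq p$. If $u$ lies on $p_2$ with $u \neq b$, then Theorem \ref{thm.path}(3) puts $u$ inside $\bt$ with $t = \t(v)$, so Lemma \ref{lem.bb} yields some $j \geq 1$ with $\base^j(u) = b$; Theorem \ref{thm.path}(4) applied to $p_2$ then places $\base(u), \base^2(u), \ldots, \base^j(u) = b$ on $p_2$. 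The remaining iterated bases of $u$ (those of index $> j$) coincide with the iterated bases of $b$, which the induction hypothesis places on $p_1$.

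The main obstacle, and the observation that makes the induction go through, is recognizing that $b = \base(v)$ is simultaneously the joining vertex of the decomposition $p = p_1 \circ p_2$ and an iterated base of every $u$ on $p_2 \setminus \{b\}$. This causes the information from Theorem \ref{thm.path}(4) (iterated bases of $u$ up to and including $b$, lying on $p_2$) and from the induction hypothesis on $b$ (iterated bases of $b$, lying on $p_1$) to dovetail cleanly at $b$ without leaving any gap in $u$'s iterated-base chain.
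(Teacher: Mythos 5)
Your proof is correct and is essentially the paper's own argument: the paper likewise splits $p$ at the iterated bases of $v$ (which lie on $p$ by Corollary \ref{cor.bases}/Lemma \ref{lem.concat}) and applies Theorem \ref{thm.path}(4) to each resulting segment, handling all segments at once where you peel off one segment per induction step on the chain length. The only blemish is your treatment of the $\o(v)$ sub-case when $v$ is unmatched, which is both outside the theorem's hypothesis (such a $v$ has tenacity at least $l_m$, so no base, hence does not have all iterated bases defined) and never invoked by your induction, since the induction hypothesis is only ever applied to the evenlevel path $p_1$; it should simply be deleted rather than ``disposed of'' by splitting an augmenting path, whose interior vertices need not have smaller chain length or any base at all.
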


\begin{proof}
First, it is easy to see that vertex $u$ has all iterated bases defined. If $u$ lies on $p(\base(v) \ \mbox{to} \ v]$, then by claim (4)
in Theorem \ref{thm.path} and Corollary \ref{cor.bases}, the theorem is true for $u$.

Assume that $base^k (v)$ is an unmatched vertex, where $k \in \Zplus$. If $u = base^k (v)$, the claim is clearly true.
Next assume that $u$ lies on \\
$p(\base^{l+1}(v) \ \mbox{to} \ \base^l (v)]$, where $1 \leq l < k$. Since $\base^l (v)$ is BFS-honest w.r.t. $p$, \\
$p[\base^{l+1}(v) \ \mbox{to} \ \base^l (v)]$ is an 
$\e(\base^{l+1}(v); \base^l (v))$ path, and again we are done by claim (4) in Theorem \ref{thm.path} and Corollary \ref{cor.bases}.
\end{proof}

As an illustration of Theorem \ref{thm.bases}, consider the $\e(c)$ path, say $p$, in the graph of Figure \ref{fig.BFSH}.
Vertex $v$ lies on $p$ and so do $a = \base(v), \ b = \base^2(v)$ and $f = \base^3(v)$.

\Remark The example stated above raises the following question: Let $v$ be a vertex that has all iterated bases defined and let
$p$ be an $\e(v)$ or an $\o(v)$ path starting at unmatched vertex $f$. If $u$ lies on $p$ then in what order do the iterated bases of 
$u$ occur on $p$? 

Let $f = \base^k(u)$. Clearly, if $u$ is BFS-honest on $p$, then 
$p[f \ \mbox{to} \ u]$ is an $e(u)$ or $\o(u)$ path and therefore the iterated bases of $u$ occur in the order $\base^k(u),
\base^{k-1}(u), \ldots, \base(u)$ on $p$. Next assume that $u$ is not BFS-honest on $p$ and that $u \in \bt$, where $t = \t(v)$ and
$b = \base(v)$. If not, then there is a vertex $w$ that lies on $p(u \ \mbox{to} \ v)$ such that $w$ is BFS-honset w.r.t. $p$ and
$u \in \CB_{b', t'}$, where $t' = \t(w)$ and $b' = \base(w)$; now, replace $v$ by $w$ in the remaining discussion.
By Theorem \ref{thm.path}, $\base^l(u) = b$ for $l \leq k$ and
$\base^{l+j}(u) = \base^{j-1}(v)$, for $1 \leq j \leq k-l$. Therefore,  $\base^k(u), \base^{k-1}(u), \ldots, \base^{l}(u)$ will occur in 
this order on $p$. The order of the remaining bases of $u$ will depend on the nesting structure of the sub-blossoms of $\bt$ ($\CB_{b', t'}$) 
and the manner in which $p$ traverses them.

\section{Bridges and their support}
\label{sec.bridge}

\definition{(Predecessor, prop and bridge)}
Consider a $\mn(v)$ path and let $(u, v)$ be the last edge on it; clearly, $(u, v)$ is matched if $v$ is outer and unmatched otherwise.
In either case, we will say that $u$ is a predecessor of $v$ and that edge $(u, v)$ is a prop.
An edge that is not a prop will be defined to be a bridge.

In Figure \ref{fig.BFSH}, the two horizontal edges and the oblique unmatched edge at the top 
are bridges and the rest of the edges of this graph are props.

\begin{theorem}
\label{thm.bridge}
Let $v$ be a vertex of tenacity, $t \leq l_m$, and let $p$ be a $\mx(v)$ path. Then there exists a unique bridge of tenacity $t$ on $p$.
\end{theorem}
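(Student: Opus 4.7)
The plan is to locate the bridge as the edge immediately after the highest ``honest'' vertex on $p$, then establish existence and uniqueness via splicing arguments that exploit the extremality of $p$ as a $\mx(v)$ path.

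First I would define $u^{*}$ as the vertex on $p$ closest to $v$ whose prefix $p[f \ \mbox{to} \ u^{*}]$ is itself a $\mn(u^{*})$ path, so $|p[f \ \mbox{to} \ u^{*}]| = \mn(u^{*})$ with matching parity. Such a $u^{*}$ exists ($f$ qualifies trivially) and $u^{*} \neq v$, since $\t(v) = t$ is odd forces $\mn(v) < \mx(v) = |p[f \ \mbox{to} \ v]|$. Let $w^{*}$ be the successor of $u^{*}$ on $p$ and set $e = (u^{*}, w^{*})$. A parity check shows the matched/unmatched type of $e$ at $u^{*}$ is opposite to the type required of the last edge of a $\mn(u^{*})$ path, so $e$ is not a prop to $u^{*}$; and if $e$ were a prop to $w^{*}$, then the existence of a $\mn(w^{*})$ path ending in $e$ would combine with $|p[f \ \mbox{to} \ u^{*}]| = \mn(u^{*})$ to force $|p[f \ \mbox{to} \ w^{*}]| = \mn(w^{*})$, contradicting the maximality of $u^{*}$. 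Hence $e$ is a bridge.

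To show $\t(e) = t$, I would reduce by Lemma \ref{lem.t-matched} to the case of $u^{*}$ and $v$ both outer, so $e$ is unmatched and $\t(e) = \e(u^{*}) + \e(w^{*}) + 1$. The prefix $p[f \ \mbox{to} \ w^{*}]$ is an odd alternating path of length $\e(u^{*}) + 1$, giving $\o(w^{*}) \le \e(u^{*}) + 1$. For the reverse inequality I would splice any $\o(w^{*})$ path (which ends at $w^{*}$ with an unmatched edge) with $p[w^{*} \ \mbox{to} \ v]$ (which starts at $w^{*}$ with a matched edge, since $w^{*}$ sits at odd position on $p$) to get an alternating walk to $v$ of length $\o(w^{*}) + \mx(v) - \e(u^{*}) - 1$ and of the same odd parity as $\mx(v) = \o(v)$; reducing this walk to an alternating path (standard fact that alternating walks reduce to alternating paths of no greater length) and invoking minimality of $\o(v)$ yields $\o(w^{*}) \ge \e(u^{*}) + 1$. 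Thus $\o(w^{*}) = \e(u^{*}) + 1$. Combining with $w^{*} \in \bt$ from Theorem \ref{thm.path}(3) (which gives $\t(w^{*}) \le t$ and hence the upper bound $\e(w^{*}) \le t - \e(u^{*}) - 1$), and a parallel splicing argument ruling out $w^{*}$ entering a strictly deeper sub-blossom (which is what would cause $\t(w^{*}) < t$), pins down $\e(w^{*}) = t - \e(u^{*}) - 1$, so $\t(e) = t$.

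For uniqueness, the same splicing argument applied to an arbitrary bridge $e' = (x, y)$ of tenacity $t$ on $p$ (with $x$ closer to $f$) forces $|p[f \ \mbox{to} \ x]| = \e(x)$ in the unmatched case (respectively $\o(x)$ in the matched case): otherwise substituting a $\mn$-level path for $p[f \ \mbox{to} \ x]$ produces an alternating walk to $v$ of the same parity as $\mx(v)$ but strictly shorter, contradicting $p$'s extremality. Using the tight tenacity equation at $e'$ then forces $x$ to be outer at min-level, so $x$ satisfies the defining property of $u^{*}$; by maximality $x = u^{*}$, whence $e' = e$. \emph{The main obstacle} will be the tenacity computation, in particular the lower bound $\o(w^{*}) \ge \e(u^{*}) + 1$ and the accompanying identification $\t(w^{*}) = t$: one must invoke the walk-to-path reduction carefully, verify alternation at the splice point $w^{*}$, and reconcile the four outer/inner parity combinations for $(u^{*}, v)$ via Lemma \ref{lem.t-matched} and Theorem \ref{thm.honest}.
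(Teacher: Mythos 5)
Your localization of the bridge is where the argument breaks. The edge $e=(u^{*},w^{*})$ that follows the last min-honest vertex on $p$ is indeed a bridge (that part of your argument is fine), but it need not have tenacity $t$: it is the internal bridge of the \emph{innermost} nested blossom that $p$ enters, and its tenacity can be strictly smaller than $t$. Concrete counterexample: let $f$ be the only unmatched vertex, attach two stems $f\!-\!d_1$ (unmatched), $d_1\!-\!d_2$ (matched) and $f\!-\!a_1$ (unmatched), $a_1\!-\!a_2$ (matched), put a pentagon (a tenacity-$9$ blossom) on $d_2$ and another on $a_2$, and join a vertex of evenlevel $6$ of one pentagon to a vertex of evenlevel $6$ of the other by an unmatched edge; that joining edge is the unique bridge of tenacity $13$, and its endpoints have tenacity $9$. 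Take $v=a_1$, so $\t(v)=13$ and $\mx(v)=12$; the $\mx(v)$ path $p$ runs up the $d$-stem, through the first pentagon (crossing that pentagon's own tenacity-$9$ bridge), across the tenacity-$13$ bridge, through the second pentagon (crossing its tenacity-$9$ bridge), and down to $a_1$. On this $p$ the last min-honest vertex $u^{*}$ is the pentagon vertex of evenlevel $4$ just before the first pentagon's bridge, so your $e$ is a bridge of tenacity $9$, not $13$. Moreover the endpoints of the true tenacity-$13$ bridge occur on $p$ at prefix lengths equal to the maxlevel of one endpoint and to \emph{neither} level of the other, and both have tenacity $<t$; so your tenacity computation (which would force $\t(w^{*})=t$) and your uniqueness step (which assumes an endpoint of any tenacity-$t$ bridge on $p$ is min-honest and hence equals $u^{*}$) both fail. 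This is exactly why the paper cannot work edge-locally: it first restricts to the $\e(b;v)$ part of $p$ via Lemma \ref{lem.concat}, uses Theorem \ref{thm.honest} to make all tenacity-$t$ vertices on it BFS-honest, takes $a$ (highest min-honest such vertex, or $b$) and $c$ (lowest max-honest one), and then uses Theorem \ref{thm.path} to identify the bridge endpoints $a',c'$ as the extreme vertices on $p$ with iterated base $a$, respectively $c$ -- explicitly allowing them to be buried inside sub-blossoms of tenacity $<t$ -- computing $\t(a',c')=t$ from $\e(a')=\e(a)+\e(a;a')$ and $\e(c')=\e(c)+\e(c;c')$ rather than from $\t(a')$ or $\t(c')$.

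A secondary but real flaw: your lower bound $\o(w^{*})\geq \e(u^{*})+1$ rests on ``the standard fact that alternating walks reduce to alternating paths of no greater length.'' No such fact holds for alternating paths in non-bipartite graphs; the spliced $\o(w^{*})$ path may intersect $p[w^{*} \ \mbox{to} \ v]$, and shortcutting at an intersection generally destroys alternation. This failure is precisely the failure of BFS-honesty that motivates the paper's machinery of flowers, bases and blossoms (Lemmas \ref{lem.flower}--\ref{lem.separator}, Theorem \ref{thm.path}), so any repair of your argument would have to invoke that machinery rather than a generic walk-to-path reduction.
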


\begin{proof}
Let $p$ start at unmatched vertex $f$. If $t < l_m$, let $\base(v) = b$. By Lemma \ref{lem.concat}, $p$ consists of an $\e(b)$ path concatenated 
with an $\e(b; v)$ path. Denote the latter by $q$. If $t = l_m$, let $q = p$.

By Theorem \ref{thm.honest}, each vertex $u$ of tenacity $t$ on $q$ is BFS-honest w.r.t. $p$. 
Let us partition these vertices into two sets: $S_1$ ($S_2$) consists of vertices $u$ such that $|p[f \ \mbox{to} \ u]| = \mn(u)$
($ = \mx(u)$). Let $S_1^{'} = S_1 \cup \{ b \}$. Clearly $v \in S_2$. Hence $S_1^{'}$ and $S_2$ are both non-empty. Let $a$ be the vertex 
in $S_1^{'}$ having the largest minlevel and $c$ be the vertex in $S_2$ having the smallest maxlevel. Now there are two cases.

{\bf Case 1:} $a$ and $c$ are adjacent on $p$ and $(a, c)$ is a matched edge. By Lemma \ref{lem.t-matched}, $\t(a) = \t(c) = \t(a, c)$.
Furthermore, for both $a$ and $c$, their minlevel is their oddlevel, therefore $(a, c)$ is not a prop. Hence it is a bridge of tenacity $t$.

{\bf Case 2:} In the remaining case, $a$ and $c$ must both be outer vertices, and in general, there may be several vertices of tenacity 
less than $t$ between $a$ and $c$ on $p$. By Theorem \ref{thm.path}, the first iterated base of these vertices having tenacity at least $t$ 
must be $a$ or $c$. Furthermore, by the third statement of Theorem \ref{thm.path}, the ones having base $a$ must be contiguous on $p$ and so 
must be the ones having base $c$. Let $a'$ be the last vertex on $p$ whose iterated base is $a$; if there is no such vertex, let $a' = a$.
Similarly, let $c'$ be the first vertex on $p$ whose iterated base is $c$; if there is no such vertex, let $c' = c$.
Then $(a', c')$ will be an unmatched edge of $p$. We will show that it is a bridge of tenacity $t$.

By Theorem \ref{thm.path}, $p[a \ \mbox{to} \ a']$ is an $\e(a; a')$ path and $p[c \ \mbox{to} \ c']$ is an $\e(c; c')$ path.
Now, $\t(a', c') = \e(a') + \e(c') + 1$. Substituting $\e(c) = t - \o(u)$, $\e(a') = \e(a) + \e(a; a')$ and $\e(c') = \e(c) + \e(c; c')$,
we get:
\[  \t(a', c') = \e(a) + \e(a; a') + t - \o(u) + \e(c; c') + 1 \ \ = \ \ t .\]
Clearly, $a$ gets its minlevel from its matched neighbor and if $a' \neq a$, $a'$ gets its minlevel from the blossom $\CB_{a, t-2}$.
Similarly, $c$ gets its minlevel from its matched neighbor and if $c' \neq c$, $c'$ gets its minlevel from the blossom $\CB_{c, t-2}$.
Therefore $(a', c')$ is not a prop. Hence it is a bridge of tenacity $t$.

Finally, we show that none of the remaining edges on $q$ is a bridge of tenacity $t$. Consider an edge $(i, j)$ on $p[b \ \mbox{to} \ a]$,
with $i$ below $j$ on $p$. If $\t(j) = t$ then $(i, j)$ must be a prop and if $t(j) < t$ then $j$ lies in a blossom nested in $\bt$
and $t(i, j) < t$. A similar argument holds for the edges on $p[c \ \mbox{to} \ v]$.
\end{proof}

\definition{(The support of a bridge)}
Let $(u, v)$ be a bridge of tenacity $t \leq l_m$. Then, its support is defined to be $\{w ~|~ \t(w) = t \ \mbox{and} \ \exists \ \mbox{a} \
\mx(w) \ \mbox{path containing} \ (u, v) \}$.

In the graph of Figures \ref{fig.BFSH}, \ref{fig.verten} and \ref{fig.edgeten}, the supports of the bridges of tenacity $\alpha, \kappa$
and $\tau$ are the set of vertices of tenacity $\alpha, \kappa$ and $\tau$, respectively.

\section{Double depth first search}
\label{sec.DDFS}

Double depth first search (DDFS) consists of
two coordinated depth first searches. The MV algorithm executes DDFS on the given graph, $G$; however, for ease of
exposition we will describe it in the simplified setting of a directed, layered graph $H$.

{\bf The layered graph:}
The vertices of $H$ are partitioned into $h + 1$ layers, $l_h, \ldots l_0$, with $l_h$ being the highest and $l_0$ the 
lowest layer. Each directed edge runs from a higher to a lower layer. Layer $l_h$ contains two vertices, $r$ and $g$, for
{\em red} and {\em green}. The edges of $H$ are such that from each vertex there is a path to a vertex in $l_0$. A vertex $v$ will be called
a {\em bottleneck} if every path from $r$ to $l_0$ and every path from $g$ to $l_0$ contains $v$. Observe that $v$ may be in layer $l_0$.
Among the bottlenecks, if there are any, the one having highest level will be called the {\em highest bottleneck}. If so, we will denote it 
by $b$. Let $V_b$ ($E_b$) be the set of all vertices (edges) that lie on all paths from $r$ and $g$ to $b$. 
If there are no bottlenecks, there must be distinct vertices $r_0$ and $g_0$ in layer $l_0$ and disjoint paths from $r$ to $r_0$ 
and $g$ to $g_0$.  If so, let $E_p$ be the set of all edges that lie on all paths from $r$ to $r_0$ and from $g$ to $g_0$.

{\bf The objective of DDFS:}
The purpose of DDFS is to find the highest bottleneck if one exists, and otherwise, to find distinct vertices $r_0$ and $g_0$ in layer $l_0$ 
and disjoint paths from $r$ to $r_0$ and $g$ to $g_0$. The running time of DDFS needs to be a function of the output in the following
manner: in the former case, DDFS needs to run in time $O(|E_b|)$ and in the latter case, it needs to run in time $O(|E_p|)$. 
In the former case, DDFS also needs to partition the vertices in $V_b - \{b\}$ into two sets $R$ and $G$, 
with $r \in R$ and $g \in G$. Furthermore DDFS needs to find two trees, $T_r$ and $T_g$, rooted at $r$ and $g$, such that the set of 
vertices visited by them is $R \cup \{b\}$ and $G \cup \{b\}$, respectively. 
The two trees are found by the two DFSs, called {\em red DFS} and {\em green DFS}, respectively. 

{\bf The two DFSs and their coordination:}
The two DFSs maintain their own stacks, $S_r$ and $S_g$, which start with $r$ and $g$, respectively. At any point
in the search, each stack contains the vertices that have been visited by the corresponding DFS but have not yet backtracked from. 
Each DFS performs the usual operations, with one important modification. The latter is described below when we give the rules for coordinating 
the two DFSs. Because edges in $H$ go from higher to lower levels, neither DFS will encounter back edges. The two DFSs start with their
center of activity at $r$ and $g$, respectively. Assume that the center of activity of a DFS is at $u$ and it searches along
edge $(u, v)$. If $v$ is not yet visited by either search, it pushes $v$ onto its stack and moves its center of activity to $v$.
In this case, $u$ is declared {\em parent of} $v$. If $v$ is already visited by either search, it considers the next unsearched edge incident 
at $u$ -- see below an exception, which is also the important modification mentioned above.
If all edges incident at $u$ have already been searched, it pops $u$ from its stack and moves the center of activity to the parent of $u$. 

We next give the rules for coordinating the two DFSs. To meet the running time requirement, we posit that if $b$ is the 
highest bottleneck in $H$, then neither DFS will search along any edges below $b$. This gives our first rule: as soon as 
the center of activities of the two DFSs are at different levels, the one that is higher must move to catch up. If both are at the
same level, we adopt the convention that red moves first. The exception mentioned above happens when one DFS searches along edge $(u, v)$ 
and $v$ happens to be the center of activity of the other DFS. In this case, $v$ could potentially be a bottleneck and the two DFSs first 
need to determine whether it is. Furthermore, if $v$ is not a bottleneck, the two DFSs need to determine which tree gets $v$.

{\bf When the two DFSs meet at a vertex:}
The procedure followed by the two DFSs at this point is the following, independent of which one got to $v$ first.
Let us assume that the red and green DFSs reached $v$ via edges $(v_r, v)$ and $(v_g, v)$, respectively.
First the green DFS backtracks from $v$ and tries to reach a vertex, say $w$, with $w \neq v$ and $\lv(w) \leq \lv(v)$. 
If green succeeds, red moves its center of activity to $v$ and it adds $v$ to $R$ and edge $(v_r, v)$ to $T_r$, and DDFS proceeds. 
If the green fails, its stack, $S_g$, must be empty. Next, the red DFS backtracks from $v$ and tries to reach a vertex, say $w$, 
with $w \neq v$ and $\lv(w) \leq \lv(v)$. If red succeeds, green moves its center of activity to $v$; however, it does not push $v$ onto $S_g$, 
since it has already backtracked from $v$. In addition, it adds $v$ to $G$ and edge $(v_g, v)$ to $T_g$, and DDFS proceeds. 
If red also fails, then its stack also must be empty and $v$ is indeed the required bottleneck. If so, $v$ is added to both
$R$ and $G$ and edges $(v_r, v)$ and $(v_g, v)$ are added to $T_r$ and $T_g$, respectively, and DDFS halts. 
If DDFS does not find a bottleneck, then eventually the two DFSs must find distinct vertices in $l_0$. 

\begin{theorem}
\label{thm.DDFS}
DDFS accomplishes the objectives stated above in the required time.
\end{theorem}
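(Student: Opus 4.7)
The plan is to prove correctness and the running time bound in tandem by maintaining two invariants throughout DDFS: (I1) every vertex visited by either DFS lies in $V_b$ when a bottleneck exists, and on one of the sought disjoint paths otherwise; and (I2) the red and green centers of activity are at the same level, or the red one is strictly higher and about to move. Invariant (I2) is immediate from the ``higher catches up'' coordination rule. For (I1), in the bottleneck case, I would show that any edge $(v,u)$ out of a vertex $v \in V_b$ has $u \in V_b$: otherwise $u$ reaches $l_0$ by some path avoiding $b$ (since every vertex reaches $l_0$ and $u \notin V_b$), and prepending $r \to \cdots \to v \to u$ gives a red-rooted path to $l_0$ avoiding $b$, contradicting that $b$ is a bottleneck. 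Symmetrically for green. In the no-bottleneck case, (I1) is argued dually using that any visited vertex can still be extended to some $l_0$ vertex distinct from the other color's target.

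Next I would show that the meeting protocol correctly identifies the highest bottleneck $b$. When both DFSs are at a common vertex $v$, the protocol attempts to detour each DFS below $\lv(v)$ without reusing $v$. If both detours fail, every red-rooted and every green-rooted path from the current stacks to a strictly lower level must pass through $v$, so together with the portions of $r$-to-$v$ and $g$-to-$v$ already committed on the stacks (which by (I2) cover all ways red and green got this far), $v$ is forced to be a bottleneck. By (I2) and the downward progression of both DFSs, the first vertex at which this double failure occurs is the highest bottleneck. If only green's detour fails, the explored fan-out below $v$ witnesses an alternative lower vertex green can use later, so awarding $v$ to red is safe; the symmetric case is analogous. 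Thus the partition $R, G$ and trees $T_r, T_g$ are well-defined, and vertex-disjointness of $T_r$ and $T_g$ outside $b$ follows because a vertex is pushed by at most one DFS, and the meeting protocol explicitly awards each non-bottleneck meeting vertex to exactly one color.

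For the running time, each edge is examined at most $O(1)$ times per DFS, so it suffices to show that every edge ever examined lies in $E_b$ (resp.\ $E_p$). This follows from invariant (I1): every examined edge is incident to a visited vertex, hence to a vertex in $V_b$, and by the argument above its other endpoint is also in $V_b$; moreover because the level synchronization prevents either DFS from stepping below the eventual bottleneck, no edge strictly below $b$ is touched. The main obstacle I anticipate is accounting for the work done during \emph{failed} detours at non-bottleneck meeting vertices, since these look like wasted exploration. The resolution is that such a failed detour only traverses edges whose endpoints are already in the red-or-green-visited set (and hence in $V_b$), and each such edge is searched at most twice across the entire run of DDFS, once per DFS; thus the failed-detour work is subsumed in the $O(|E_b|)$ (or $O(|E_p|)$) bound. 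Writing this charging scheme out carefully is the most delicate part of the proof.
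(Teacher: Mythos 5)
Your overall plan -- analyze the meeting protocol for correctness, then bound the running time by showing that every searched edge lies in $E_b$ (resp.\ $E_p$) and is examined $O(1)$ times -- is the same as the paper's, and the charging of the failed detours and the closure argument for $V_b$ are fine. However, the heart of the correctness claim, namely that a double failure at a meeting vertex $v$ can occur only if $v$ is a bottleneck and that the first such vertex is the \emph{highest} bottleneck, is not established by your invariants. Your (I2) is misstated (the coordination rule is symmetric: whichever center is higher moves, and after a tie red moves first and becomes the lower one, so red is not always the higher), and in any case it is not the fact needed. The fact the paper's proof rests on is: once some DFS has reached layer $l$, at every later moment at least one center of activity is at layer $l$ or lower; hence, when the two DFSs meet at $v$ in layer $l_j$, no vertex in layer $l_j$ or below, other than $v$, has yet been explored by either search. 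This is exactly what rules out the scenario your argument glosses over: an escape route around $v$ that exists in $H$ but is blocked because its vertices were already visited (by either color) and are therefore skipped during the detour. Your justification that the stack contents ``cover all ways red and green got this far'' is not true -- each stack is a single path -- and does not address this blocking issue. With the paper's observation one can argue: at the moment of double failure the jointly visited set is closed under out-edges and contains no vertex of layer $\le l_j$ except $v$, so every path from $r$ or $g$ to $l_0$ passes through $v$; and since the synchronized searches cannot both slip past a higher bottleneck, $v$ is the highest one.

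Separately, you have the award rule backwards: in the protocol, when green's detour \emph{fails} and red's succeeds, $v$ is given to green (green's center moves to $v$), precisely because green can now reach lower layers only through $v$. If, as you write, $v$ were awarded to red in that situation, green's search (whose stack is empty) could never continue, and $T_g$ would not reach the bottleneck, violating the required output. Finally, your (I1) in the no-bottleneck case is stronger than what is true: visited vertices need not lie on the final pair of disjoint paths, since the searches backtrack; for the $O(|E_p|)$ bound it suffices (and is what the paper uses) that every searched edge lies on some path from $r$ or $g$ down to layer $l_0$, each such edge being examined at most twice.
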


\begin{proof}
The main difference between a usual DFS and the two DFSs implemented in DDFS arises when the two DFSs meet at a vertex, say $v$ at layer $l_j$. 
Observe that once one DFS reaches a vertex at layer $l$, say, at every future point, the center of activity of at least one DFS will
be at level $l$ or lower. Therefore, since both DFSs just moved from higher layers to layer $l_j$, no other vertices at layer $l_j$ or 
lower have yet been explored. Therefore, if $v$ is not a bottleneck, there is an alternative path that reaches layer $l_j$ or lower and which has
not been explored so far. Since at this point both DFSs will consider all ways of finding such an alternative,
they will find one and DDFS will proceed. On the other hand, if $v$ is a bottleneck, there is no such alternative, and after considering all 
possibilities, both stacks will become empty and $v$ will be declared the bottleneck.

If there is no bottleneck in $H$, by the arguments given above, the two DFSs will not get stuck at any vertex. Hence, one of them must reach a 
vertex at layer $l_0$, say $v$. Since $v$ is also not a bottleneck, even if the two DFSs meet at $v$, one of them will find a way of reaching 
another vertex at layer $l_0$.

Clearly, the only edges searched by DDFS are those in $E_b$ in the first case and $E_p$ in the second. Furthermore, each such edge is examined 
at most twice, once in the forward search and once during backtrack. Hence DDFS accomplishes the stated objectives in the required time.
\end{proof}

\section{The algorithm}
\label{sec.alg}

The MV algorithm can be viewed as an intertwining of an alternating BFS (similar to the one used for bipartite graphs) with a number of DDFSs.
The first part of the algorithm for a phase finds the evenlevels and oddlevels of vertices and marks the graph in an appropriate
manner. This part is organized in search levels, the first one being search level 0. Let $j_m = (l_m - 1)/2$, where $l_m$ is the length of 
a minimum length augmenting path in $G$. Then during search level $j_m$, a maximal set of augmenting paths of length $l_m$ is found
and the current phase terminates. If $l_m = \infty$, i.e., there are no augmenting paths, the algorithm will reach a search level at which it 
has found the minlevels and maxlevels of all vertices reachable via alternating paths from the unmatched vertices. At this point it will
halt and output the current matching, which will be maximum.

\subsection{Synchronization of events, and procedures MIN and MAX}
\label{sec.sync}

As stated in the Introduction, a key idea in the MV algorithm is the precise synchronization of events -- we describe this next.
For convenience, we will assume that at the beginning of a phase, all vertices are assigned a temporary minlevel of $\infty$.
During search level $i$ procedure MIN finds all vertices, $v$, having $\mn(v) = i+1$ and assigns these vertices their minlevels.
For each edge MIN scans, it also determines whether it is a prop or a bridge.
After MIN is done, procedure MAX finds all vertices, $v$, having $\t(v) = 2i+1$ and assigns these vertices their maxlevels; their minlevels
are at most $i$ and are already known. See Algorithm \ref{alg} for a summary of the main steps.

Observe that in procedure MIN, if the predicate ``$\mn(v) \geq i+1$'' is true, then either the minlevel of $v$ is still $\infty$ or it has
already been found to be $i+1$; even in the  latter case, $u$ is made a predecessor of $v$.


\bigskip

\noindent

\fbox{
\begin{algorithm}{\label{alg} \ \ \ \ \ \ \ \ \ \ \ \  At search level $i$:}

\bigskip

\step
\label{step1}
{\bf MIN:}  \\

{\bf For} each level $i$ vertex, $u$, search along appropriate parity edges incident at $u$. \\

\begin{description}
\item

{\bf For} each such edge $(u, v)$, if $(u, v)$ has not been scanned before {\bf then}

\begin{description}
\item
{\bf If} $\mn(v) \geq i+1$ {\bf then} \\

\begin{description}
\item
$\mn(v) \la i+1$

\item
Insert $u$ in the list of predecessors of $v$.

\item
Declare edge $(u, v)$ a prop.

\end{description}

\item
{\bf Else} declare $(u, v)$ a bridge.

\end{description}

\item
{\bf End For}

\end{description}

{\bf End For}

\bigskip

\step
{\bf MAX:}  \\

Find the support of each bridge of tenacity $2i+1$ using DDFS. \\
{\bf For} each vertex $v$ in the support: \\  


\begin{description}
\item
$\mx(v) \la 2i+1 - \mn(v)$

\end{description}

{\bf End For}
\medskip

\end{algorithm}
}

\bigskip


In each search level, procedure MIN executes one step of alternating BFS as follows. If $i$ is even (odd), it searches from all vertices, 
$u$, having an evenlevel (oddlevel) of $i$ along incident unmatched (matched) edges, say $(u, v)$. If edge $(u, v)$ has not been scanned before, 
MIN will determine if it is a prop or a bridge. If $v$ already been assigned a minlevel of at most $i$, then $(u, v)$ is a bridge.
Otherwise, $v$ is assigned a minlevel of $i + 1$, $u$ is declared a predecessor of $v$ and edge $(u, v)$ is 
declared a prop. Note that if $i$ is odd, $v$ will have only one predecessor -- its matched neighbor. But if $i$ is even, $v$ may have one or more
predecessors. 

The algorithm also constructs for each odd number $t$, the list of bridges of tenacity $t$, $B(t)$. For each edge that is declared a bridge, 
as soon as the appropriate levels of its two endpoints are known, as stated in Definition \ref{ref.tenacity}, its tenacity is ascertained and 
it is inserted into the appropriate list. Lemma \ref{lem.know} below proves that by the end of execution of procedure MIN at search level 
$i$, the algorithm would have determined the tenacity of all bridges of tenacity $2i + 1$.

At this point, procedure MAX uses DDFS to find the support of each bridge in the set $B(2i+1)$. This yields all vertices of tenacity $2i+1$ and 
their maxlevel can be calculated as indicated in Algorithm \ref{alg}. Knowing maxlevels of these vertices helps the algorithm determine the
tenacity of bridges incident at them.

\begin{figure}[ht]
\begin{minipage}[b]{0.5\linewidth}
\centering
\includegraphics[width=\textwidth]{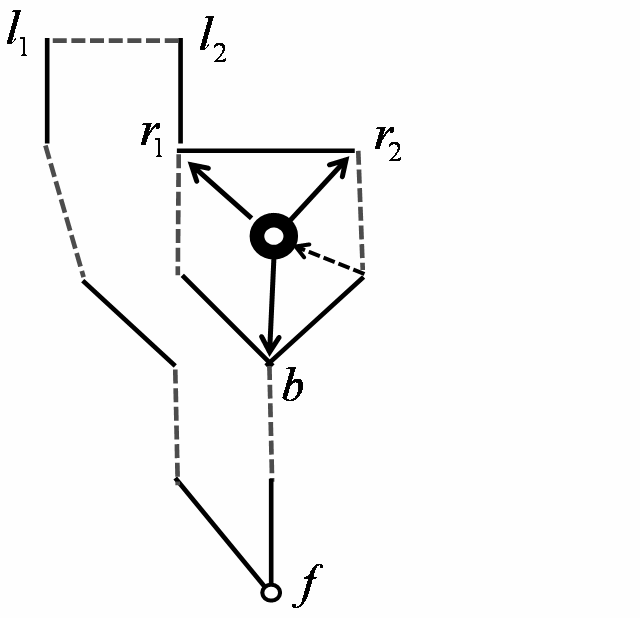}
\caption{A new petal-node is created after DDFS on bridge $(r_1, r_2)$.}
\label{fig.DDFSG}
\end{minipage}
\hspace{0.5cm}
\begin{minipage}[b]{0.5\linewidth}
\centering
\includegraphics[width=\textwidth]{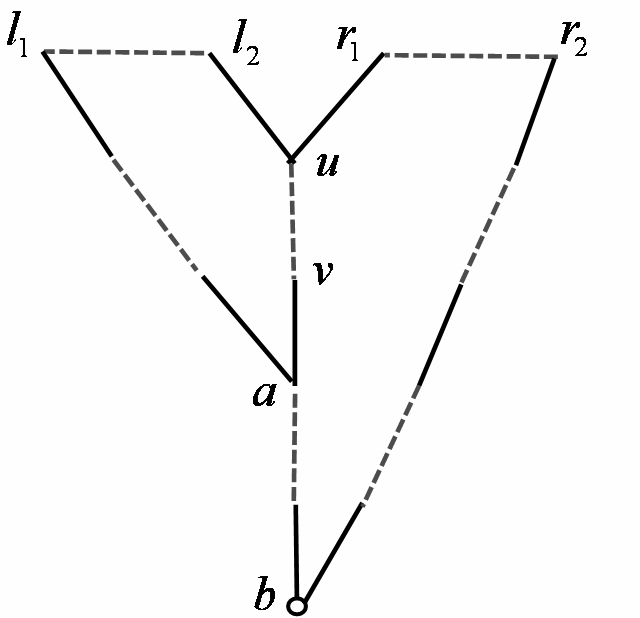}
\caption{Vertices $u$ and $v$ are in the support of both bridges of tenacity of 11.}
\label{fig.disjoint}
\end{minipage}
\end{figure}

\subsection{Running DDFS on $G$}
\label{sec.DDFSG}

We now describe how DDFS is executed on the given graph, $G$; as stated in Section \ref{sec.DDFS}, the graph $H$ is a considerably 
simplified version of $G$. For graph $H$, we specified its vertices, including a layer for each vertex, and its edges. We will specify the
vertices and edges gradually below; however, at the outset we state that the layer of any vertex will be its minlevel.

In the graph of Figure \ref{fig.DDFSG}, MAX will perform DDFS on bridge $(r_1, r_2)$, of tenacity 9, during search level 4, by
starting two DFSs at vertices $r_1$ and $r_2$, respectively. We first state a preliminary rule for determining the edges of the
corresponding graph $H$ -- the preliminary rule will suffice for this first DDFS. If the center of activity of a DFS is at $u$ then it 
must search along all edges $(u, v)$ where $v$ is a predecessor of $u$.

Clearly, this DDFS will terminate with the bottleneck $b$. It will visit the four vertices which constitute the support of bridge 
$(r_1, r_2)$; observe that $b$ is not in the support of bridge $(r_1, r_2)$. These four vertices form the blossom $\CB_{b, 9}$. 
However, in general, DDFS may not find an entire blossom but only a part of it. The set of vertices it does find will be called
a {\em petal}; a blossom, in general, is a union of petals. Thus the four vertices of tenacity 9 are said to belong to the new petal.

At this point, the algorithm creates a new node, called a {\em petal-node}; this has the shape of a donut in 
Figure \ref{fig.DDFSG}. The four vertices of the new petal point to the petal-node; to avoid cluttering Figure \ref{fig.DDFSG}, only one
vertex is pointing to the petal-node. The bottleneck, $b$, is called the {\em bud} of the petal. 
The new petal-node points to the two endpoints of its bridge, $r_1$ and $r_2$, and to its bud, $b$; the reason for the former will be
clarified in Section \ref{sec.finding} and for the latter below.

\begin{figure}[ht]
\begin{minipage}[b]{0.5\linewidth}
\centering
\includegraphics[width=\textwidth]{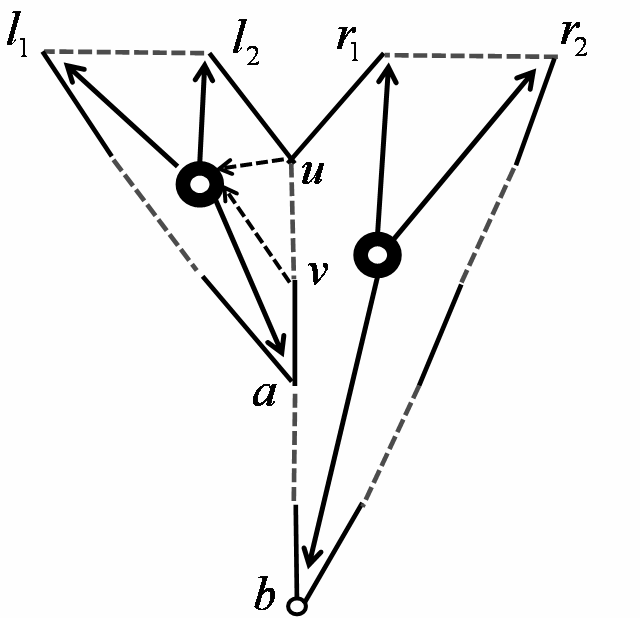}
\caption{DDFS is performed on the left bridge first, then right.}
\label{fig.left}
\end{minipage}
\hspace{0.5cm}
\begin{minipage}[b]{0.5\linewidth}
\centering
\includegraphics[width=\textwidth]{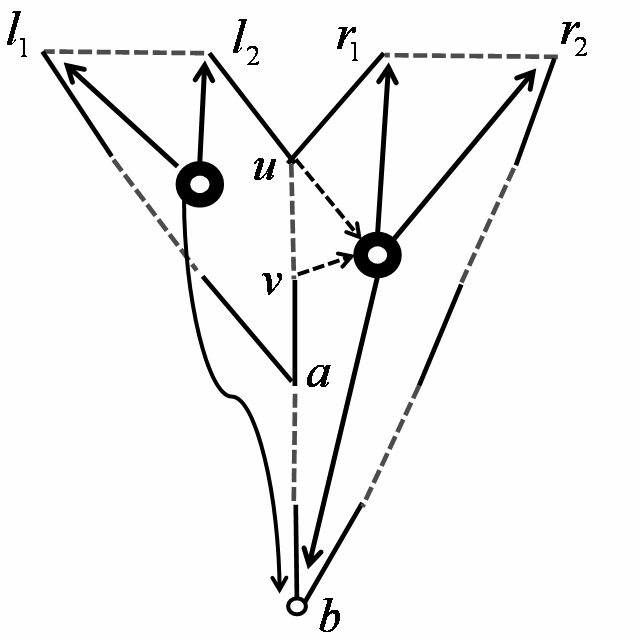}
\caption{DDFS is performed on the right bridge first, then left.}
\label{fig.right}
\end{minipage}
\end{figure}

\definition{(The bud of a vertex)}
If vertex $v$ is in a petal and the bud of this petal is $b$ then will say that the {\em bud of} $v$ is $b$, written as
$\bd(v) = b$; $\bd(v)$ is undefined if $v$ is not in a petal. Next we define the function $\bds(v)$. If $v$ is not in any petal, 
$\bds(v) = v$ else $\bds(v) = \bds(\bd(v))$.

Observe that the bud of a petal is always an outer vertex. Unlike $\base(v)$ which was defined graph-theoretically and independent of the run 
of the algorithm, $\bd(v)$ depends on the manner in which the algorithm breaks ties; for examples, see below.  
Additionally, $\bds(v)$ not only depends on the particular run of the algorithm, it keeps changing as the algorithm proceeds. At any point
in the algorithm, its latest value is used.

In Figure \ref{fig.DDFSG}, a second DDFS is performed on bridge $(l_1, l_2)$, of tenacity 11, during search level 5. To describe this DDFS,
we need to state the complete rule for defining edges of the corresponding graph $H$: if the center of activity of a DFS is at $u$ and it 
searches along edge $(u, v)$, where $v$ is a predecessor of $u$, then DFS must move the center of activity to $\bds(v)$. Thus, when the 
DFS which starts at $l_2$ searches along edge $(l_2, r_1)$, it moves the center of activity to $b$. It does so by following the pointer 
from $r_1$ to its petal-node and from the petal-node to the bud of this petal-node. In the process, the center of activity has jumped down 
more than one layer. The edges of $H$ were allowed to jump down an arbitrary number of layers in order to model this. 

This DDFS will end with bottleneck $f$. The new petal is precisely the support of bridge $(l_1, l_2)$ and consists of the eight
vertices of tenacity 11 in Figure \ref{fig.DDFSG}, which includes $b$. Once again, a new petal-node is created and these eight
vertices point to it. In addition, the petal-node points to $l_1, l_2$ and to $f$.
Observe that the blossom $\CB_{f, 11}$ consists of these eight vertices and the four vertices of blossom $\CB_{b, 9}$.

Next consider the graph of Figure \ref{fig.disjoint} which has two bridges of tenacity 11, $(l_1, l_2)$ and $(r_1, r_2)$.
Observe that vertices $u$ and $v$ are in the support of both these bridges. Hence, the support of bridges need not be disjoint.
Observe also that the base of these vertices is not $a$ but $b$; note that $\t(a) = 11$. Moreover, there is only one blossom in this 
graph, i.e., $\CB_{b, 11}$.

\begin{figure}[ht]
\begin{minipage}[b]{0.5\linewidth}
\centering
\includegraphics[width=\textwidth]{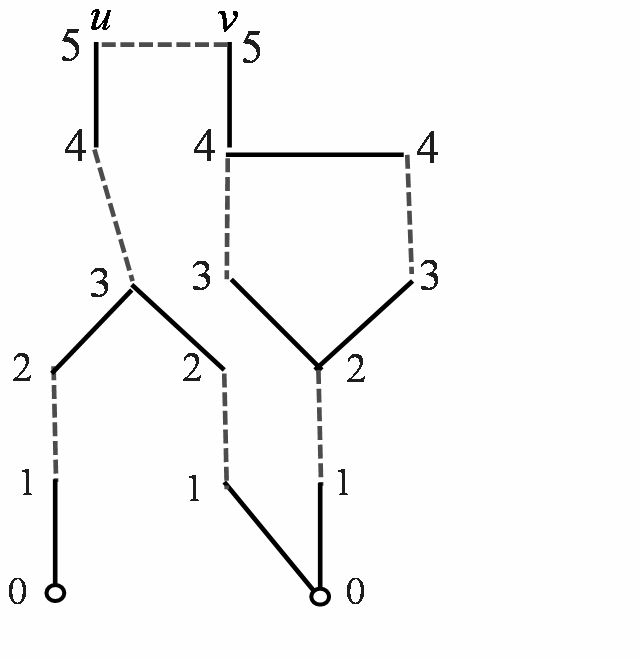}
\caption{DDFS on the bridge of tenacity 11 ends with the two unmatched vertices.}
\label{fig.2paths}
\end{minipage}
\hspace{0.5cm}
\begin{minipage}[b]{0.5\linewidth}
\centering
\includegraphics[width=\textwidth]{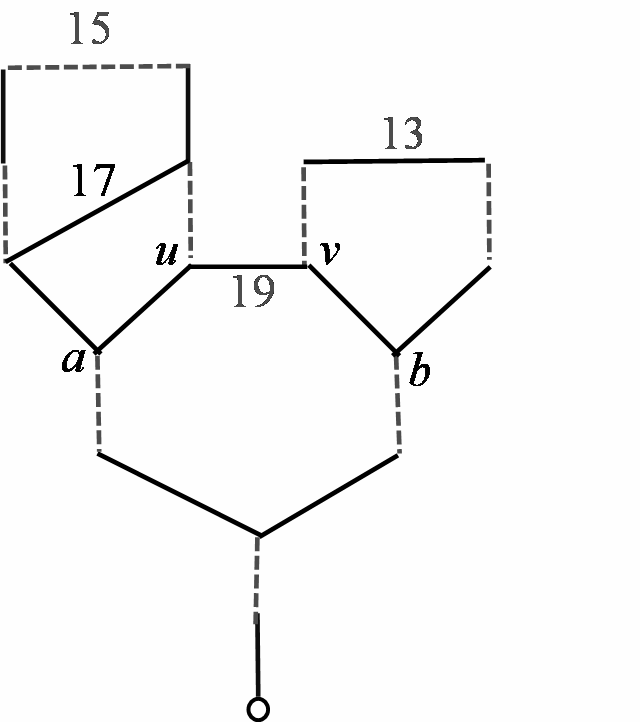}
\caption{DDFS performed on bridge $(u, v)$ starts the two DFSs at $a$ and $b$, respectively.}
\label{fig.ten19}
\end{minipage}
\end{figure}

MAX will perform DDFS on these two bridges in arbitrary order during search level 5. Figure \ref{fig.left} shows the result of performing 
DDFS on $(l_1, l_2)$ before $(r_1, r_2)$. The first DDFS will end with bottleneck $a$. The new petal is precisely the support of 
$(l_1, l_2)$, consisting of six vertices of tenacity 11, including $u$ and $v$. The second DDFS, performed on $(r_1, r_2)$, will end with 
bottleneck $b$ and the new petal is precisely the difference of supports of $(r_1, r_2)$ and $(l_1, l_2)$, i.e., the remaining eight vertices 
of tenacity 11, including $a$. 
During this DDFS, when the DFS starting at $r_1$ searches along edge $(r_1, u)$, it realizes that $u$ is already in a petal and it jumps
to $a$, the bud of this petal. This ensures that a vertex is included in at most one petal. In general, at the end of DDFS on
bridge $(u, v)$, the new petal will be the support of $(u, v)$ minus the supports of all bridges processed thus far in this search level. 

Figure \ref{fig.right} shows the result of performing DDFS on $(r_1, r_2)$ before $(l_1, l_2)$. The first petal is the support of $(r_1, r_2)$,
i.e., 10 vertices of tenacity 11, including $a$, $u$ and $v$. The second petal is the difference of supports of $(l_1, l_2)$ and $(r_1, r_2)$,
i.e., 4 vertices of tenacity 11. In both cases, the union of the petals found is the blossom  $\CB_{b, 11}$.

Figure \ref{fig.2paths} illustrates the second way in which DDFS may end, i.e., instead of a bottleneck, it finds two unmatched vertices;
this happens when DDFS is performed on bridge $(u, v)$ of tenacity 11.

We need to point out one final rule: if DDFS is performed on bridge $(u, v)$, the centers of activity of the two DFSs must start at
$\bds(u)$ and $\bds(v)$. This rule was vacuous so far, but will be applicable while processing bridge $(u, v)$ in Figure \ref{fig.ten19}.
The tenacity of this bridge is 19 and it will be processed by MAX in search level 9. At that point in the algorithm, the bridges
of tenacity 15 and 13 would already be processed and $u$ and $v$ will already be in petals, with $\bd(u) = \bds(u) = a$ and 
$\bd(v) = \bds(v) = b$. Hence the centers of activity of the two DFSs will start at $a$ and $b$, respectively. 
Similarly, in Figure \ref{fig.early2}, when DDFS is performed on bridge $(u, v)$ of tenacity 15, $\bds(u) = u$ and
$\bds(v)$ will be the unmatched vertex.

All bridges considered so far had non-empty supports; however, this will not be the case in a typical graph. As an example,
consider the edge of tenacity 17 in Figure \ref{fig.ten19}. Since it does not assign minlevels to either of its endpoints, it is not a prop
and is therefore a bridge. Clearly the support of this bridge is $\emptyset$. DDFS will discover this right away since the $\bds$ of both of 
its endpoints is $a$. Clearly, DDFS needs to be run on all bridges, since that is the only way of determining whether the support of a given bridge
is empty. 

At the end of search level $i = (t-1)/2$, i.e., once MAX is done processing all bridges of tenacity $t$, all blossoms of tenacity $t$ can be identified
as follows. Let $\t(v) = t$ and let $\bds(v) = b$. The proof of this lemma is straightforward and is omitted.

\begin{lemma}
\label{lem.all}
$\base(v) = b$ and the set $S_{b, t}$ defined in 
Definition \ref{def.blossom}, for blossom $\bt$ is precisely $\{ u ~|~ \t(u) = t \ \mbox{and} \ \bds(u) = b \}$.
\end{lemma}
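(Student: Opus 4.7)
The plan is to prove the lemma by induction on the tenacity $t$ (equivalently, on the search level $i=(t-1)/2$), maintaining the following inductive invariant: after all bridges of tenacity $<t$ have been processed by MAX, for every vertex $w$ that has been placed in some petal, $\bds(w)$ equals the first iterated base of $w$ whose tenacity is at least $t$. The base case ($t$ equal to the smallest tenacity) is immediate because no petals exist yet, so $\bds$ is the identity on the relevant vertices. The inductive step amounts to checking that each new petal created at search level $i' < i$ has its bud equal to $\base(w)$ for every $w$ in that petal with tenacity exactly $2i'+1$, which is precisely the conclusion for the smaller tenacity; Lemma \ref{lem.bb} then propagates the identification $\bds(w)=\base^k(w)$ through the recursive definition $\bds(v)=\bds(\bd(v))$.

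With the invariant in hand, fix a vertex $v$ with $\t(v)=t$ and let $(u_1,u_2)$ be the first bridge of tenacity $t$ processed by MAX whose support contains $v$; let $b$ denote the bottleneck returned by DDFS on this bridge. By Theorem \ref{thm.bridge}, $(u_1,u_2)$ is the unique bridge of tenacity $t$ on some $\mx(v)$ path $p$, and by Lemma \ref{lem.concat} and Theorem \ref{thm.path}, $p$ factors through $\base(v)$, with every intermediate vertex either of tenacity $t$ and base $\base(v)$ or lying in a sub-blossom whose first iterated base of tenacity $\geq t$ is $\base(v)$ or a vertex of $S_{\base(v),t}$. The inductive invariant then ensures that the DDFS, which from its center $x$ searches along each predecessor $y$ of $x$ and jumps to $\bds(y)$, explores precisely the auxiliary graph whose paths are in bijection with $\mx(v)$ paths through $(u_1,u_2)$ after contracting previously-created petals. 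By the uniqueness in Theorem \ref{thm.base} and the fact that $\base(v)$ lies on every such path, $\base(v)$ is a bottleneck of this auxiliary graph; by the maximality of $\mx$ (no vertex of tenacity $>t$ can lie strictly between $\base(v)$ and $v$ on $p$ except further iterated bases), $\base(v)$ is the highest bottleneck. Theorem \ref{thm.DDFS} thus yields $b=\base(v)$, proving part (a).

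For the set equality in part (b), the inclusion $\{u:\t(u)=t,\ \bds(u)=b\}\subseteq S_{b,t}$ follows by applying (a) to each such $u$. For the reverse inclusion, suppose $\t(u)=t$ and $\base(u)=b$. By Theorem \ref{thm.bridge}, every $\mx(u)$ path contains a unique bridge of tenacity $t$; since MAX processes every bridge in $B(t)$, at least one such bridge has $u$ in its support and is processed in search level $i$. When the first such DDFS runs, $u$ is assigned to a petal whose bud equals $\base(u)=b$ by part (a), so $\bds(u)=b$ at the end of search level $i$.

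The main obstacle is the careful reconciliation of the DDFS auxiliary graph (with its $\bds$-jumps over previously formed petals) with the union of $\mx(v)$-paths through the bridge $(u_1,u_2)$ as described by Theorem \ref{thm.path}; once this correspondence is pinned down by the inductive invariant, the bottleneck characterization of Theorem \ref{thm.DDFS} matches the base characterization of Theorem \ref{thm.base}, and everything falls into place. The remaining bookkeeping—that a vertex can be added to at most one petal, and that DDFS is in fact executed on every bridge of tenacity $t$—is routine and is exactly what justifies calling the proof straightforward.
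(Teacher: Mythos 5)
The paper itself omits the proof of this lemma (it is declared ``straightforward''), so there is no official argument to compare against; judged on its own merits, your proposal has a genuine flaw at its central step. You claim that if $(u_1,u_2)$ is the \emph{first} bridge of tenacity $t$ processed whose support contains $v$, then the bottleneck returned by DDFS --- i.e.\ the bud of the new petal containing $v$ --- equals $\base(v)$, arguing that ``no vertex of tenacity $>t$ can lie strictly between $\base(v)$ and $v$,'' hence $\base(v)$ is the highest bottleneck. But a bottleneck need not have tenacity $>t$: the paper's own example (Figures \ref{fig.disjoint} and \ref{fig.left}) refutes the claim. There, the first DDFS, on bridge $(l_1,l_2)$ of tenacity $11$, terminates with bottleneck $a$, a vertex of tenacity $11$, and the vertices $u,v$ placed in that petal have $\base(u)=\base(v)=b\neq a$. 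So $\bd(v)=\base(v)$ is simply false in general, and Theorem \ref{thm.DDFS} cannot yield your part (a) as stated. What is true --- and what the lemma actually asserts --- is only that the \emph{iterated} bud $\bds(v)$, evaluated after \emph{all} bridges of tenacity $t$ have been processed, equals $\base(v)$: in the example, the later DDFS on $(r_1,r_2)$ places $a$ in a petal with bud $b$, and only then does $\bds(v)=\bds(a)=b$ hold. Your part (b) inherits the problem, since both inclusions invoke part (a). Your inductive invariant about tenacities $<t$ is the right kind of statement, but the induction step is exactly the bud-equals-base identification that fails.

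A correct argument has to reason about the chain of buds rather than a single DDFS: show that the bud of any petal created at search level $i=(t-1)/2$ is either a vertex of tenacity $>t$ or another tenacity-$t$ vertex with the same base as the petal's tenacity-$t$ vertices (every $\mx$-path of a supported vertex passes through the bud, so by Theorem \ref{thm.honest}, Theorem \ref{thm.base} and Theorem \ref{thm.path} the bud lies in $\CB_{\base(v),t}\cup\{\base(v)\}$ and, if of tenacity $t$, shares the base $\base(v)$); that this chain strictly decreases in level and hence terminates at a vertex of tenacity $>t$ by the end of the search level; and that the terminating vertex, being BFS-honest on the relevant $\mx$-paths, must be $\base(v)$ by the uniqueness in Theorem \ref{thm.base}. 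A secondary soft spot in your write-up is the asserted ``bijection'' between paths of the DDFS auxiliary graph (with $\bds$-jumps) and $\mx(v)$-paths through the bridge after contracting petals; this is the substantive content hiding behind the word ``straightforward'' and should be argued, not just announced.
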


Observe that if $\bds(v)$ is computed at the end of search level $j > i$, then it may not be $b$ anymore. However, it will be an
iterated base of $v$.

\begin{figure}[h]
\begin{center}
\includegraphics[scale = 0.4]{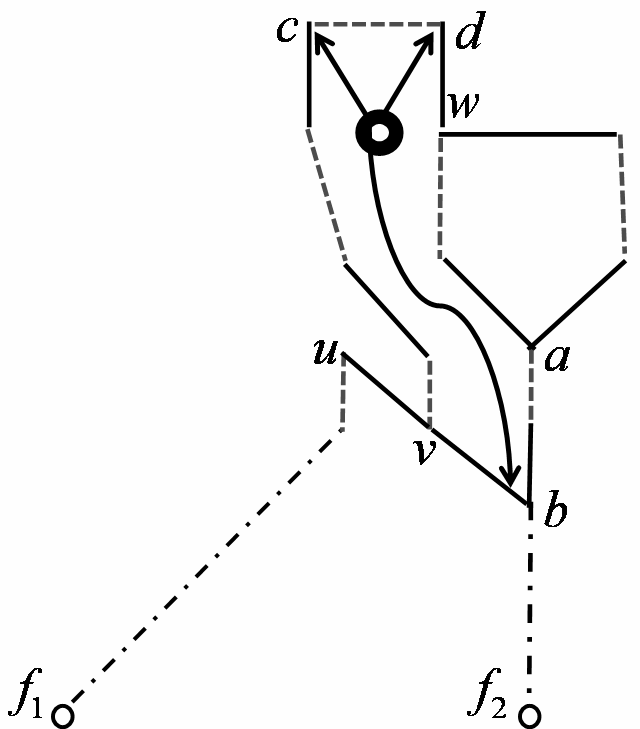}
\caption{Constructing a minimum length augmenting path between unmatched vertices $f_1$ and $f_2$.}
\label{fig.findpath}
\end{center}
\end{figure}

\subsection{Finding the augmenting paths}
\label{sec.finding}

As stated at the beginning of this Section, during search level $j_m$, where $l_m = 2 j_m + 1$ is the length of a minimum length 
augmenting path in $G$, a maximal set of such paths is found; observe that  $l_m$ is also the minimum oddlevel of an unmatched vertex. 
In contrast to previous search levels, in which all DDFSs will end with a bottleneck, in search level $j_m$ DDFS performed on certain 
bridges will end in two unmatched vertices. In this section, we give the procedure that is followed each time such a bridge is encountered; 
in effect, the operations of MAX have to be enhanced during search level $j_m$.

Besides the above-stated event, i.e., at a certain search level, DDFS ends in two unmatched vertices, another possibility is that
$G$ contains two or more unmatched vertices but it has no augmenting paths, i.e., the current matching is maximum (if so, no 
unmatched vertex will have a finite oddlevel). The algorithm will recognize this when it is done assigning minlevels and maxlevels to 
as many vertices as it could, i.e., it has explored the unmatched edges incident at all vertices having a finite evenlevel,
the matched edge incident at all vertices having a finite oddlevel, and has performed DDFS on all bridges of finite tenacity that it has
identified.

\subsubsection{Finding one augmenting path}
\label{sec.one}

In Figure \ref{fig.2paths}, when DDFS performed on bridge $(u, v)$ of tenacity 11 at search level 5, it ends with the two unmatched vertices. 
The next task is to find an augmenting path of length 11 between the two unmatched vertices and containing edge $(u, v)$. 
We will describe this via the graph in Figure \ref{fig.findpath}. In this figure, $\mn(u) > \mn(v)$ and hence edge $(u, v)$ is
a bridge. DDFS on this bridge terminates with the two unmatched vertices, $f_1$ and $f_2$. 
At this point, the stack of the DFS starting at $u$ ($v$) contains all the vertices of tenacity $l_m$ that 
lie on the part of the augmenting path from $u$ to $f_1$ ($v$ to $f_2$); observe that the bottom of the latter stack will contain $b = \bd(v)$. 
All vertices of tenacity less than $l_m$ that constitute such an augmenting path are missing; they lie in petals whose $\bds$s, which are
of tenacity $l_m$, sit on the two stacks. The procedure given below will find one such complete augmenting path by recursively ``opening'' 
the nested petals.

Let us show how to construct the path from $v$ to $f_2$. Since $\bd(v) = b$, we first need to find an $\e(b; v)$ path from $v$ to $b$.
The actions are different depending on whether $v$ is outer or inner; in this case $v$ is inner. Therefore, 
$\e(b; v) = \mx(b; v)$, i.e., the path must use the bridge of this petal, which is $(c, d)$. By jumping from $v$ to its petal-node,
the algorithm can get to the endpoints of this bridge. The ``red'' and ``green'' colors on the vertices of this petal, as assigned
by DDFS (see Section \ref{sec.DDFS}), indicate that $v$ was found via the DFS starting at $c$, say this is the red tree. The 
algorithm does a DFS on the red edges of this petal, starting from $c$, and finds a red path to $v$. Also, it does a DFS from $d$ on the green
edges to find a green path to $b$. 

By the rules set above, the DFS that started at $d$ must have skipped to $a = \bd(w)$ while searching along edge $(d, w)$.
Therefore, the green tree yields the ``path'' $d, a, b$. At this point, we need to recursively ``open'' the petal whose 
bud is $a$ and find an $\e(a; w)$ path in it. Again, we ask whether $w$ is outer or inner. This time the answer is ``outer'' and so
we simply keep picking predecessors of vertices until we get from $w$ to $a$. This path is inserted in the right place in the 
``path'' from $d$ to $b$. The path from $b$ to $f_2$ is obtained via the same process: following down predecessors and recursively finding 
paths through any petals that are encountered on the way.

\subsubsection{Finding a maximal set of paths}
\label{sec.maximal}

Section \ref{sec.one} gave the procedure for finding one minimum length augmenting path, say $p$. We now build on it to
find a maximal set of disjoint minimum length augmenting paths; recall that this was the objective of a phase. 
We first show how to identify the set of vertices that cannot be part of an augmenting path that is disjoint from $p$. 
These will be removed from the graph, and MAX will proceed until it encounters another bridge which makes DDFS end with two unmatched vertices.
The processes are repeated until all bridges of tenacity $l_m$ are processed. Since we do not remove any useful vertices, maximality is guaranteed.

First, all vertices of $p$ are removed; together with a vertex, all its incident edges are also removed. As a 
result, we may create vertices that have no more predecessors. Each such vertex $v$ is also removed since the current graph does not
contain a $\mn(v)$ path anymore. This process is continued until every remaining vertex, other than unmatched ones, has a predecessor.

Let us argue that the next time DDFS ends with two unmatched vertices, we are guaranteed to find an augmenting path between them.
The main question is, ``How are we sure that the procedure of Section \ref{sec.one} will be able to find appropriate paths through
blossoms that have lost some of their vertices?''

The structural properties established in Section \ref{sec.blossom} render the answer to this question surprisingly simple. Assume that vertex 
$v \in p$ is in blossom $\CB$ and that vertex $u$ is in the left-over part of this blossom. By Corollary \ref{cor.bases}, $p$ must contain the 
iterated bases of $v$ and the next minimum length augmenting path, using $u$, must also contain the iterated bases of $u$. However, the base, 
$b$, of blossom $\CB$ is an iterated base of both the vertices! Therefore, $b$ will be removed when $p$ is removed and $u$ cannot be on any 
minimum length augmenting path that is disjoint from $p$. Indeed, it is easy to see that the process of iteratively removing vertices that have no
predecessors left will end up removing all of $\CB$.

\section{Proof of correctness and running time}
\label{sec.proof}

\begin{lemma}
\label{lem.un-bridge}
Let $(u, v)$ be an unmatched edge that is a bridge with $\t(u, v) \leq l_m$.
Then $\t(u) \leq \t(u, v)$.
\end{lemma}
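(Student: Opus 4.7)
The plan is to argue by contradiction. Since $(u,v)$ is unmatched we have $\t(u,v) = \e(u) + \e(v) + 1$, so the target inequality $\t(u) \leq \t(u,v)$ is equivalent to $\o(u) \leq \e(v) + 1$; I assume $\o(u) > \e(v) + 1$ and aim for a contradiction. The first move is to fix any $\e(v)$ path $p$ from an unmatched vertex $f$ and examine where $u$ sits on $p$. If $u \notin p$ then $p \circ (v,u)$ is a simple odd alternating path of length $\e(v)+1$ to $u$ (the last edge of $p$ is matched, so appending the unmatched edge $(v,u)$ alternates), contradicting $\o(u) > \e(v)+1$. If $u \in p$ and is odd w.r.t.\ $p$, the prefix $p[f \to u]$ already gives $\o(u) \leq \e(v)-1$. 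So $u$ must be even w.r.t.\ $p$, which forces $\e(u) \leq |p[f \to u]| \leq \e(v)-2$, and extending the prefix by $(u,v)$ yields $\o(v) \leq \e(v)-1$, so $v$ is inner.

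Second, I would push the bridge hypothesis from $u$'s side to obtain $\o(v) \leq \e(u) - 1$. Take any $\e(u)$ path $p_u$: if $v \in p_u$ it cannot be even w.r.t.\ $p_u$ (else $\e(v) \leq \e(u)-2 \leq \e(v)-4$), so $v$ is odd w.r.t.\ $p_u$ and $\o(v) \leq \e(u)-1$; if $v \notin p_u$, then $p_u \circ (u,v)$ is an odd alternating path to $v$ of length $\e(u)+1$, and since $v$ is inner, $\o(v) = \e(u)+1$ would make this path a $\mn(v)$ path with last edge $(u,v)$, forcing $(u,v)$ to be a prop---contradicting the bridge hypothesis---so parity again gives $\o(v) \leq \e(u)-1$. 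Consequently $\t(v) \leq \t(u,v) - 2$, so $\t(v) < \t(u,v) \leq l_m$ (hence $\base(v)$ exists) and $\t(u) > \t(u,v) > \t(v)$.

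The final step uses the blossom structure. Because $\t(u) > \t(v)$ and $u \in p$, Theorem \ref{thm.honest} places $u$ BFS-honestly on $p$ at position $\mn(u) = \e(u)$ (so $u$ is outer), and Corollary \ref{cor.bases} places $b = \base(v)$ on $p$ at position $\e(b)$. Let $t := \t(v)$. A three-way case split on the position of $u$ relative to $b$ will finish the proof. If $u = b$, Lemma \ref{lem.concat} writes every $\o(v)$ path as an $\e(b)$ path followed by an $\o(b;v)$ path, and the unmatched edge $(b,v) = (u,v)$ is itself an $\o(b;v)$ path of length $1$, so $\o(v) = \e(b) + 1 = \e(u)+1$, contradicting $\o(v) \leq \e(u)-1$. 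If $u \neq b$ and lies on the suffix $p[b \to v]$, which is an $\e(b;v)$ path, then claim (3) of Theorem \ref{thm.path} puts $u \in \CB_{b,t}$, yielding $\t(u) \leq t = \t(v)$, contrary to $\t(u) > \t(v)$. Finally, if $u$ lies strictly on the prefix $p[f \to b]$, then $\e(u) \leq \e(b) - 2$; the path $p[f \to u] \circ (u,v)$ is a simple odd alternating path to $v$ of length $\e(u)+1$ giving $\o(v) \leq \e(u)+1$, whereas Lemma \ref{lem.concat} forces $\o(v) = \e(b) + \o(b;v) \geq \e(b)+1 \geq \e(u)+3$, a contradiction.

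The main obstacle is the final case split: each subcase relies on a different structural fact, and one must check that the hypotheses for Theorem \ref{thm.honest}, Lemma \ref{lem.concat}, and Theorem \ref{thm.path} are all in force. The chain $\t(v) < \t(u,v) \leq l_m$ derived in the second paragraph is precisely what unlocks the base machinery, and the bridge hypothesis itself is invoked in exactly one spot---to veto the potential prop extension $p_u \circ (u,v)$---so the argument is not circular.
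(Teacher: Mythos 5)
Your proof is correct, but after the opening moves it follows a genuinely different route from the paper's. Both arguments begin identically: fix an $\e(v)$ path $p$, dispose of the cases $u \notin p$ and $u$ odd w.r.t.\ $p$, and exploit the fact that $(u,v)$ is not a prop once $u$ is even w.r.t.\ $p$. From there the paper stays entirely elementary: it takes an $\o(v)$ path $q$, which the non-prop condition forces to be shorter than $|p[f \ \mbox{to} \ u]|+1$, and splices pieces of $p$ and $q$ to exhibit short even and odd alternating paths to $u$ directly, the delicate case being when $q$ meets $p(u \ \mbox{to} \ v]$. You instead convert the non-prop condition into the quantitative bound $\o(v) \le \e(u)-1$ (via the veto on $p_u \circ (u,v)$ being a $\mn(v)$ path), deduce $\t(v) < \t(u,v) \le l_m$ and $\t(u) > \t(v)$, and then bring in the base/blossom machinery---Theorem \ref{thm.honest}, Theorem \ref{thm.base} with Corollary \ref{cor.bases}, Lemma \ref{lem.concat}, and claim (3) of Theorem \ref{thm.path}---to pin down where $u$ can sit on $p$ relative to $b = \base(v)$ and reach a contradiction in each of the three positions; note that the cases $u = b$ and $u$ below $b$ genuinely need the strengthened bound $\o(v) \le \e(u)-1$, so your prop-veto step is essential rather than cosmetic. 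This is legitimate and non-circular: the structural results you cite are proved in Sections \ref{sec.tenacity}--\ref{sec.blossom}, before Section \ref{sec.proof}, and none of them relies on Lemma \ref{lem.un-bridge} (that, rather than where the bridge hypothesis is invoked, is the relevant circularity check). The trade-off: the paper's version is self-contained path surgery needing nothing beyond the definitions, whereas yours buys a cleaner endgame---avoiding the paper's terse ``appropriate parts of $q$ and $p$'' splicing---at the cost of wheeling in heavy structure for a lemma the paper treats as elementary.
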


\begin{proof}
Let $p$ be an $\e(v)$ path, starting at unmatched vertex $f$, say. If $u$ does not occur on $p$ then $p \circ (v, u)$ is an odd alternating path to $u$.
Therefore $\o(u) \leq \e(v) = 1$ and the lemma follows. If $u$ does occurs on $p$ and $u$ is odd w.r.t. $p$ then
$\o(u) < \e(v)$ and again we are done.

Next suppose that $u$ occurs on $p$ and $u$ is even w.r.t. $p$. Since $(u, v)$ is not a prop, $p[f \ \mbox{to} \ u] \circ (u, v)$ is
not an $\o(v)$ path, and an $o(v)$ path must be shorter. Let $q$ be such a path, starting at unmatched vertex $f'$, say.
If $u$ does not lie on $q$ then $q \circ p[v \ \mbox{to} \ u]$ is an odd alternating path to $u$. Furthermore, $p[f \ \mbox{to} \ u]$
is an even alternating path to $u$. Therefore $\t(u) \leq \t(u, v)$. 

Finally consider the case that $u$ lies on $q$. If $u$ is odd w.r.t. $q$ then again $\o(u)$ is small enough and we are done.
If $u$ is even w.r.t. $q$ then $|p[f \ \mbox{to} \ u]| > |q| > |q[f' \ \mbox{to} \ u]|$. Now $q$ must intersect $p(u \ \mbox{to} \ v]$ and the 
first intersection must be at a vertex that is even w.r.t. $p$, otherwise there is an even alternating path to $v$ that is shorter than $\e(v)$.
Appropriate parts of $q$ and $p$ now yield an odd alternating path to $u$ of length less than $\e(v)$ and the lemma follows. 
\end{proof}

\begin{lemma}
\label{lem.uses}
Let $(u, v)$ be a bridge such that $\t(v) = \t(u, v) \leq l_m$.
Then the expression for $\t(u, v)$ uses $\mn(v)$.
\end{lemma}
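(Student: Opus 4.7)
The plan is to split cases on whether $(u,v)$ is matched or unmatched, and in each case to argue by contradiction: if the expression for $\t(u,v)$ used $\mx(v)$ rather than $\mn(v)$, then $(u,v)$ would actually be a prop of $v$, contradicting the bridge hypothesis.

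First consider the unmatched case. Here $\t(u,v) = \e(u) + \e(v) + 1$, so the hypothesis $\t(v) = \t(u,v)$ rearranges to $\o(v) = \e(u) + 1$. Suppose for contradiction that the expression uses $\mx(v) = \e(v)$; then $v$ is inner and $\o(v) < \e(v)$. Take any $\e(u)$ path $p$ from an unmatched vertex $f$ to $u$. I would first verify that $v \notin p$: if $v$ were even on $p$ then $\e(v) \leq |p[f \text{ to } v]| < \e(u) = \o(v) - 1 < \e(v)$, and if $v$ were odd on $p$ then $\o(v) \leq |p[f \text{ to } v]| < \o(v) - 1$, both contradictions. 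Once $v \notin p$ is in hand, $p \circ (u,v)$ is a bona fide alternating path of odd length $\e(u) + 1 = \o(v) = \mn(v)$, so $(u,v)$ is a prop of $v$, contradicting the bridge assumption.

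The matched case is completely symmetric. Now $\t(u,v) = \o(u) + \o(v) + 1$, and the hypothesis gives $\e(v) = \o(u) + 1$. Assuming $\o(v) = \mx(v)$ forces $v$ to be outer with $\e(v) < \o(v)$. Taking any $\o(u)$ path $q$ from an unmatched vertex, the analogous length comparison (any occurrence of $v$ on $q$ would force either $\e(v) < \e(v)$ or $\o(v) < \o(v) - 1$) shows $v \notin q$. Then $q \circ (u,v)$, with $(u,v)$ the matched edge at $v$, is an alternating path of even length $\o(u) + 1 = \e(v) = \mn(v)$, again making $(u,v)$ a prop of $v$.

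The only potentially delicate point is the verification that $v$ does not lie on the chosen $\e(u)$ or $\o(u)$ path, since the concatenation must be a simple alternating path to witness a prop; this is handled routinely by the length identity imposed by $\t(v) = \t(u,v)$ together with the working hypothesis $\mn(v) < \mx(v)$. No appeal to the blossom machinery of the later sections is needed.
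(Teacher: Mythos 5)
Your proof is correct and takes essentially the same route as the paper's: assume the tenacity expression used $\mx(v)$ rather than $\mn(v)$ and conclude that $(u,v)$ would then be the last edge of a $\mn(v)$ path, i.e., a prop, contradicting the bridge hypothesis. The differences are minor — you make explicit the check that $v$ avoids the chosen path to $u$ (which the paper leaves implicit), while in the matched case the paper argues more directly that a matched bridge forces both endpoints to be inner, since any minlevel path to an outer vertex must end in its matched edge.
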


\begin{proof}
First assume that $(u, v)$ is matched. Since it is not a prop, $u$ and $v$ must both be inner vertices with $\o(u) = \o(v) = i = \mn(v)$, and
the expression for $\t(u, v)$ uses $\o(v)$.

Next assume that $(u, v)$ is unmatched.
Therefore the equality $\t(v) = \t(u, v)$ implies
\[ \o(v) + \e(v) = \e(u) + \e(v) + 1 \ \ \ \implies \ \ \ \o(u) = e(v) + 1 .\] 
Now, if $\o(v) = \mn(v)$ then $u$ must be a predecessor of $v$, contradicting the hypothesis that $(u, v)$ is a bridge.
Therefore, $\o(v) = \mx(v)$. On the other hand, the expression for $\t(u, v)$ uses $\e(v)$ and hence it uses $\mn(v)$.
\end{proof}

\begin{lemma}
\label{lem.know}
For each bridge $(u, v)$ having tenacity $2i+1 \leq l_m$, the algorithm will determine that $(u, v)$ is a bridge and that
its tenacity is $2i+1$ by the end of execution of procedure MIN at search level $i$.
\end{lemma}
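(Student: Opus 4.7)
The plan is to split by whether $(u,v)$ is matched or unmatched and, in each case, verify that (i) MIN scans $(u,v)$ by some search level $\le i$, (ii) at that scan the test ``$\mn$ of the other endpoint $\ge$ (current level)$+1$'' fails so $(u,v)$ is declared a bridge, and (iii) both levels appearing in the tenacity expression for $(u,v)$ are available by the end of MIN at level $i$, so $\t(u,v)=2i+1$ is computed and $(u,v)$ is inserted into $B(2i+1)$.

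For a matched bridge $(u,v)$, neither endpoint can be outer: at an outer endpoint $w$, the unique matched edge at $w$ is the last edge of the $\mn(w)=\e(w)$ path and is therefore a prop. So both $u$ and $v$ are inner. The matched-edge identities $\e(u)=\o(v)+1$ and $\e(v)=\o(u)+1$, combined with $\o(u)<\e(u)$ and $\o(v)<\e(v)$, force $\o(u)=\o(v)$; since $\t(u,v)=2i+1$, this gives $\mn(u)=\mn(v)=i$. Both minlevels are assigned by the end of MIN at level $i-1$, so at MIN level $i$ the processing of $u$ along its matched edge scans $(u,v)$, the test $\mn(v)\ge i+1$ fails, and $(u,v)$ is declared a bridge of tenacity $2i+1$.

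For an unmatched bridge $(u,v)$, write $\e(u)=j_1$ and $\e(v)=j_2$, so $j_1+j_2=2i$; assume w.l.o.g.\ $j_1\le i\le j_2$. To show MIN declares $(u,v)$ a bridge at search level $j_1\le i$, I would verify $\mn(v)\le j_1$ by splitting on $v$: if $v$ is outer, then $j_2=\e(v)<\o(v)\le j_1+1$ (the second inequality is the edge constraint) forces $j_2\le j_1$, so $j_1=j_2=i$ and $\mn(v)=i$; if $v$ is inner, bridgeness rules out $\o(v)=j_1+1$, and the edge bound $\o(v)\le j_1+1$ combined with $\o(v)$ odd and $j_1$ even sharpens to $\o(v)\le j_1-1$, so $\mn(v)\le j_1-1$. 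In either case $\mn(v)$ has been set by the start of MIN at level $j_1$, so the scan of $(u,v)$ from $u$ at MIN level $j_1$ enters the bridge branch.

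For the tenacity, Lemma~\ref{lem.un-bridge} gives $\t(u),\t(v)\le 2i+1$, so I split on each endpoint $w\in\{u,v\}$. If $\t(w)<2i+1$, then $\t(w)\le 2i-1$, so when $w$ is inner MAX at level $\le i-1$ assigns $\mx(w)=\e(w)$, and when $w$ is outer the inequality $\e(w)<\o(w)$ combined with $\t(w)\le 2i-1$ gives $\mn(w)=\e(w)\le i-1$, assigned by an earlier MIN. If $\t(w)=2i+1$, Lemma~\ref{lem.uses} forces $\e(w)=\mn(w)$, so $w$ is outer; combined with $\mn(w)<\mx(w)$ and $\mn(w)+\mx(w)=2i+1$ this gives $\mn(w)\le i$, again known by the end of MIN at level $i$. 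Thus both $\e(u)$ and $\e(v)$ are available by the end of MIN at level $i$, and the algorithm computes $\t(u,v)=j_1+j_2+1=2i+1$ and adds $(u,v)$ to $B(2i+1)$ by that time. The most delicate step is the $v$-outer sub-case of the unmatched analysis: a priori $\mn(v)=j_2$ could be much larger than $i$, and it is only the combination of outerness ($\e(v)<\o(v)$) with the edge-induced bound $\o(v)\le j_1+1$ that collapses this case to $j_1=j_2=i$ and makes the argument close.
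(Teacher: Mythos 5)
Your proof is correct and follows essentially the same route as the paper's: split into matched and unmatched bridges, show in the matched case that both endpoints are inner with $\o(u)=\o(v)=i$ so the bridge and its tenacity are recognized at level $i$, and in the unmatched case show that the far endpoint's minlevel is already set when the edge is first scanned, then invoke Lemmas \ref{lem.un-bridge} and \ref{lem.uses} to conclude that both evenlevels entering the tenacity expression are known by the end of MIN at search level $i$. The one place that deserves explicit justification is your ``edge constraint'' $\o(v)\le \e(u)+1$ (and the companion claim that bridgeness excludes $\o(v)=j_1+1$): this inequality is not valid for an arbitrary unmatched edge (it can fail for the endpoint with the smaller evenlevel, e.g.\ when that endpoint is the base of a blossom), and it holds here only because $\e(u)\le\e(v)$ forces any occurrence of $v$ on an $\e(u)$ path to be odd with respect to it, so that one can either truncate the path at $v$ or, if $v$ is absent, append $(u,v)$ -- the appended argument also being exactly what rules out $\o(v)=j_1+1$ for a bridge; the paper sidesteps this case analysis by observing directly that if $v$ had no minlevel of at most $\e(u)$ when the edge is scanned, the algorithm (and the graph-theoretic definition) would make $(u,v)$ a prop.
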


\begin{proof}
First assume that $(u, v)$ is matched. As argued in Lemma \ref{lem.uses}, $\o(u) = \o(v) = i$.
Therefore, during search level $i$, MIN will determine that $(u, v)$ is a bridge and that its tenacity is $2i+1$.

Next assume that $(u, v)$ is unmatched and that $\e(u) \leq \e(v)$ and that $(u, v)$ is first scanned from $u$. 
Clearly $\e(u) \leq i$. Since $(u, v)$ is not a prop, $v$ must already have a minlevel and $\mn(v) \leq i$. 
By Lemma \ref{lem.un-bridge}, $\t(v) \leq \t(u, v)$. Now if $\t(v) < \t(u, v)$, both levels of $v$ are already known 
and hence $\t(u, v)$ can be determined. If $\t(v) = \t(u, v)$ then by Lemma \ref{lem.uses} the expression for $\t(u, v)$ uses $\mn(v)$
and hence $\t(u, v)$ can be determined. Further notice that since $\t(u, v) = 2i+1$, $\mn(v)$ must be $i$.
\end{proof}

\begin{theorem}
\label{thm.levels}
For each vertex $v$ such that $\t(v) \leq l_m$, Algorithm \ref{alg} assigns $\mn(v)$ and $\mx(v)$ correctly. 
\end{theorem}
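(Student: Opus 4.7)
The plan is to prove this by induction on the search level $i$, simultaneously tracking MIN and MAX. Specifically, I would show that at the end of search level $i$, every vertex $v$ with true $\mn(v) \leq i+1$ has been assigned its correct minlevel, and every vertex $v$ with true $\t(v) \leq 2i+1$ has been assigned its correct maxlevel. Since $\t(v) \leq l_m$ implies $\mn(v) \leq l_m/2$ and $\t(v) \leq l_m = 2j_m+1$, termination of the phase at search level $j_m$ covers every vertex in the theorem's hypothesis.

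For the MIN half, the argument is the standard alternating-BFS correctness. Suppose the true $\mn(v) = i+1$ and consider a minimum-length alternating path to $v$, say with last edge $(u,v)$. The vertex $u$ has a level of appropriate parity equal to $i$ (even if $i$ is even, odd if $i$ is odd), and that level equals its true minlevel; otherwise $v$ would inherit a smaller minlevel. By the inductive hypothesis applied at search level $i-1$, $u$ has already been assigned this value. At search level $i$, MIN scans the appropriate-parity edges out of $u$, which includes $(u,v)$; the predicate ``$\mn(v) \geq i+1$'' is true at that moment, so MIN writes $\mn(v) \la i+1$ and declares $u$ a predecessor. Any other edge reaching $v$ from an $i$-level vertex is either also a prop (adding another predecessor) or a bridge (because $\mn(v)$ is already $\leq i+1$), which is correct.

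For the MAX half, the key input is Lemma \ref{lem.know}: by the end of MIN at search level $i$, every bridge of tenacity $2i+1 \leq l_m$ has been identified and its tenacity correctly computed, so the list $B(2i+1)$ passed to MAX is complete. Theorem \ref{thm.bridge} then guarantees that every vertex $v$ with $\t(v) = 2i+1$ lies on some $\mx(v)$ path whose unique tenacity-$(2i+1)$ bridge is in $B(2i+1)$, so $v$ is in the support of that bridge. Conversely, by definition, every vertex in the support of such a bridge has tenacity $2i+1$. Moreover $\mn(v) \leq i$ (since $\t(v) = \mn(v) + \mx(v)$ and $\mx(v) \geq \mn(v)$), so by the induction hypothesis $\mn(v)$ is already known and correct when MAX runs. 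Hence the assignment $\mx(v) \la 2i+1 - \mn(v)$ yields exactly the true maxlevel.

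The main obstacle is justifying that the DDFSs invoked by MAX actually enumerate the supports of the bridges in $B(2i+1)$, rather than something strictly larger or smaller. Theorem \ref{thm.DDFS} guarantees the generic behavior of DDFS on the layered abstraction $H$, but one must verify that the graph $H$ built at search level $i$ from predecessor lists and $\bds$-pointers (as described in Section \ref{sec.DDFSG}) faithfully models minimum-length alternating paths in $G$: every vertex reached by the two DFSs corresponds to a genuine tenacity-$(2i+1)$ vertex whose $\mx(v)$ path uses the current bridge, and, conversely, every such vertex is reachable in $H$ via the predecessor/$\bds$ structure. This reduction relies on Lemma \ref{lem.all} and on the nested-blossom traversal properties from Theorem \ref{thm.path} and Corollary \ref{cor.bases}, which together ensure that jumping via $\bds$ is equivalent to threading a $\mx$ path through iterated bases. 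Once this equivalence is established, the union of the petals produced by all bridges in $B(2i+1)$ is precisely the set of vertices of tenacity $2i+1$, and the MAX step's assignment is justified.
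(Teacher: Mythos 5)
Your skeleton --- a joint strong induction over search levels, with the MIN half argued from the last edge $(u,v)$ of a $\mn(v)$ path and the MAX half from Lemma \ref{lem.know} plus Theorem \ref{thm.bridge} and the supports found by DDFS --- is the same as the paper's proof. However, your MIN half contains a concrete error: you claim that the appropriate-parity level $i$ of the predecessor $u$ ``equals its true minlevel; otherwise $v$ would inherit a smaller minlevel.'' That inference is invalid, because the two levels of $u$ have opposite parities: if the level of $u$ used by the path is $\mx(u)$, then the smaller level $\mn(u)$ has the wrong parity and yields no shorter path of parity $i+1$ to $v$. And this case genuinely occurs --- a vertex can receive its minlevel through the \emph{maxlevel} of a neighbor, e.g.\ a vertex whose only unmatched edge goes to an inner vertex $u$ of a blossom, so that its odd minlevel equals $\e(u)+1 = \mx(u)+1$. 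The paper is explicit on precisely this point (``$u$ must be assigned this level, regardless of whether it is the minlevel or maxlevel of $u$''), and it is the reason the induction must be joint: correctness of MIN at search level $i$ depends on correctness of MAX at earlier levels, since $\mx(u)=i$ forces $\t(u)\leq 2i-1$, so $u$ received that level from a DDFS in an earlier search level. Your stated hypothesis does contain what is needed, so the repair is local, but as written the claim is false and it hides the very dependence that your simultaneous induction exists to supply.

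Two further points. First, the induction hypothesis should be two-sided, as in the paper (``exactly the set of vertices having this minlevel \ldots exactly the set of vertices having tenacity $2i+1$''): you need not only that all true levels have been assigned so far, but that no spurious assignments have been made --- that is what makes the predicate ``$\mn(v)\geq i+1$'' true when $(u,v)$ is scanned, and what guarantees $\mx(v)$ was not set at an earlier search level (the synchronization issue illustrated by Figures \ref{fig.early1} and \ref{fig.early2}). Second, the ``main obstacle'' you flag --- that DDFS run on the structure built from predecessor lists and $\bds$ pointers returns exactly the support of the bridge (minus vertices already in petals) --- is a real issue, but the paper does not close it inside this proof either; it is delegated to Theorem \ref{thm.DDFS}, Section \ref{sec.DDFSG} and Lemma \ref{lem.all}, and the proof of Theorem \ref{thm.levels} simply asserts that DDFS will reach $v$ if $\mx(v)$ is not yet set. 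So you have correctly located where the remaining weight lies, but your proposal leaves that step open rather than closing it.
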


\begin{proof}
We will show by strong induction on $i$ that at search level $i$, Algorithm \ref{alg} assigns a minlevel of $i+1$ to exactly the set of 
vertices having this minlevel and it assigns appropriate maxlevels to exactly the set of vertices having tenacity $2i+1$.
The basis, i.e, $i = 0$, is obvious.

To prove the induction step, assume that the hypothesis is true for all search levels less than $i$.
Let $\mn(v) = i+1$, let $p$ be a $\mn(v)$ path, starting at unmatched vertex $f$, and let $(u, v)$ be the last edge on $p$. It is easy to 
see that $u$ must be BFS-honest w.r.t. $p$; if not, then $v$ must occur on the shorter path to $u$, which contradicts the assumption 
that $\mn(v) = i+1$. Therefore, if $|p[f \ \mbox{to} \ u]|$ is odd (even), $\o(u) = i$ ($\e(u) = i$). By the induction hypothesis,
$u$ must be assigned this level, regardless of whether it is the minlevel or maxlevel of $u$. Therefore, on searching along an appropriate 
parity edge incident at $u$, MIN will find $v$. By the induction hypothesis, at the start of search level $i$, the minlevel of $v$ is not set.
Therefore, when edge $(u, v)$ is examined, it is either still not set or it is set to $i+1$ 
through some other edge scanned in the current search level (the latter case happens only if $\o(v) = i+1$). 
Hence, while searching along edge $(u, v)$, $v$ will be assigned its correct minlevel, $u$ will be declared a predecessor of $v$ and 
$(u, v)$ will be declared a prop.

Next, assume $t(v) = 2i+1$ and let $p$ be a $\mx(v)$ path. By Theorem \ref{thm.bridge} there exists a unique bridge of tenacity $t$
on $p$, say it is $(a, b)$. By Lemma \ref{lem.know}, by the end of MIN in search level $i$, $(a, b)$ must be inserted in the list $B(2i+1)$.
Therefore, MAX will call DDFS with this bridge to find its support. By the induction hypothesis, $\mx(v)$ was not set in any of the 
previous search levels\footnote{The importance of this subtle point, which is related to the idea of ``precise synchronization of
events'' is explained below with the help of Figures \ref{fig.early1} and \ref{fig.early2}.
}.
However, it may be set in the current search level; if so, $v$ already belongs to a petal. If it is not set, DDFS will 
reach $v$ and assign it its maxlevel.
\end{proof}

\begin{figure}[ht]
\begin{minipage}[b]{0.5\linewidth}
\centering
\includegraphics[width=\textwidth]{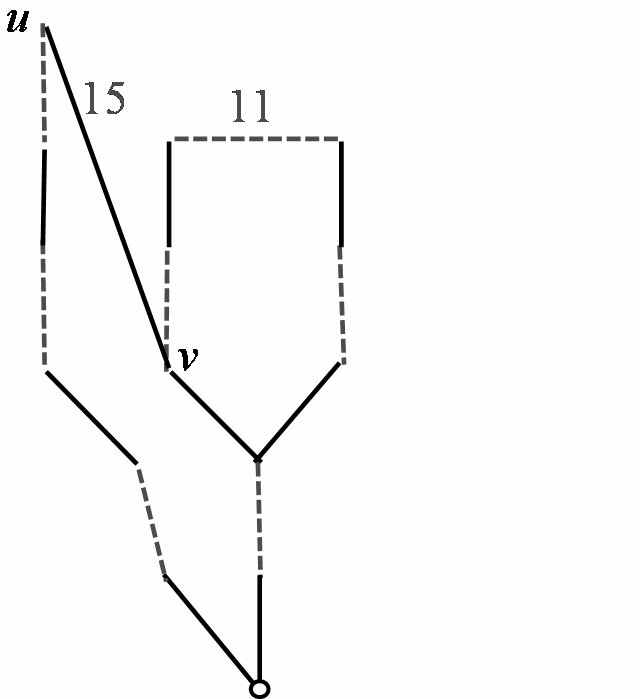}
\caption{Can DDFS be performed on bridge $(u, v)$ at search level 6?}
\label{fig.early1}
\end{minipage}
\hspace{0.5cm}
\begin{minipage}[b]{0.5\linewidth}
\centering
\includegraphics[width=\textwidth]{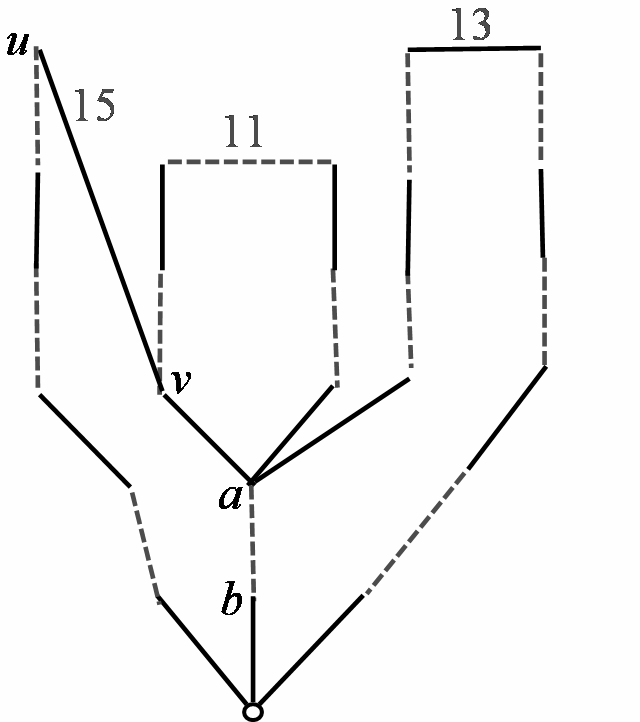}
\caption{If so, vertices $a$ and $b$ will get wrong tenacities.}
\label{fig.early2}
\end{minipage}
\end{figure}

In Figure \ref{fig.early1}, the algorithm determines that $(u, v)$ is a bridge of tenacity 15 at search level 6.
However, according to Algorithm \ref{alg}, DDFS has to be performed on $(u, v)$ at search level 7. The question arises,
``Why wait till search level 7; why not perform DDFS on $(u, v)$ when procedure MAX is run at search level 6?'' To clarify this, let us change 
the algorithm so it runs DDFS on an edge as soon as its tenacity and its status as a bridge have been determined. Assume further
that among the various bridges ready for processing, ties are broken arbitrarily\footnote{By making the example given in Figure 
\ref{fig.early2} slightly bigger, one can easily ensure that there are no such ties.}.

Now consider the enhanced graph of Figure \ref{fig.early2}, in which vertices $a$ and $b$ are clearly in the support of the bridge of 
tenacity 13 and hence have tenacity 13. Assume that when MAX is run at search level 6 the bridge of tenacity 15 is processed first. 
Since the tenacities of vertices $a$ and $b$ are not set yet, DDFS will visit them and assign them a of tenacity 15, which would be incorrect. 
Observe that the correctness of MAX crucially depends on assigning tenacities to each edge of tenacity less than $2i+1$ before
processing bridges of tenacity $2i+1$, i.e., the precise manner in which events are synchronized in Algorithm \ref{alg}.

\begin{theorem}
\label{thm.time}
The MV algorithm finds a maximum matching in general graphs in time $O(m \sqrt{n})$ on the RAM model and 
$O(m \sqrt{n} \cdot \alpha(m, n))$ on the pointer model, where $\alpha$ is the inverse Ackerman function.
\end{theorem}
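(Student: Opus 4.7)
The plan has three ingredients: a bound on the number of phases, a bound on the per-phase work outside of data-structure operations, and a bound on the cost of the data-structure operations themselves. First, I would invoke the standard argument of Karp \cite{Karp} and Karzanov \cite{Karzanov} showing that $O(\sqrt{n})$ phases suffice: after the $k$-th phase the length of a shortest augmenting path is at least $k$, while the number of disjoint augmenting paths remaining in the symmetric difference with a maximum matching is at most $n/k$; setting $k = \sqrt{n}$ gives the desired bound.

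Next, I would argue that a single phase uses $O(m)$ time outside of the $\bds$ data structure. The MIN procedure, summed over all search levels, scans each edge from each endpoint at most once, contributing $O(m)$. For MAX, Theorem \ref{thm.DDFS} bounds a single DDFS by the edges it examines, so the claim amounts to saying that, across all DDFSs in a single phase, each edge of $G$ is examined $O(1)$ times. This follows from the $\bds$-jumping rule described in Section \ref{sec.DDFSG}: once a vertex is absorbed into a petal, a later DDFS that searches along an edge into that vertex leaps directly to the bud of its petal, so edges inside an already-formed petal are never re-entered. Finally, augmenting-path reconstruction and the subsequent cleanup of vertices without predecessors (Section \ref{sec.maximal}) visit each vertex and edge $O(1)$ times, with the key observation, supplied by Corollary \ref{cor.bases}, that if a deleted path $p$ passes through the base $b$ of a blossom $\CB$, then every remaining vertex of $\CB$ has $b$ as an iterated base and must therefore be removed by the no-predecessor sweep.

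The main obstacle, and the reason two distinct bounds appear, is the amortized cost of the $\bds$ operations. Each DDFS edge traversal queries $\bds(v)$, and each newly created petal-node performs a union that merges the vertices of the new petal into a single group pointing to that petal-node. I would model this as an intermixed sequence of find and union operations whose underlying union structure is an incremental tree, since each union attaches a strictly younger group to an older bud. On the pointer model, invoking Tarjan's disjoint-set union data structure \cite{Tarjan} gives $O(\alpha(m, n))$ amortized per operation and hence $O(m \cdot \alpha(m, n))$ per phase. On the RAM model, the specialized linear-time algorithm of Gabow and Tarjan \cite{GTarjan} for incremental-tree set union, whose applicability to the MV algorithm is spelled out in \cite{Gabow}, eliminates the $\alpha$ factor and yields $O(m)$ per phase. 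Multiplying either per-phase bound by the $O(\sqrt{n})$ phase count gives the two running times in the theorem. The hardest step is verifying that the union sequence really does have the incremental-tree form required by \cite{GTarjan}; I would do this by showing that each union corresponds to the creation of a petal-node and hence attaches the new petal's vertices beneath a previously existing bud, never violating the tree discipline.
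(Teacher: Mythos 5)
Your proposal follows the same route as the paper's proof: $O(\sqrt{n})$ phases via Karp/Karzanov, $O(m)$ work per phase for MIN, MAX/DDFS, path-finding and cleanup, and the remaining $\bds$ computation handled by Tarjan's set union on the pointer model and the Gabow--Tarjan special-case linear-time algorithm on the RAM model. The extra detail you supply (the BFS-honesty-style accounting for DDFS edge charges and the incremental-tree form of the unions) is consistent with, and merely elaborates on, the paper's argument.
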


\begin{proof}
Through arguments made so far, it should be clear that each of the procedures of MIN, MAX, finding augmenting paths, and removing vertices
after each augmentation will examine each edge a constant number of times in each phase. The only operation that remains is
that of computing $\bds$ during DDFS. This can either be implemented on the pointer model by using Tarjan's set union algorithm \cite{Tarjan}, 
which will take $O(m \cdot \alpha(m, n))$ time per phase, or on the RAM model by using Gabow and Tarjan's linear 
time algorithm for a special case of set union \cite{GTarjan}, which will take $O(m)$ time per phase. Since $O(\sqrt{n})$ phases
suffice for finding a maximum matching \cite{Karp,Karzanov}, the theorem follows.
\end{proof}

A question arising from Theorem \ref{thm.time} is whether there is a linear time implementation of $\bds$ on the pointer model.
\cite{MV} had claimed, without proof, that because of the special structure of blossoms, path compression itself suffices, together with a 
charging argument that assigns a constant cost to each edge. This claim could not be verified at the time of writing \cite{va.matching},
so it was left as an open problem in that paper. This problem has recently been resolved in the negative. \cite{PV} give
an infinite family of graphs on which path compression in a phase takes time $\Omega(m \alpha(m, n))$.

\section{Equivalence of definitions}
\label{sec.equivalence}

Below we establish equivalence of the two definitions of blossoms. 
Let us start by providing the definition of blossom as given in \cite{va.matching}; we will denote such a blossom of tenacity $t < l_m$ and 
base $b$ by $\bt^o$. Let $v$ be a vertex with $\t(v) \leq t$. 
We will say that an outer vertex $b$ is $\base_{> t} (v)$ if for some positive $k$, $\base^k (v) = b$, $\t(b) > t$,
and $\t(\base^{k-1} (v)) \leq t$. Then
\[ \bt^o = \{ v ~|~ \t(v) \leq t \ \mbox{and} \ \base_{>t} (v) = b \} .\]

\begin{proposition}
\label{prop}
The two definitions of blossom are equivalent, i.e., $\bt = \bt^o$.
\end{proposition}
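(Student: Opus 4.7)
I would prove $\bt = \bt^o$ by showing containment in both directions. Both containments are essentially one-step deductions from lemmas already established in Section \ref{sec.blossom}, so the work is really in lining up the hypotheses correctly.

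For the forward containment $\bt \subseteq \bt^o$, I would take $v \in \bt$ and apply Lemma \ref{lem.bb}, which yields an integer $k$ with $1 \leq k \leq N(\bt)$ such that $\base^k(v) = b$, and moreover such that $\base(v), \ldots, \base^{k-1}(v)$ all lie in $\bt$. Combining this with the observation stated right after Definition \ref{def.blossom} that every vertex of $\bt$ has tenacity at most $t$, I get $\t(\base^{k-1}(v)) \leq t$. Since also $\t(b) > t$ by the very definition of the blossom $\bt$, the conditions in the definition of $\base_{>t}(v)$ are met with this $k$, so $b = \base_{>t}(v)$, and $v \in \bt^o$.

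For the reverse containment $\bt^o \subseteq \bt$, I would take $v \in \bt^o$. Unfolding the definition, there is a positive $k$ with $\base^k(v) = b$, with $\t(b) > t$, and with $\t(\base^{k-1}(v)) \leq t$. The last inequality is precisely the hypothesis $t \geq \t(\base^{k-1}(v))$ of Lemma \ref{lem.vinb}, and $k \geq 1$ is also assumed there; so Lemma \ref{lem.vinb} applies and gives $v \in \bt$.

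The only loose end worth flagging is the edge case $v = b$, which both definitions must exclude for the equivalence to hold. Under Definition \ref{def.blossom}, $b$ is not in $S_{b,t}$ since $\t(b) > t$, and $b$ is not in any sub-blossom $\CB_{v, t-2}$ appearing in the union since all such sub-blossoms consist of vertices of tenacity at most $t-2 < \t(b)$; hence $b \notin \bt$. Under the alternate definition, membership in $\bt^o$ requires $\t(v) \leq t$, which $b$ fails. So the two definitions agree on this edge case as well. I expect no real obstacles: Lemmas \ref{lem.bb} and \ref{lem.vinb} were evidently written with exactly this equivalence in mind, and the proof is essentially a matter of matching notation.
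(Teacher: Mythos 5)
Your proof is correct, and it takes a mildly different (and more economical) route than the paper's. The paper proves Proposition \ref{prop} by a direct $3+3$ case analysis: for $\bt \subseteq \bt^o$ it splits $v$ according to whether $v \in S_{b,t}$, $v \in \CB_{b,t-2}$, or $v \in \CB_{u,t-2}$ with $u \in S_{b,t}$, and for the reverse inclusion according to whether $\t(v) = t$, or $\t(v) < t$ with $\t(\base^{k-1}(v)) < t$, or $\t(v) < t$ with $\t(\base^{k-1}(v)) = t$, in each case exhibiting the piece of Definition \ref{def.blossom} that $v$ falls into. You instead package each direction as a single invocation of Lemma \ref{lem.bb} (forward) and Lemma \ref{lem.vinb} (reverse). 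This is legitimate: the paper's middle cases implicitly lean on exactly the content of those lemmas (e.g.\ asserting that $v \in \CB_{b,t-2}$ forces $\base^k(v) = b$ with $\base^{k-1}(v) \in \CB_{b,t-2}$), so your argument is a streamlined repackaging rather than a new idea, but it is shorter and spares you from arguing that the cases are exhaustive. Two small points to tighten, neither a gap: in the forward direction, when $k = 1$ the bound $\t(\base^{k-1}(v)) \leq t$ comes from the convention $\base^0(v) = v$ together with $v \in \bt$, not from the ``furthermore'' clause of Lemma \ref{lem.bb}, whose list is empty in that case; and membership in $\bt^o$ also requires $\t(v) \leq t$, which you get for free from $v \in \bt$ but should state explicitly, since it is part of the definition of $\bt^o$ alongside $\base_{>t}(v) = b$.
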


\begin{proof}
Let $v \in \bt$. We will show that $v \in \bt^o$ by considering the following three cases. The set $S_{b, t}$ is defined in Definition \ref{def.blossom}.
\begin{enumerate}
\item
$v \in S_{b, t}$. In this case, $\t(v) = t$ and $\base(v) = b$, and therefore $\base_{>t} (v)) = b$. Hence $v \in \bt^o$.
\item
$v \in \CB_{b, t-2}$. In this case, $\t(v) < t$ and for some $k \geq 1$, $\base^k(v) = b$. Clearly, $\base^{k-1} (v) \in \CB_{b, t-2}$
and therefore $\base_{>t} (v) = b$. Hence $v \in \bt^o$.
\item
$v \in \CB_{u, t-2}$ and $u \in S_{b, t}$. In this case, $\t(v) < t$, $\t(u) = t$, $\base(u) = b$, and for some $k \geq 1$, $\base^k(v) = u$. Therefore,
$\base^{k+1} (v) = b$ and $\base_{>t} (v)) = b$. Hence $v \in \bt^o$.
\end{enumerate}

Next, let $v \in \bt^o$. Once again we will consider three cases to show that $v \in \bt$.
\begin{enumerate}
\item
$\t(v) = t$. In this case, $\base(v) = b$ and therefore $v \in S_{b, t}$. Hence $v \in \bt$.
\item
$\t(v) < t$, for some $k \geq 1$, $\base^k(v) = b$ and $\t(\base^{k-1} (v)) < t$. In this case, $v \in \CB_{b, t-2}$. Hence $v \in \bt$.
\item
$\t(v) < t$, for some $k \geq 1$, $\base^k(v) = b$ and $\t(\base^{k-1} (v)) = t$. Let $\base^{k-1} (v) = u$. Then,
$\t(u) = t$ and $\base(u) = b$. Therefore, $u \in S_{b, t}$ and $v \in \CB_{u, t-2}$. Hence $v \in \bt$.
\end{enumerate}
\end{proof}

\section{Epilogue}
\label{sec.discussion}

In summary, the main new task to be accomplished in non-bipartite graphs, beyond bipartite graphs, is to find maxlevels of vertices.
The process of finding minlevels of vertices is very much the same as the process of finding levels
of vertices from all unmatched vertices in one of the bipartitions in a bipartitie graph, i.e., an alternating BFS, as carried out by the
procedure MIN. And its proof of correctness is straightforward  -- the ``agent'' that is responsible for assigning vertex $v$ its minlevel is easily 
seen to be one of the neighbors of $v$ (see the proof of Theorem \ref{thm.levels}).

What is the ``agent'' that is responsible for assigning a vertex its maxlevel? The answer is far from straightforward and 
is established in Theorem \ref{thm.bridge}. In a sense, our motivation for finding all the structural facts given before this theorem was precisely to 
prove this theorem. Once these structural facts were found, it became clear that the MV algorithm was ``walking'' on precisely this structure.
And with this came the realization that this structure was also the key to a conceptual description of the algorithm.
We hope this viewpoint will help with a better understanding of the paper.

\section{Acknowledgments}
Bob Tarjan initiated the idea of giving talks on the MV algorithm (which unfortunately had remained a ``black box'' result for 33 years) and also
encouraged me to write a fresh manuscript. I had valuable discussions with Bob Tarjan and Laci Lovasz, and I wish to thank them both.
I also wish to thank Jugal Garg and Rakshit Trivedi, with special thanks to Ruta Mehta, for allowing me to sound out my ideas as they were evolving,
and to Thomas Hansen, Alexey Pechorin and Uri Zwick for carefully reading the paper and providing valuable comments.
\bibliography{kelly} 
\bibliographystyle{alpha}

\end{document}